\def\a{{\bf a}}
\def\b{{\bf b}}
\def\c{{\bf c}}
\def\x{{\bf x}}
\def\y{{\bf y}}
\def\T{{\tau}}
\def\VEC{{\rm vec}}
\renewcommand{\d}{\mathrm{d}}
\newcommand{\ket}[1]{| #1 \rangle}
\newcommand{\proj}[1]{| #1 \rangle \langle #1 |}
\DeclareMathOperator{\poly}{poly}
\newcommand{\be}{\begin{equation}}
\newcommand{\ee}{\end{equation}}
\def\ba{\begin{array}}
\def\ea{\end{array}}
\newcommand{\bes}{\begin{equation*}}
\newcommand{\ees}{\end{equation*}}
\newcommand{\bea}{\begin{eqnarray}}
\newcommand{\eea}{\end{eqnarray}}
\newcommand{\beas}{\begin{eqnarray*}}
\newcommand{\eeas}{\end{eqnarray*}}
\def\ba{\begin{array}}
\def\ea{\end{array}}
\newtheorem*{rep@theorem}{\rep@title}
\newcommand{\newreptheorem}[2]{%
\newenvironment{rep#1}[1]{%
 \def\rep@title{#2 \ref{##1} (restated)}%
 \begin{rep@theorem}}%
 {\end{rep@theorem}}}
 \newtheorem{prob}{Problem}
\newtheorem{thm}{Theorem}
\newtheorem*{thm*}{Theorem}
\newtheorem{cor}[thm]{Corollary}
\newtheorem{lem}[thm]{Lemma}
\newtheorem*{lem*}{Lemma}
\newtheorem{prop}[thm]{Proposition}
\newtheorem{defn}[thm]{Definition}
\newtheorem{rem}[thm]{Remark}
\newtheorem{fact}[thm]{Fact}
\newenvironment{breakablealgorithm}
  {% \begin{breakablealgorithm}
   \begin{center}
     \refstepcounter{algorithm}% New algorithm
     \hrule height1pt depth0pt \kern3pt% \@fs@pre for \@fs@ruled »­Ïß
     \renewcommand{\caption}[2][\relax]{% Make a new \caption
       {\raggedright\textbf{\ALG@name~\thealgorithm} ##2\par}%
       \ifx\relax##1\relax % #1 is \relax
         \addcontentsline{loa}{algorithm}{\protect\numberline{\thealgorithm}##2}%
       \else % #1 is not \relax
         \addcontentsline{loa}{algorithm}{\protect\numberline{\thealgorithm}##1}%
       \fi
       \kern3pt\hrule\kern3pt
     }
  }{% \end{breakablealgorithm}
     \kern3pt\hrule\relax%
   \end{center}
  }
\begin{document}

\title{Solving generalized eigenvalue problems by ordinary differential equations on a quantum computer}

\author{Changpeng Shao}
\email{changpeng.shao@bristol.ac.uk}
\affiliation{\footnotesize{School of Mathematics, University of Bristol, BS8 1UG, UK}}

\author{Jin-Peng Liu}
\email{jliu1219@terpmail.umd.edu}
\affiliation{\footnotesize{Joint Center for Quantum Information and Computer Science, University of Maryland, MD 20742, USA \\
Institute for Advanced Computer Studies, University of Maryland, MD 20742, USA \\
Department of Mathematics, University of Maryland, MD 20742, USA}}

\date{\today}

\begin{abstract}

Many eigenvalue problems arising in practice are often of the generalized form $A\x=\lambda B\x$. One particularly important case is symmetric, namely $A, B$ are Hermitian and $B$ is positive definite. The standard algorithm for solving this class of eigenvalue problems is to reduce them to Hermitian eigenvalue problems. For a quantum computer, quantum phase estimation is a useful technique to solve  Hermitian eigenvalue problems. In this work, we propose a new quantum algorithm for symmetric generalized eigenvalue problems using ordinary differential equations. The algorithm has lower complexity than the standard one based on quantum phase estimation. Moreover, it works for a wider case than symmetric: $B$ is invertible, $B^{-1}A$ is diagonalizable and all the eigenvalues are real.

\end{abstract}

\maketitle

{\bf Keywords.} Quantum algorithm, quantum phase estimation, generalized eigenvalue problem, ordinary differential equations.

% Try Algorithmica

\setlength{\parskip}{3pt}

%---------------------------------------
%---------------------------------------
%---------------------------------------

\section{Introduction}

Quantum phase estimation (QPE), which is one of the most useful techniques in designing quantum algorithms, is a quantum algorithm to estimate the eigenvalues of  unitary matrices \cite{kitaev1995quantum}.   It is frequently used as a subroutine in many quantum algorithms, e.g., quantum algorithm for finding eigenvalues of Hermitian matrices \cite{abrams1999quantum}, quantum counting \cite{brassard1998quantum}, quantum linear solver \cite{harrow2009quantum},  quantum singular value estimation \cite{kerenidis_et_al:LIPIcs:2017:8154}, and so on (e.g., \cite{abrams1999quantum,lloyd2014quantum,rebentrost2014quantum,ambainis2016efficient,parker2020quantum,brassard2002quantum}).  Taking the Hermitian eigenvalue problem as an example. For any $n\times n$ Hertimian matrix $H$ with eigenpairs $\{(\lambda_j,\ket{E_j}):j=1,\ldots,n\}$, the QPE returns $\sum\beta_j \ket{\tilde{\lambda}_j} \ket{E_j}$ when the input state is $\ket{\phi} = \sum\beta_j \ket{E_j}$, where $\tilde{\lambda}_j$ is an approximation of $\lambda_j$. We can think of the output state as a quantum version of eigenvalue decomposition of $H$. Performing measurements on the first register, we obtain the approximations of the eigenvalues. Actually, the output state can have more applications than estimating eigenvalues. For example, we can use it to create a state proportional to $H^{-1}\ket{\phi}$. This gives a quantum algorithm for solving linear systems \cite{harrow2009quantum}.
In QPE, the input state $\ket{\phi}$ can be any desired state, and there is no need to know the explicit decomposition $\ket{\phi} = \sum\beta_j \ket{E_j}$ (e.g., a famous example is the Shor's algorithm \cite{shor1999polynomial}, see the analysis in \cite{NielsenChuang}). This simple fact makes QPE more flexible to solve other problems.
% For instance, in the Shor's algorithm \cite{shor1999polynomial}, to find the order of $x$ in $\mathbb{Z}_N$, the input state is $\ket{1}$. This state formally equals $\frac{1}{\sqrt{r}} \sum_{s=0}^{r-1} \ket{u_s}$, where $\ket{u_s}$ are the eigenvectors of the unitary $U:y\mapsto xy \mod N$. See more details about this in \cite{NielsenChuang}. 
%Although early application of QPE to  eigenvalue problems needs the Hermitian matrices to be sparse, this constraint can now be removed in the framework of block-encoding \cite{chakraborty2018power}.

Hermitian eigenvalue problems  is an important class of eigenvalue problems.  However, many eigenvalue problems arising in applications (e.g., linear discriminant analysis \cite{james2013introduction}, canonical-correlation analysis \cite{hardoon2004canonical}, homogeneous Fredholm equation of the second type \cite{Baker}, constrained least squares problem \cite{tisseur2001quadratic}, robust eigenvector classifiers \cite{mangasarian2005multisurface}, etc.) are not of the standard form $A\x=\lambda \x$ but of the generalized form $A\x=\lambda B\x$. An important case is the symmetric generalized eigenvalue problem (GEP), i.e., $A,B$ are Hermitian and $B$ is positive definite. The standard algorithm for solving symmetric GEP is to reduce it to a standard Hermitian eigenvalue problem $(B^{-1/2} A B^{-1/2}) B^{1/2}\x = \lambda B^{1/2}\x$. Consequently, we can apply QPE to solve it  directly in a quantum computer. 
% In the framework of block-encoding, the performance of this algorithm is determined by the efficiency of  constructing the block-encoding of $B^{-1/2} A B^{-1/2}$. Given the block-encodings of $A,B$, it is indeed not so hard to construct the block-encoding of $B^{-1/2} A B^{-1/2}$. 
For example, Parker and Joseph recently studied symmetric GEPs through this idea for some types of $A,B$ arising from applications in physics such that $B^{-1/2} A B^{-1/2}$ is sparse and can be constructed efficiently \cite{parker2020quantum}. 

In this paper, we propose a new quantum algorithm for symmetric GEPs that demonstrates higher efficiency than the standard algorithm based on QPE. In addition, our algorithm works for a wider range of situations than symmetric and improves the previous quantum algorithm for the standard eigenvalue problem \cite{shao2019computing}.

\subsection{Problem setting}

In this work, we focus on the solving of the following problem, which can be viewed as a generalization of the Hermitian eigenvalue problem solved by QPE.

\begin{prob}[Quantum version of GEP (QGEP)]
\label{prob:problem 1}
Let $\epsilon\in(0,1)$ be the error tolerance.
Let $A,B$ be two $n\times n$ complex matrices with $B$ non-singular, $B^{-1}A$ diagonalizable. 
Denote the eigenpairs of the GEP $A\x=\lambda B\x$ as
$\{(\lambda_j, \ket{E_j}):j=1,\ldots,n\}$. Suppose all the eigenvalues are real.
Given access to copies of the state $|\phi\rangle$, which formally equals $\sum_j \beta_j|E_j\rangle$, the goal is to output a state proportional to $ \sum_j \beta_j|\tilde{\lambda}_j\rangle|E_j\rangle$, where $|\lambda_j - \tilde{\lambda}_j|\leq \epsilon$ for all $j$.
\end{prob}

In QGEP, we assumed that $B$ is non-singular and $B^{-1}A$ is diagonalizable.
In the following, we give the reasons for this.
\begin{enumerate}
 \item When $B$ is singular, the GEP may demonstrate some unusual properties than the standard eigenvalue problem. For instance, it may happen that the GEP has infinitely many eigenvalues or $\lambda = \infty$ is an eigenvalue (see Section \ref{section:Preliminaries} for more details). Also, when $B$ is singular, the GEP is well known to be ill-conditioned \cite{hochstenbach2019solving}. So it is hard to solve even for a quantum computer. Indeed, we will give a quantum lower bound analysis for this in Section \ref{section:ow bounds for the generalized eigenvalue problem}. The result indicates that to solve a singular GEP, a quantum computer needs at least to make $\Omega(\sqrt{n})$ queries to $B$. So we will not consider singular GEPs in this paper.
 \item The assumption of diagonalizability implies that any vector in theory can be uniquely decomposed into a linear combination of the eigenvectors. So the initial state $\ket{\phi}$ of QGEP can be any desired state. Equivalently, there is no restriction on the choice of the input state. Since the eigenvectors are unknown to us, we do not have the decomposition in advance. Thus to solve the QGEP in practice we need the quantum algorithm to be independent of this decomposition.
\end{enumerate}

There is a close connection between Problem \ref{prob:problem 1} and symmetric GEP.
On one hand, For symmetric GEP $A\x = \lambda B \x$, it is known that there is an invertible matrix $E$ such that $AE=BE\Lambda$ with $\Lambda$ real diagonal \cite{Golub} (see Proposition \ref{lemma:real generalized eigenvalues} for a proof). So symmetric GEP is a special case of Problem \ref{prob:problem 1}. 
On the other hand, Problem \ref{prob:problem 1} can be reduced to a symmetric GEP. More precisely, as shown in \cite{drazin1962criteria}, a diagonalizable matrix $M$ that only have real eigenvalues if and only if there are two Hermitian matrices $H_1, H_2$ with $H_1$ or $H_2$ positive definite such that $M=H_1H_2$. So in Problem \ref{prob:problem 1}, $B^{-1}A = H_1H_2$ for a Hermitian pair $(H_1,H_2)$. Suppose $H_1$ is positive definite, then $H_2 \x = \lambda H_1^{-1} \x$ defines a symmetric GEP. It has the same eigenpairs as the original GEP $A\x=\lambda B\x$. If $H_2$ is positive definite,  we can focus on $(H_2^{1/2} H_1 H_2^{1/2}) H_2^{1/2} \x =\lambda H_2^{1/2} \x$. Usually, $H_1,H_2$ are not easy to compute, so it may not be straightforward to solve Problem \ref{prob:problem 1} by the QPE in this way. However, our quantum algorithm given in this paper is independent of this kind of decomposition.

A more general case than symmetric such that all the eigenvalues are real is called definite. A Hermitian matrix pair $(A,B)$ is called definite if 
$
\min_{\x \in \mathbb{C}^n,\|\x\|_2=1} 
(\x^\dagger A \x)^2
+(\x^\dagger B \x)^2 > 0.
$
As shown in \cite{stewart1990matrix}, if $(A,B)$ is a definite pair, there is a $\theta\in[0,2\pi)$ such that $A\cos\theta - B \sin\theta, A\cos\theta + B \sin\theta$ are Hermitian and $A\cos\theta + B \sin\theta$ is positive definite. So all the eigenvalues of a definite pair are real. A potential problem of this reduction is the calculation of $\theta$. It is usually not easy to find such a $\theta$, so the standard algorithm based on QPE may not be efficient for all definite pairs through this idea. While our algorithm proposed in this paper works for all definite pairs that satisfy the assumption of Problem \ref{prob:problem 1}, and there is no need to find $\theta$.

We remark that in practice, we are  usually  more concerned about some maximal or minimal eigenvalues. So solving Problem \ref{prob:problem 1} may not be the most effective way to compute the extreme eigenvalues. For those particular eigenvalue problems, other algorithms (e.g., power method, Lanczos method, variational quantum eigensolver) should be more powerful. The advantage of Problem \ref{prob:problem 1} is that it contains the whole information of the eigenpairs in one quantum state so that measuring the first register returns all the eigenvalues up to some additive error. Also, the output state can be viewed as a quantum version of eigenvalue decomposition, which can have more applications than estimating eigenvalues.

\subsection{Our result}

In this paper, we will design a quantum algorithm for Problem \ref{prob:problem 1} in the  framework of block-encoding. Block-encoding is an effective framework to manipulate matrix operations in a quantum computer. There are also many efficient methods to construct block-encodings of  matrices. So quantum algorithms building on this framework should be general enough to solve problems related to  matrices. 

Let $M$ be a matrix and $U$ be a unitary, we say $U$ a block-encoding of $M$ if there is a  positive real number $\alpha \in \mathbb{R}^+$ such that
$$
U =
\left[ \begin{array}{ccccccc}
M/\alpha & \cdot \\
\cdot & \cdot \\
\end{array} \right].
$$
When having a block-encoding, many fundamental operations (e.g., matrix multiplication, matrix inversion, the singular value decomposition, etc) on $M$ become effective in a quantum computer \cite{chakraborty2018power, Gilyen-QSVT}. Below, we shall call $U$ an $\alpha$-block-encoding of $M$ for simplicity. The rigorous terminology is given in Section \ref{subsec:Block-encoding}.

Denote the matrix of the generalized eigenvectors as $E$, the condition numbers of $A,B,E$ as $\kappa_A,\kappa_B,\kappa_E$ respectively. Since the GEP is invariant under scaling, we can do a scaling to make sure that $\|B\|= \Theta(1)$. This assumption is just for the convenience of notation below, our algorithm is indeed independent of this assumption. For simplicity, we assume that the block-encodings of $A,B$ are constructed efficiently in time polylog in $n$. Then our main result can be stated as the follows:

\begin{thm}[Informal of Theorem \ref{thm:GEP by ODE-general}]
Given an $\alpha_A$-block-encoding of $A$ and
an $\alpha_B$-block-encoding of $B$, then Problem \ref{prob:problem 1} can be solved in time
\be \label{intro:complexity1}
\widetilde{O}\big(
\kappa_E(\alpha_A + \rho \alpha_B ) \kappa_B/\epsilon
\big),
\ee
where $\rho$ is an upper bound on the eigenvalues.
\end{thm}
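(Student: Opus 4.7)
The overall strategy is to emulate phase estimation of the non-Hermitian operator $M := B^{-1}A$ by realizing the propagator $e^{-iMt}$ through a quantum linear-ODE solver applied to $i\frac{d}{dt}\ket{\psi} = M\ket{\psi}$. Because $M$ is diagonalizable with real spectrum, $e^{-iMt} = E e^{-iDt} E^{-1}$ has unit-modulus eigenvalues and encodes the same eigenvector data as $M$, so a QPE-style circuit driven by the ODE solver can recover the $\lambda_j$ up to additive error $\epsilon$ once the total evolution time is chosen to be $T = \Theta(1/\epsilon)$. The role of the ODE solver is precisely to replace the unitary $U = e^{-iHt}$ used in standard QPE by a non-unitary propagator that one cannot directly implement as a quantum circuit.

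The first step is to set up a block-encoding suitable for the integrator. I would formulate the system as $B\frac{d}{dt}\ket{\psi} = -iA\ket{\psi}$ and discretize it implicitly (or, equivalently, expand the Dyson series of $e^{-iMt}$ about a time step $\delta$). Each step then requires inverting a pencil of the form $B + i\delta A$, whose block-encoding has normalization $\alpha_B + \delta\alpha_A$. Choosing $\delta$ commensurate with $1/\rho$, the natural spectral scale of $M$, and inverting the pencil by QSVT, produces exactly the combination $(\alpha_A + \rho\,\alpha_B)\kappa_B$ once the $\kappa_B$ overhead of the matrix-inversion QSVT is folded in. The second step is to feed this block-encoding into a Berry-type linear-ODE solver (or its Taylor/Dyson-series refinements). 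Running it with a clock register of size $\log(1/\epsilon)$ and total time $T = \Theta(1/\epsilon)$ produces a history state of the form $\frac{1}{\sqrt{N}}\sum_{k=0}^{N-1} \ket{k} \otimes \ket{\psi(k\delta)}$; an inverse QFT on the clock register then performs textbook QPE on $M$ and outputs $\sum_j \beta_j \ket{\tilde\lambda_j}\ket{E_j}$ with $|\tilde\lambda_j - \lambda_j| \leq \epsilon$.

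The main obstacle is the non-normality of $M$. Since $M = E D E^{-1}$ with $D$ real diagonal, one has $\|e^{-iMt}\| \leq \kappa_E$, and this bound can be tight. As a result, (i) the ``good'' branch of the history state carries amplitude as small as $1/\kappa_E$, so $O(\kappa_E)$ rounds of amplitude amplification are needed to extract it with constant success probability, and (ii) the per-step truncation analysis of the ODE solver must be carried out in a norm that absorbs the $\kappa_E$ stretching, otherwise a naive $\ell_2$ error budget would blow up even though the eigenphases remain on the unit circle. Reconciling this non-normal error control with the QPE accuracy requirement, and verifying that the \emph{additive} shape $\alpha_A + \rho\,\alpha_B$ (rather than a multiplicative $\alpha_A \alpha_B$) survives to the final expression, is the technical heart of the argument; the factors $\kappa_E$, $\kappa_B$, and $1/\epsilon$ appearing in \eqref{intro:complexity1} then arise, respectively, from amplitude amplification against non-normality, the cost of inverting $B$ through QSVT, and the QPE precision budget.
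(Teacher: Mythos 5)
Your high-level strategy is the same as the paper's: reduce the GEP to the non-Hermitian propagation $B\,\x'(t)=2\pi iA\x(t)$, build a history state $\sum_l \ket{l}\ket{\x(t_l)}$ through a quantum ODE solver, and extract the eigenphases with an inverse QFT on the clock register; the $\rho$, $1/\epsilon$, $\kappa_E$, $\kappa_B$ factors all show up. However, several of the technical steps you propose either would not work as stated or are not where the corresponding factors actually arise in the paper, so there is a genuine gap.

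The central issue is the discretization. You propose an implicit (backward-Euler-type) time-marching or a Dyson/Taylor expansion about a step $\delta$, requiring inversion of the pencil $B+i\delta A$ at each step. The paper deliberately rejects this family of methods: the Taylor-series ODE solver of Berry et al.\ needs entry access to $B^{-1}A$, not just to $A$ and $B$, and the higher-order time-marching methods have worse precision dependence. Concretely, implicit Euler's single-step eigenphase is $-\arctan(\delta\lambda_j)=-\delta\lambda_j+O(\delta^3\lambda_j^3)$, so after $N=\T/\delta$ steps the accumulated phase error is $O(\T\delta^2\rho^3)$; keeping this below the QPE resolution forces $\delta\ll\epsilon/\rho^3$, i.e.\ $N\gg\rho^3/\epsilon^2$ instead of $p=O(\rho/\epsilon)$. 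To avoid this, the paper uses a \emph{Fourier spectral} discretization: it solves directly for the Fourier coefficients of $\x(t)$ on the equispaced nodes $t_l=lh$, yielding a global linear system whose coefficient matrix is $M=\begin{bmatrix}\bra{0}F^T\otimes I\\ (F_0^T\otimes I)(I\otimes A-\T^{-1}D\otimes B)\end{bmatrix}$. The spectral method's exponential accuracy is what keeps $p=\widetilde O(\rho/\epsilon)$.

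The attribution of the $\kappa_E$ and $\kappa_B$ factors is also off. There is no explicit inversion of $B$ or of a pencil anywhere in the algorithm; the $(\alpha_A+\rho\alpha_B)$ normalization comes from an LCU block-encoding of $I\otimes A-\T^{-1}D\otimes B$, where $\|\T^{-1}D\|\sim\rho$ because $p/\T\sim\rho$. Both $\kappa_E$ and $\|B^{-1}\|$ enter through a single condition-number bound on $M$: writing $M^{-1}\b$ as the discrete solution of a \emph{forced} ODE $B\x'=2\pi iA\x+2\pi i\c(t)$ and using the variation-of-parameters formula $\x(t)=e^{2\pi iB^{-1}At}\b(0)+2\pi i\int_0^t e^{2\pi iB^{-1}A(t-s)}B^{-1}\c(s)\,ds$ together with $\|e^{2\pi iB^{-1}As}\|\le\kappa_E$ gives $\|M^{-1}\|=O(\kappa_E\|B^{-1}\|/\epsilon)$. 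Since $\|\x(t_0)\|=1$, the history-state norm is at least $1/\sqrt p$, so no extra $O(\kappa_E)$ rounds of amplitude amplification are incurred; the $\kappa_E$ appears only through the linear solver's $\kappa_M$ dependence. Your intuition that non-normality stretches $\|e^{2\pi iB^{-1}At}\|$ by $\kappa_E$ is the correct seed of the argument, but it must be turned into a bound on $\|M^{-1}\|$ rather than an amplitude-amplification overhead, and the $\kappa_B$ factor falls out of the same integral, not out of a QSVT inversion.
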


An obvious fact is that for any $\alpha$-block encoding $U$ of a matrix $M$, we have $\alpha \geq \|M\|$. Also in the quantum case, we hope $U$ is efficiently implementable in the quantum circuits. So it may happen that $\alpha \gg \|M\|$. For instance, we may have $\alpha=\|M\|_F$ assuming access to the QRAM data \cite{chakraborty2018power}. Here $\|M\|_F$ is the Frobenius norm. Because of this, we keep $\alpha_A, \alpha_B$ in the complexity (\ref{intro:complexity1}).
Note that $\rho \leq \|B^{-1}A\| = O(\kappa_B\|A\|)$ under the assumption of $\|B\|= \Theta(1)$. So in the worst case, the complexity of our algorithm is upper bounded by 
$\widetilde{O}\big(
 \kappa_E(\alpha_A + \|A\|\alpha_B\kappa_B )
\kappa_B/\epsilon
\big)$.

The algorithm based on QPE for symmetric GEPs is not comprehensively studied in the past.\footnote{A plausible reason: the QPE idea for symmetric GEP is so naive that it is not deserve to write a single paper. In this paper, we use the best known block-encoding techniques to do the complexity analysis of this method.}   For comparison, we will also analyze its complexity in this paper (see Theorem \ref{thm:GEP by block-encoding}).
The complexity of QPE is determined by the cost of constructing the block-encoding of $B^{-1/2}AB^{-1/2}$. We will use two methods to construct this block-encoding. The overall complexity of the first method is
\be \label{intro:complexity2}
\widetilde{O}\big(
\alpha_A\alpha_B\kappa_B^{2.5} /\epsilon 
\big),
\ee
and of the second method is
\be \label{intro:complexity3}
\widetilde{O}\big(
(\alpha_A+\|A\|\kappa_B\alpha_B)\kappa_B^{2} /\epsilon 
\big).
\ee
The second method has better dependence on $\alpha_A,\alpha_B$, but worse dependence on  $\kappa_B$ than the first method. 
When restricted to the symmetric case, $\kappa_E = \sqrt{\kappa_B}$ (see Proposition \ref{lemma:real generalized eigenvalues}). So the complexity (\ref{intro:complexity1}) of our algorithm is at most
\be
\widetilde{O}\big(
(\alpha_A + \|A\|\kappa_B \alpha_B ) \kappa_B^{1.5}
/\epsilon
\big).
\ee
Therefore, our algorithm has better dependence on $\alpha_A,\alpha_B$ and $\kappa_B$ than the algorithm based on QPE. At the end of Section \ref{section:Block-encoding method}, we will discuss more about the connection, which involves some other notation, between our algorithm and the standard algorithm based on QPE. In summary, it states that our algorithm combines the advantages of the two above methods based on QPE. If we further restricted to the Hermitian eigenvalue problem (i.e., $A$ is Hermitian, $B=I$), then all the algorithms have the same complexity $\widetilde{O}(\alpha_A/\epsilon)$. This is consistent with the complexity of QPE  \cite{chakraborty2018power}.

\subsection{Summary of our techniques}

Our algorithm builds on the connection between GEP and ordinary differential equations (ODEs). Consider the linear ODEs $B\x'(t) = 2\pi iA\x(t)$ with initial condition $\x(0) = \sum_j \beta_j \ket{E_j}$. The solution equals $\x(t) = \sum_j \beta_j e^{2\pi i\lambda_j t}\ket{E_j}$. 
To solve an ODE in a quantum computer, we usually use the discretization method to reduce it to a linear system such that quantum linear solvers are applicable \cite{berry2017quantum,berry2014high,childs2020quantum}.
Suppose we discretize the time interval $[0,\tau]$ into $p$ sub-intervals via $t_0=0, t_1 = h, \ldots, t_p = ph = \T$, then it turns out that the quantum state of the solution of the linear system is proportional to $\sum_{l=1}^p \|\x(t_l)\|\, \ket{l} \ket{\x(t_l)} = \sum_j \beta_j \sum_{l=1}^p e^{2\pi i\lambda_j t} \ket{l}  \ket{E_j}$. When $\lambda_j$ are all real, we can estimate them by applying inverse quantum Fourier transform to $\ket{l}$. This gives us the target state stated in Problem \ref{prob:problem 1}. 

The basic idea of our algorithm is straightforward. The technical part is the complexity analysis. Since it relates to the solving of a linear system in a quantum computer, we need to estimate the condition number of this linear system. This is indeed the most technical part of this whole paper. Below, we briefly introduce our basic idea about this estimation. Denote the coefficient matrix of the linear system as $M$, then $\|M\|$ is easy to bound. As for $\|M^{-1}\|$, by definition it equals $\max_{\|\b\|=1} \|M^{-1}\b\|$. Since $M$ is obtained from solving an ODE and $M^{-1}\b$ can be viewed as the solution of the linear system $M\y = \b$, we can view $M^{-1}\b$ as an approximate solution of a linear ODE of the form $B\x'(t) = 2\pi iA\x(t) + \c(t)$ for some $\c(t)$ depending on $\b$. For this ODE, we know the explicit formula of its solution, from which we can bound $\|M^{-1}\b\|$ through some matrix inequalities.

\subsection{Organization of the paper}

This paper is organized as follows:
In Section \ref{section:Preliminaries}, we introduce some preliminary results that will be used in this paper. This includes some classical results about GEPs and the framework of block-encoding.
In Section \ref{section:Quantum differential equation method}, we present our quantum algorithm in detail. In Section \ref{section:Block-encoding method}, for comparison, we analyze the complexity of the standard algorithm based on QPE. Finally, in Section \ref{section:ow bounds for the generalized eigenvalue problem}, we give a simple lower bound analysis about solving singular GEPs in a quantum computer.

{\bf Notation.} For a matrix $A=(a_{ij})$, we use $A^{-1}$ to denote the Moore-Penrose inverse, $\|A\|$ to denote the operator norm (i.e., the maximal singular value), $\|A\|_{\max}=\max_{i,j}|a_{ij}|$ and $\|A\|_F = \sqrt{\sum_{i,j}|a_{ij}|^2}$. By $\kappa_A$, we mean the ratio of maximal singular value and minimal nonzero singular value, and still call it the condition number of $A$. We say $A$ is $s$-sparse if the maximal number of nonzero entries in each row and column is $s$. For two Hermitian matrices $A,B$, by $A\preceq B$ we mean $B-A$ is positive definite.

\section{Preliminaries}
\label{section:Preliminaries}

\subsection{Generalized eigenvalue problem}

Let $A,B$ be two $n$-by-$n$ complex matrices, the generalized eigenvalue problem (GEP) aims to find
all $\lambda \in \mathbb{C}$ and all vectors 
$\x\in \mathbb{C}^n$ such that
\be \label{GEP}
A\x = \lambda B \x.
\ee

Although the GEP looks like a simple generalization
of the standard eigenvalue problem, it exhibits some significant differences.
For example,
\begin{itemize}
\item It is possible for $\det(A-\lambda B)=0$ independent of $\lambda$ so that every scale is an eigenvalue, e.g., $A,B$ are singular such that $A\x=B\x=0$ for some nonzero $\x$. When this happens, we call $(A,B)$ a non-regular pair. Obviously, if $(A,B)$ is non-regular, then $B$ is singular.
\item If $B$ is singular, then $\infty$ can be an
eigenvalue. To see this, consider $B\x=\lambda^{-1}A\x$. Choose $\x$ such that $B\x=0, A\x\neq 0$, then $\lambda=\infty$. Because of this,
sometimes we write the GEP in the cross-product form
$\mu A \x = \lambda B \x$, and view the pair $(\lambda,\mu)$ as an eigenvalue. In this notation,
the eigenvalue $\infty$ corresponds to $(\lambda,0)$. When $(A,B)$ is regular, this matrix pair always has $n$ eigenvalues, including $\infty$. 
\end{itemize}

The above two unusual properties appear when $B$ is singular. Also, when $B$ is singular, the GEP is well-known to be ill-conditioned \cite{hochstenbach2019solving,van1979computation}. So it is hard to solve even for a quantum computer, see the lower bound analysis in Section \ref{section:ow bounds for the generalized eigenvalue problem}. Due to the above reasons, in this paper, we will mainly focus on the case  that $B$ is invertible. 
%When $B$ is nonsingular, the GEP (\ref{GEP}) is equivalent to the standard eigenvalue problem $B^{-1}A\x = \lambda \x$. This is the basis of many numerical methods, which can ineffective when $B$ is ill-conditioned.

Usually, $\lambda$ may not be real even if $A,B$ are Hermitians.
The following simple result is a sufficient condition, which holds for many machine learning models \cite{de2005eigenproblems}.\footnote{A more general result is given in \cite[Theorem 8.7.1]{Golub}.}

\begin{prop}[see \cite{Golub}]
\label{lemma:real generalized eigenvalues}
Assume that $A,B$ are Hermitian matrices with $B$ positive  definite. Then 
there is an invertible matrix $E$ and a real diagonal matrix $\Lambda$ such that
$AE=BE\Lambda$. Moreover, the condition number of $E$ equals
$\kappa_E = \sqrt{\kappa_B}$.
\end{prop}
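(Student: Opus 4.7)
The plan is to exploit the positive definiteness of $B$ to symmetrize the pencil in the standard way, reducing the generalized problem to a Hermitian eigenvalue problem, and then read off the condition number of the transformation from that of $B$.

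First I would take the unique Hermitian positive-definite square root $B^{1/2}$ and observe that the GEP $A\x=\lambda B\x$ is equivalent, via the change of variables $\y=B^{1/2}\x$, to the standard eigenvalue problem
\begin{equation*}
    \widetilde{A}\y = \lambda \y, \qquad \widetilde{A} := B^{-1/2} A B^{-1/2}.
\end{equation*}
The key point is that $\widetilde{A}$ is Hermitian: $\widetilde{A}^\dagger = (B^{-1/2})^\dagger A^\dagger (B^{-1/2})^\dagger = B^{-1/2} A B^{-1/2} = \widetilde{A}$, using that both $A$ and $B^{-1/2}$ are Hermitian. By the spectral theorem there exist a unitary $U$ and a real diagonal $\Lambda$ with $\widetilde{A} U = U\Lambda$. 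Setting $E := B^{-1/2} U$ I would then verify
\begin{equation*}
    AE = A B^{-1/2} U = B^{1/2}\widetilde{A} U = B^{1/2} U \Lambda = B \cdot B^{-1/2} U \cdot \Lambda = BE\Lambda,
\end{equation*}
and $E$ is invertible as a product of two invertible matrices.

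For the second claim, I would use that $U$ is unitary, so the singular values of $E = B^{-1/2}U$ coincide with those of $B^{-1/2}$. Since $B^{-1/2}$ is Hermitian positive definite, its singular values are its eigenvalues, which are the reciprocal square roots of the eigenvalues of $B$. Hence
\begin{equation*}
    \sigma_{\max}(E) = \frac{1}{\sqrt{\lambda_{\min}(B)}}, \qquad \sigma_{\min}(E) = \frac{1}{\sqrt{\lambda_{\max}(B)}},
\end{equation*}
which gives $\kappa_E = \sigma_{\max}(E)/\sigma_{\min}(E) = \sqrt{\lambda_{\max}(B)/\lambda_{\min}(B)} = \sqrt{\kappa_B}$, using positive definiteness to identify singular values of $B$ with its eigenvalues.

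I do not anticipate any genuine obstacle here: the only subtle point is making sure the square-root substitution is carried out cleanly so that the resulting $\widetilde{A}$ is manifestly Hermitian, and that the singular values of $E$ are computed correctly by absorbing the unitary factor. Everything else is a routine consequence of the spectral theorem applied to $\widetilde{A}$.
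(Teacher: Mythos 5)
Your proposal is correct and follows essentially the same route as the paper: symmetrize the pencil with $B^{-1/2}$, diagonalize the Hermitian matrix $\widetilde{A}=B^{-1/2}AB^{-1/2}$ by a unitary, set $E=B^{-1/2}U$, and read off $\kappa_E$ from the singular values of $B^{-1/2}$. The only cosmetic difference is that the paper derives $AE=BE\Lambda$ via the congruence identities $E^\dag AE=\Lambda$, $E^\dag BE=I$, whereas you verify the relation by direct substitution.
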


\begin{proof}
Since $B$ is Hermitian, it can be diagonalized by a
unitary matrix. Denote this decomposition as $U^\dag B U = D$, where $U$ is unitary and $D$ is diagonal.
Denote $C = B^{-1/2} = U D^{-1/2}$, then $C^\dag B C = I$.
Consider $\widetilde{A}:=C^\dag A C$. It is Hermitian, so there is a unitary $V$ and a diagonal matrix $\Lambda$ such that $V^\dag \widetilde{A} V = \Lambda$. Set $E=CV$, then
$E^\dag A E = \Lambda$ and $E^\dag B E = I$.
This implies that $A E = B E \Lambda$.
Therefore, the real matrix $\Lambda$ is the diagonal matrix
of the generalized eigenvalues.
As for the condition number, since $E=\sqrt{B^{-1}} V$ with $V$ unitary, we have $\kappa_E = \sqrt{\kappa_B}$.
\end{proof}

Even though the assumption that $B$ is
positive definite occurs in many cases in practice, we can make an even weaker assumption called definite pairs such that all the eigenvalues are real. A Hermitian pair
$(A,B)$ is called definite if 
\be
\label{def:gamma}
\gamma(A,B) := \min_{\x \in \mathbb{C}^n,\|\x\|_2=1} 
\sqrt{(\x^\dagger A \x)^2
+(\x^\dagger B \x)^2} > 0.
\ee
The quantity $\gamma(A,B)$ is known as the Crawford number of the pair $(A,B)$.
The basic fact about definite pairs is that they can be transformed  by a unitary matrix into a Hermitian pair $(A', B')$ in which $B'$ is positive definite.
More precisely, 

\begin{prop}
[Theorem 1.18 of Chapter VI in \cite{stewart1990matrix}]
\label{lem:definite pair}
Let $(A,B)$ be a definite pair. For any $\theta\in[0,2\pi)$ let $R(\theta)$ be the 2-dimensional rotation of angle $\theta$ and 
\be
\left[ \begin{array}{c}
A_\theta \\
B_\theta
\end{array} \right] :=
R(\theta)\otimes I_n
\left[ \begin{array}{c}
A \\
B
\end{array} \right]
=\left[ \begin{array}{c}
A\cos\theta - B \sin\theta \\
A\sin\theta + B \cos\theta
\end{array} \right].
\ee
Then there is a $\theta\in[0,2\pi)$
such that $B_\theta$ is positive definite
and $\gamma(A,B)$ is the smallest singular value of $B_\theta$.
\end{prop}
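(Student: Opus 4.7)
The plan is to exploit the convexity of the joint numerical range together with a separating-hyperplane argument. Introduce the joint numerical range
\[
W := \{\,(\x^\dagger A\x,\ \x^\dagger B\x)\ :\ \x \in \mathbb{C}^n,\ \|\x\|_2 = 1\,\} \subset \mathbb{R}^2,
\]
whose entries are real because $A,B$ are Hermitian. Identifying $\mathbb{R}^2 \cong \mathbb{C}$ via $(a,b) \mapsto a+ib$, the set $W$ is exactly the numerical range of the (generally non-Hermitian) matrix $A+iB$, and hence is convex by the classical Toeplitz--Hausdorff theorem. It is also compact, being the continuous image of the unit sphere.

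By the definition in \eqref{def:gamma}, definiteness is equivalent to $0 \notin W$, and $\gamma(A,B) = \min_{(a,b)\in W}\sqrt{a^2+b^2}$ is the Euclidean distance from the origin to $W$. Let $(a_0,b_0) \in W$ be a closest point to the origin, so $a_0^2+b_0^2 = \gamma(A,B)^2$. The supporting-hyperplane inequality at the nearest point of a convex set then yields
\[
a\, a_0 + b\, b_0 \;\geq\; a_0^2 + b_0^2 \;=\; \gamma(A,B)^2 \qquad \text{for every } (a,b)\in W.
\]

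Now pick $\theta \in [0,2\pi)$ with $(\sin\theta,\cos\theta) = (a_0,b_0)/\gamma(A,B)$, which is possible since the right-hand side is a unit vector in $\mathbb{R}^2$. For any unit $\x \in \mathbb{C}^n$,
\[
\x^\dagger B_\theta\, \x \;=\; (\x^\dagger A\x)\sin\theta + (\x^\dagger B\x)\cos\theta \;=\; \frac{(\x^\dagger A\x)\, a_0 + (\x^\dagger B\x)\, b_0}{\gamma(A,B)} \;\geq\; \gamma(A,B),
\]
so $B_\theta \succeq \gamma(A,B)\,I$; in particular $B_\theta$ is positive definite. Taking a unit $\x$ that attains $(\x^\dagger A\x,\x^\dagger B\x) = (a_0,b_0)$ shows the bound is sharp, so $\gamma(A,B)$ is the smallest eigenvalue of $B_\theta$, which for a Hermitian positive definite matrix coincides with its smallest singular value. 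The only non-routine ingredient is the convexity of the joint numerical range of two Hermitian matrices; everything else is a one-line distance-to-a-convex-set calculation. A minor bookkeeping point is to match sign conventions so that $(\sin\theta,\cos\theta)$, rather than $(\cos\theta,\sin\theta)$, parametrizes the normal direction, as is forced by the formula $B_\theta = A\sin\theta + B\cos\theta$.
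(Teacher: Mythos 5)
Your argument is correct. The paper itself does not prove this proposition—it cites Theorem 1.18 of Chapter VI in Stewart and Sun's \emph{Matrix Perturbation Theory}—so there is no in-paper proof to compare against; but your route is the standard one for this classical fact. All the steps check out: the joint numerical range $W$ is indeed the numerical range of $A+iB$ (real since $A,B$ are Hermitian), convexity of $W$ follows from Toeplitz--Hausdorff, $\gamma(A,B)$ is exactly the Euclidean distance from the origin to the compact convex set $W$, the obtuse-angle/projection inequality at the closest point $(a_0,b_0)$ gives $a a_0 + b b_0 \ge \gamma^2$ on $W$, and your choice $(\sin\theta,\cos\theta) = (a_0,b_0)/\gamma$ matches the paper's sign convention $B_\theta = A\sin\theta + B\cos\theta$, yielding $\x^\dagger B_\theta \x \ge \gamma$ with equality attained at the vector realizing $(a_0,b_0)$. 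Since $B_\theta$ is then Hermitian and $\succeq \gamma I$, its smallest eigenvalue (which equals its smallest singular value for a positive definite Hermitian matrix) is exactly $\gamma(A,B)$, as required. One small housekeeping note: the proposition quantifies $\theta\in[0,2\pi)$, so after choosing the unit normal $(a_0,b_0)/\gamma$ you should reduce the resulting angle modulo $2\pi$ to land in that interval, but this is immediate.
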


As a result of Propositions \ref{lemma:real generalized eigenvalues} and \ref{lem:definite pair}, if
$(A,B)$ is a definite pair, then there is a nonsingular matrix $E$ such that $E^\dag A E$ and $E^\dag B E$ are diagonal. Moreover, the eigenvalues are all real.

Next, we briefly introduce the conditioning analysis about computing eigenvalues. It turns out that the complexity of the quantum algorithm proposed in this paper will depend on the quantities (i.e., $\kappa_E,\kappa_B,\|A^{-1}\|$) that describe the conditioning, which seems not unrealistic.

The conditioning describes the stability of eigenvalues after small perturbations. According to Stewart \cite{stewart1978perturbation}, the appropriate measure for generalized eigenvalue perturbations is the chordal metric 
${\tt chord}(a,b)$ defined by
\be
{\tt chord}(a,b) := \frac{|a-b|}{\sqrt{1+|a|^2} \sqrt{1+|b|^2}}, \quad \text{where}~a,b\in \mathbb{C}.
\ee
Stewart showed that if $\lambda$ is a distinct eigenvalue of $(A,B)$ and $\lambda'$ is the corresponding eigenvalue of the perturbed pair $(A', B')$ with $\|A-A'\|\approx \|B-B'\| \approx \epsilon$, then
\be
{\tt chord}(\lambda, \lambda') \leq \frac{\epsilon}{\sqrt{(\y^\dag A \x)^2 + (\y^\dag B \x)^2}} + O(\epsilon^2),
\ee
where $\x,\y$ are unit and satisfy $A\x = \lambda B\x, \y^\dag A = \lambda \y^\dag B$. So $1/\sqrt{(\y^\dag A \x)^2 + (\y^\dag B \x)^2}$ serves as a condition number for the eigenvalue $\lambda$.
Indeed, $\gamma^{-1}(A,B)$ can be viewed as an
upper bound of the condition number of all eigenvalues \cite[Theorem 3.3 of Chapter VI]{stewart1990matrix}. Note that
$\gamma(A,B) \geq \sqrt{\|A^{-1}\|^{-2}+\|B^{-1}\|^{-2}}$.
So 
\be
\label{upper bound of gamma}
\gamma^{-1}(A,B)\leq \sqrt{
\frac{\|A^{-1}\|^{2}\|B^{-1}\|^{2}}{\|A^{-1}\|^{2}+\|B^{-1}\|^{2}}}
\leq 
\sqrt{\|A^{-1}\|\|B^{-1}\|/2}.
\ee

Generally,  let $E$ be the matrix of the
generalized eigenvectors with condition number $\kappa_E$, then by \cite[Theorem 2.6 of Chapter VI]{stewart1990matrix} the condition number
is bounded by $\kappa_E \rho_L(A,B)$, where 
$\rho_L$ is a metric that depends on the perturbations on $A,B$. 

The conditioning of GEP also 
relates to the condition
number of $B$, for instance see 
\cite[Theorem 2]{crawford1976stable}. 
Consider the GEP $A\x=\lambda B\x$ with $A, B$ Hermitian and $B$ is positive definite.
We use $A' = A + \Delta A, B' = B + \Delta B$ to denote
the perturbations of $A,B$ with $\|\Delta A\|, \|\Delta B\|\leq \epsilon$. 
Then for the $i$-th eigenvalue $\lambda_i$ of $(A,B)$ and $i$-th eigenvalue ${\lambda}'_i$ of $(A', B')$, we have
$
|\lambda_i' - \lambda_i| \leq  
(1 + |\lambda_i|) \epsilon \kappa_B.
$

\subsection{Block-encoding}
\label{subsec:Block-encoding}

In this section, we briefly introduce some necessary results about block-encoding. The main references are \cite{chakraborty2018power,Gilyen-QSVT}.

\begin{defn}[Block-encoding]
Suppose that $A$ is a $p$-qubit operator, $\alpha,\epsilon\in \mathbb{R}^+$ and $q\in\mathbb{N}$. Then we say that the $(p+q)$-qubit unitary $U$ is an $(\alpha,q,\epsilon)$-block-encoding of $A$, if
\be
\|A-\alpha (\langle 0|^{\otimes q}\otimes I ) U (|0\rangle^{\otimes q}\otimes I)\|
\leq \epsilon.
\ee
\end{defn}

In matrix form, we can view
$U =
\left[ \begin{array}{ccccccc}
A/\alpha & \cdot \\
\cdot & \cdot \\
\end{array} \right].$ So to verify if a unitary $U$ is a block-encoding of $A$, we only need to verify if $U\ket{0}\ket{x} = \ket{0}(A/\alpha)\ket{x} + \ket{1}\ket{x'}$ for all $\ket{x}$, where $\ket{x'}$ is a garbage state. The following lemma is about the construction of block-encodings of sparse matrices.

\begin{lem}[Lemma 48 of \cite{Gilyen-QSVT}]
\label{sparse block-encoding}
Assume that $A$ is an $s$-sparse $n$-by-$n$ matrix given in the sparse-access input model. Then for any $\epsilon\in (0,1)$, we can implement an $(s\|A\|_{\max},{\rm poly}\log(n/\epsilon),$ $\|A\|_{\max}\epsilon)$-block-encoding of $A$ in time $O({\rm poly}\log(n/\epsilon))$.
\end{lem}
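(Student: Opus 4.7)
The plan is to follow the standard sparse-access block-encoding construction used in the quantum walk and QSVT literature. The $(p+q)$-qubit unitary $U$ is built as a short composition of two ``diffusion-like'' oracles derived from the row/column sparsity oracles $O_r, O_c$, together with a controlled rotation built from the entry oracle $O_A$; inspection of the top-left corner $(\bra{0}^{\otimes q}\otimes I)U(\ket{0}^{\otimes q}\otimes I)$ then recovers $A$ up to the normalization $s\|A\|_{\max}$ and a small arithmetic error.

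First I would fix the sparse-access model with the usual three oracles: $O_r\ket{i}\ket{k}\ket{0}=\ket{i}\ket{k}\ket{r_{ik}}$ and analogously $O_c$ for columns, returning the $k$-th non-zero index, and $O_A\ket{i}\ket{j}\ket{z}=\ket{i}\ket{j}\ket{z\oplus \tilde A_{ij}}$ returning a fixed-point binary representation $\tilde A_{ij}$ of $A_{ij}$. Using $O_r$ together with Hadamards on an $\lceil\log s\rceil$-qubit register, I build a column-preparation unitary $T$ that maps $\ket{i}\ket{0}\ket{0}\mapsto \ket{i}\,\tfrac{1}{\sqrt{s}}\sum_{k=1}^{s}\ket{r_{ik}}\bigl(\sqrt{\tilde A_{i,r_{ik}}^{*}/\|A\|_{\max}}\ket{0}+\sqrt{1-|\tilde A_{i,r_{ik}}|/\|A\|_{\max}}\ket{1}\bigr)$; the rotation is implemented via a sequence of controlled rotations on the bits of $O_A$'s output, after which $O_A$ is uncomputed. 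Symmetrically, I define a row-preparation unitary $T'$ from $O_c$.

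Next I would set $U=(T')^{\dag}\cdot\mathrm{SWAP}_{\text{row,col}}\cdot T$ and directly compute $\bra{i}\bra{0}^{\otimes q}U\ket{0}^{\otimes q}\ket{j}$; routine algebra shows this equals $\tilde A_{ij}/(s\|A\|_{\max})$, so $U$ is an $(s\|A\|_{\max},q,\delta)$-block-encoding of the truncated matrix $\tilde A$. Combining with the standard bound $\|A-\tilde A\|\leq s\max_{ij}|A_{ij}-\tilde A_{ij}|$, valid for $s$-sparse matrices via Schur's estimate on row/column sums, converts entrywise precision into the operator-norm bound that the definition of block-encoding requires.

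The main obstacle is the precision accounting: to obtain final error $\|A\|_{\max}\epsilon$ one needs per-entry precision $|A_{ij}-\tilde A_{ij}|\lesssim \|A\|_{\max}\epsilon/s$, so the controlled rotation must be accurate at this level. Using standard fixed-point quantum arithmetic to evaluate $\arcsin\sqrt{\cdot}$, this requires $\polylog(n/\epsilon)$ bits of working registers and hence $q=\polylog(n/\epsilon)$ ancilla qubits and gates; all other steps use $O(1)$ oracle queries, so the rotation implementation dominates and yields the claimed $O(\polylog(n/\epsilon))$ runtime.
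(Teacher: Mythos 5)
The paper does not actually prove this lemma: it invokes Lemma~48 of~\cite{Gilyen-QSVT} applied to $A/\|A\|_{\max}$ and rescales via Fact~\ref{fact of BE}. You instead attempt a from-scratch reconstruction of the cited block-encoding, which is more than the paper requires, and your sketch has a genuine flaw in the corner computation. As you set it up, both $T'|i\rangle|0^{\otimes q}\rangle$ and $\mathrm{SWAP}\,T|j\rangle|0^{\otimes q}\rangle$ place an amplitude $\sqrt{1-|\tilde A_{ij}|/\|A\|_{\max}}$ on flag state $|1\rangle$ sitting on the \emph{same} index state $|i\rangle|j\rangle$ after the swap, so the inner product $\bigl(T'|i\rangle|0^{\otimes q}\rangle\bigr)^{\dagger}\,\mathrm{SWAP}\,T|j\rangle|0^{\otimes q}\rangle$ picks up an extra additive $\tfrac{1}{s}\bigl(1-|\tilde A_{ij}|/\|A\|_{\max}\bigr)$ beyond the intended cross term, and it is simply not $\tilde A_{ij}/(s\|A\|_{\max})$. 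The correct constructions (Berry--Childs--Kothari, Low--Chuang, and the actual proof of Lemma~48 in~\cite{Gilyen-QSVT}) avoid this by routing the failure amplitude into a subspace that the two state preparations cannot simultaneously occupy, e.g.\ by storing it on a shifted index $|r_{ik}+n\rangle$ in a register the swap moves, or by applying the controlled rotation on only one side; then the garbage inner products vanish identically.

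A second, smaller issue is the complex phase bookkeeping: with the conjugations you wrote, the surviving cross term works out to $\overline{\tilde A_{ji}}/\|A\|_{\max}$ (a principal-branch computation gives $\overline{\sqrt{z}}\cdot\sqrt{\bar z}=\bar z$), which agrees with $\tilde A_{ij}/\|A\|_{\max}$ only when $A$ is Hermitian; for general $A$ the conjugate must be placed asymmetrically. The parts you did make explicit are fine: the Schur-type bound $\|A-\tilde A\|\le s\max_{ij}|A_{ij}-\tilde A_{ij}|$ for $s$-sparse matrices is correct (via $\sqrt{\|\cdot\|_1\|\cdot\|_\infty}$), and the $\polylog(n/\epsilon)$ overhead for the fixed-point $\arcsin\sqrt{\cdot}$ arithmetic matches the stated gate and ancilla counts.
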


In the original statement of Lemma 48 of \cite{Gilyen-QSVT}, the authors assume that $\|A\|_{\max}\leq 1$.
So to apply their result, we can consider $A/\|A\|_{\max}$.
The above lemma then comes from the following simple fact.

\begin{fact}{\rm
\label{fact of BE}
An $(\alpha,q,\epsilon)$-block-encoding of $A/\beta$ is an
$(\alpha \beta,q,\beta\epsilon)$-block-encoding of $A$. 
}\end{fact}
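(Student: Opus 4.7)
The plan is to derive the claim directly from the definition of block-encoding by a one-line scaling argument, since the underlying unitary $U$ is the same object in both statements and only the normalization constants change. Concretely, I would start from the hypothesis that $U$ is an $(\alpha,q,\epsilon)$-block-encoding of $A/\beta$, which by definition means
\[
\bigl\| A/\beta - \alpha (\langle 0|^{\otimes q}\otimes I) U (|0\rangle^{\otimes q}\otimes I) \bigr\| \leq \epsilon.
\]
Multiplying both sides by $\beta$ (taken to be positive, which is implicit in the statement since $\alpha\beta$ is meant to be a valid block-encoding normalization in $\mathbb{R}^+$) and pulling the scalar $\beta$ inside the operator norm yields
\[
\bigl\| A - \alpha\beta (\langle 0|^{\otimes q}\otimes I) U (|0\rangle^{\otimes q}\otimes I) \bigr\| \leq \beta\epsilon,
\]
which is exactly the definition of $U$ being an $(\alpha\beta,q,\beta\epsilon)$-block-encoding of $A$.

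The number of ancilla qubits $q$ is obviously preserved because we are not modifying $U$ itself, only relabeling the scalars $(\alpha,\epsilon)$ as $(\alpha\beta,\beta\epsilon)$. There is really no technical obstacle: the entire content of the claim is the homogeneity of the operator norm under positive scalar multiplication, together with the linearity of the expression $A \mapsto \alpha(\langle 0|^{\otimes q}\otimes I) U (|0\rangle^{\otimes q}\otimes I)$ in the normalization constant. The only thing worth flagging is the positivity of $\beta$, which I would mention explicitly so that the scalar can be moved inside the norm without introducing an absolute value.
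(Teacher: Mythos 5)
Your proof is correct and is precisely the one-line scaling argument the paper has in mind; the paper states this as a Fact without proof because the content is exactly the homogeneity of the operator norm that you spell out. Nothing to add.
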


\begin{lem}[Lemmas 9 and 10 of \cite{chakraborty2018power}]
\label{negative power of block-encoding}
Assume that $\delta\in(0,1), \kappa\geq 2$.
Suppose that $B$ is Hermitian with $I/\kappa \preceq B \preceq I$,
and we have an $(\alpha,q,\epsilon)$-block-encoding of $B$ that can be implemented using $O(T)$ elementary gates.

\begin{itemize}
\item If
$\epsilon = o(\delta\kappa^{-1.5}\log^{-3}(\kappa^{1.5}/\delta))$, then we can implement a $(2\sqrt{\kappa}, q + O(\log(\kappa^{1.5} \log 1/\delta), \delta)$-block-encoding of $B^{-1/2}$ in cost
$
O(\alpha \kappa (q+T) \log^{2}(\kappa^{1.5}/\delta)).
$
\item If $\epsilon = o(\delta\kappa^{-1}\log^{-3}(\kappa /\delta))$,
then we can  implement a $(2 , q + O(\log \log 1/\delta), \delta)$-block-encoding of $B^{1/2}$ in cost
$
O(\alpha \kappa (q+T) \log^{2}(\kappa/\delta)).
$
\end{itemize}

\end{lem}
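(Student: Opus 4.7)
The plan is to prove Lemma \ref{negative power of block-encoding} by combining the quantum singular value transformation (QSVT) framework of Gilyén et al.\ with a classical polynomial approximation of $x^{-1/2}$ (resp.\ $x^{1/2}$) on the interval where the eigenvalues of $B/\alpha$ live. Recall that the hypothesis $I/\kappa \preceq B \preceq I$ together with the subnormalization $\alpha$ means the eigenvalues of $B/\alpha$ lie in the interval $[1/(\alpha\kappa),\,1/\alpha]$, and since $B$ is Hermitian its eigenvalues and singular values coincide, so QSVT acts on them as ordinary polynomial evaluation.

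First, I would establish the polynomial approximation lemma: there exists an odd polynomial $p_{-}(x)$ of degree $d_{-} = O(\alpha\kappa \log(\kappa/\delta))$ such that
\[
\bigl|p_{-}(x) - \tfrac{1}{2\sqrt{\kappa}}\cdot x^{-1/2}\bigr| \leq \tfrac{\delta}{4\sqrt{\kappa}} \quad \text{for all } x\in[1/(\alpha\kappa),1/\alpha],
\]
and $|p_{-}(x)|\leq 1$ on $[-1,1]$. A symmetric statement with degree $d_{+} = O(\alpha\kappa\log(\kappa/\delta))$ and subnormalization factor $1/2$ holds for $x^{1/2}$. Such polynomials can be constructed from the truncated Chebyshev series of $x^{\pm 1/2}$ restricted to the relevant sub-interval; the degree bound follows from the standard smoothness/analyticity-based estimates used by Childs--Kothari--Somma for $x^{-1}$, adapted to the milder singularity of $x^{\pm 1/2}$. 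The subnormalization factors $2\sqrt{\kappa}$ and $2$ respectively are exactly what is needed to keep $|p_\pm(x)|\le 1$ on $[-1,1]$, a prerequisite for QSVT.

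Second, I would invoke the QSVT theorem of Gilyén et al.\ with the odd polynomial $p_{\pm}$: given the $(\alpha,q,\epsilon)$-block-encoding $U$ of $B$, alternating applications of $U$, $U^\dagger$ and single-qubit phase rotations chosen according to $p_{\pm}$ produce a unitary $\widetilde{U}_{\pm}$ that is a block-encoding of $p_{\pm}(B/\alpha)$ on one extra ancilla qubit (plus the $q$ ancillas of $U$). The number of queries to $U$ is exactly the degree $d_{\pm}$, so the gate cost is $O(d_{\pm}(T+q))$. Unpacking the subnormalizations, $\widetilde{U}_{\pm}$ is a $(2\sqrt{\kappa},q+O(1),\cdot)$-block-encoding of $B^{-1/2}$ (resp.\ a $(2,q+O(1),\cdot)$-block-encoding of $B^{1/2}$), up to the errors tracked below. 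The extra $O(\log(\kappa^{1.5}\log 1/\delta))$ ancillas come from synthesizing the phase angles of QSVT to sufficient precision.

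Third, I would carry out the error bookkeeping using the robustness statement of QSVT: the distance between the ideal block-encoding (using an exact $B/\alpha$) and the implemented one propagates linearly in the degree, contributing $O(d_{\pm}\epsilon)$. Combined with the polynomial approximation error $\delta/(4\sqrt{\kappa})$ (which becomes $\delta/2$ after multiplying by the subnormalization $2\sqrt{\kappa}$), the hypothesis $\epsilon = o(\delta\kappa^{-1.5}\log^{-3}(\kappa^{1.5}/\delta))$ for the $B^{-1/2}$ case (resp.\ $\epsilon = o(\delta\kappa^{-1}\log^{-3}(\kappa/\delta))$ for $B^{1/2}$) is precisely what makes $d_{\pm}\epsilon = o(\delta)$, yielding a total error $\leq \delta$ and matching the asymptotic cost $O(\alpha\kappa(q+T)\log^2(\kappa^{1.5}/\delta))$, where one logarithm comes from the polynomial degree and the second from the precision needed in synthesizing the QSVT phase factors. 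The main obstacle in this plan is item one: producing the explicit polynomial of the claimed degree with $\|p_{\pm}\|_{[-1,1]}\leq 1$ and the correct subnormalization. Everything else is essentially a black-box invocation of QSVT with careful tracking of constants.
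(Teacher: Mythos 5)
This statement is cited from Chakraborty--Gily\'en--Jeffery (Lemmas 9 and 10 of \cite{chakraborty2018power}); the paper at hand does not supply a proof, so there is no author-written argument to compare against line by line. The proof there goes via the linear-combination-of-unitaries plus controlled-Hamiltonian-simulation construction of van Apeldoorn et al.\ (which is where the extra factor of $\log$ in the cost actually arises, not from phase-factor synthesis), whereas you propose QSVT with a direct polynomial approximation. QSVT is a legitimate alternative route and, with care, does recover the lemma; but two concrete points in your write-up are off.

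First, the subnormalization of the polynomial is wrong, and not just as bookkeeping. You correctly note that the eigenvalues of the block-encoded operator $B/\alpha$ sit in $[1/(\alpha\kappa),\,1/\alpha]$, but you then ask for $p_-(x)\approx \tfrac{1}{2\sqrt{\kappa}}x^{-1/2}$ on that interval with $|p_-|\le 1$ on $[-1,1]$. At the left endpoint $x=1/(\alpha\kappa)$ this target value is $\tfrac{1}{2\sqrt{\kappa}}\sqrt{\alpha\kappa}=\sqrt{\alpha}/2$, which exceeds $1$ as soon as $\alpha>4$; the requirements are mutually contradictory, so no such $p_-$ can exist in general. The correct target is $p_-(x)\approx \tfrac{1}{2\sqrt{\alpha\kappa}}\,x^{-1/2}$, which is bounded by $1/2$ on the interval and yields $p_-(B/\alpha)\approx \tfrac{1}{2\sqrt{\kappa}}B^{-1/2}$, hence a $(2\sqrt{\kappa},\cdot,\cdot)$-block-encoding of $B^{-1/2}$ as claimed. (Analogously $p_+(x)\approx\tfrac{\sqrt{\alpha}}{2}x^{1/2}$ for the second bullet.)

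Second, the error-propagation argument ``$O(d_{\pm}\epsilon)$'' cannot be right because it gives the \emph{same} tolerance on $\epsilon$ for both bullets (the degrees $d_-$ and $d_+$ have the same order), yet the lemma imposes $\epsilon=o(\delta\kappa^{-1.5}\log^{-3}(\cdot))$ for $B^{-1/2}$ and only $\epsilon=o(\delta\kappa^{-1}\log^{-3}(\cdot))$ for $B^{1/2}$. The discrepancy reflects that the error one must control is governed by the derivative of the (normalized) target function on the spectral interval, not by the polynomial degree. With the corrected normalization $p_-$, one has $|p_-'(x)|\lesssim \alpha\kappa$ on $[1/(\alpha\kappa),1/\alpha]$, so a perturbation of $B$ by $\epsilon$ perturbs $p_-(B/\alpha)$ by $O(\alpha\kappa\cdot\epsilon/\alpha)=O(\kappa\epsilon)$; demanding this be below $\delta/(2\sqrt{\kappa})$ (the normalization of the output block-encoding) gives exactly $\epsilon=O(\delta\kappa^{-1.5})$, matching the first hypothesis. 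For $p_+$ the derivative is smaller by roughly a factor $\sqrt{\kappa}$, explaining why the second hypothesis is weaker. Replacing the degree-based heuristic with this derivative-based analysis is the step you would need to fill in to make the argument go through.

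Your candid flag at the end --- that the explicit bounded polynomial is the crux --- is well placed; that piece follows from the standard Chebyshev-series estimates for $x^{\pm c}$ away from zero (as in Gily\'en et al.\ or Childs--Kothari--Somma), suitably rescaled to the $[1/(\alpha\kappa),1/\alpha]$ window. With the normalization and error-propagation corrections above, the QSVT route does prove the lemma.
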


\begin{lem}[Lemma 4 of \cite{chakraborty2018power}]
\label{lem:product block-encoding}
Suppose  for $i=1,2$ we have an $(\alpha_i,q_i,\epsilon_i)$-block-encoding of $A_i$
that is constructed in time $O(T_i)$. Then we can create
an $(\alpha_1\alpha_2,q_1+q_2,\alpha_1 \epsilon_2+\alpha_2\epsilon_1)$-block-encoding of $A_1A_2$ in time $O(T_1+T_2)$.
\end{lem}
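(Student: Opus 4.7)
The plan is to exhibit an explicit composition of $U_1$ and $U_2$ that acts as a block-encoding of $A_1 A_2$. Arrange the registers as (ancilla$_1$, ancilla$_2$, system), where ancilla$_i$ carries the $q_i$ ancilla qubits of $U_i$. Let $\widetilde{U}_i$ denote the unitary on the full space that acts as $U_i$ on (ancilla$_i$, system) and as the identity on the other ancilla register. Define $V := \widetilde{U}_1\,\widetilde{U}_2$. Since $\widetilde{U}_i$ is just $U_i$ with idle wires on the extra ancilla, implementing $V$ costs $O(T_1+T_2)$, which matches the time bound claimed in the lemma.

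Next I would verify the block-encoding property. Put $\widetilde{A}_i := \alpha_i(\langle 0|^{\otimes q_i}\otimes I)\,U_i\,(|0\rangle^{\otimes q_i}\otimes I)$, so $\|A_i - \widetilde{A}_i\|\le\epsilon_i$ by hypothesis and $\|\widetilde{A}_i\|\le\alpha_i$ because $\widetilde{A}_i/\alpha_i$ is a sub-block of a unitary. Applying $V$ to $|0\rangle^{\otimes q_1}|0\rangle^{\otimes q_2}|\psi\rangle$ for an arbitrary system state $|\psi\rangle$, the inner $\widetilde{U}_2$ leaves ancilla$_1$ in $|0\rangle^{\otimes q_1}$ and produces on (ancilla$_2$, system) the decomposition $|0\rangle^{\otimes q_2}(\widetilde{A}_2/\alpha_2)|\psi\rangle$ plus a component orthogonal to $|0\rangle^{\otimes q_2}$ on ancilla$_2$; applying $\widetilde{U}_1$ on top then acts as $U_1$ on (ancilla$_1$, system), so projecting the overall result onto $|0\rangle^{\otimes(q_1+q_2)}$ on both ancillas yields precisely $(\widetilde{A}_1/\alpha_1)(\widetilde{A}_2/\alpha_2)|\psi\rangle$. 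Hence $(\langle 0|^{\otimes(q_1+q_2)}\otimes I)\,V\,(|0\rangle^{\otimes(q_1+q_2)}\otimes I)=\widetilde{A}_1\widetilde{A}_2/(\alpha_1\alpha_2)$.

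For the error estimate, a two-line triangle-inequality split
\[ \|A_1 A_2 - \widetilde{A}_1\widetilde{A}_2\| \le \|A_1-\widetilde{A}_1\|\,\|\widetilde{A}_2\| + \|\widetilde{A}_1\|\,\|A_2-\widetilde{A}_2\| \le \alpha_2\epsilon_1 + \alpha_1\epsilon_2, \]
combined with the previous identification, shows that $V$ is an $(\alpha_1\alpha_2,\,q_1+q_2,\,\alpha_1\epsilon_2+\alpha_2\epsilon_1)$-block-encoding of $A_1 A_2$, closing the proof.

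The only real obstacle is the register bookkeeping in the middle step: one must choose the extensions $\widetilde{U}_i$ so that they share the system register but have disjoint ancillas, and then verify that the composition respects the $|0\rangle^{\otimes(q_1+q_2)}$ block structure. Evaluating $V$ on a product basis state $|0\rangle^{\otimes q_1}|0\rangle^{\otimes q_2}|\psi\rangle$, as above, is the cleanest way to avoid getting lost in tensor-factor ordering; everything else — cost counting and the error bound — is routine.
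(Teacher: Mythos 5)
This lemma is stated in the paper only as a citation (``Lemma 4 of \cite{chakraborty2018power}''); the paper gives no proof of its own, so there is no in-paper argument to compare against. Your construction --- extending $U_1,U_2$ to act on disjoint ancilla registers, forming $V=\widetilde U_1\widetilde U_2$, reading off the top-left block as $\widetilde A_1\widetilde A_2/(\alpha_1\alpha_2)$, and closing with a triangle-inequality estimate --- is exactly the standard proof of this statement and matches the argument in the cited reference.

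One small algebraic slip worth noting: the displayed split
\[
\|A_1 A_2-\widetilde A_1\widetilde A_2\|\le\|A_1-\widetilde A_1\|\,\|\widetilde A_2\|+\|\widetilde A_1\|\,\|A_2-\widetilde A_2\|
\]
is not a valid telescoping decomposition (expanding $(A_1-\widetilde A_1)\widetilde A_2+\widetilde A_1(A_2-\widetilde A_2)$ does not recover $A_1A_2-\widetilde A_1\widetilde A_2$). The correct telescope, e.g.
\[
A_1A_2-\widetilde A_1\widetilde A_2=(A_1-\widetilde A_1)A_2+\widetilde A_1(A_2-\widetilde A_2),
\]
leaves one un-tilde'd factor, so you must also use $\|A_2\|\le\|\widetilde A_2\|+\epsilon_2\le\alpha_2+\epsilon_2$, producing a harmless lower-order cross-term $\epsilon_1\epsilon_2$ in the final bound. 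The cited reference states the lemma with the same clean bound $\alpha_1\epsilon_2+\alpha_2\epsilon_1$, tacitly absorbing this cross-term, so your conclusion is consistent with the literature; just be aware the stated inequality as written requires that extra half-line of justification.
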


\begin{defn}[Quantum phase estimation] 
\label{sve problem} 
Let $A$ be an $n$-by-$n$ Hermitian matrix with eigenvalue decomposition $A= \sum_{k=1}^n \lambda_{k} |u_{k}\rangle \langle u_{k}|$. Let $\epsilon\in(0,1)$. The quantum phase estimation (QPE) problem with accuracy $\epsilon$ is defined as: 
Given access to
$\sum_{k=1}^n \beta_k \ket{u_k}$ 
to perform the mapping  
\be
 \sum_{k=1}^n  \beta_k  \ket{0} \ket{u_k}
 \mapsto \sum_{k=1}^n  \beta_k \ket{\tilde{\lambda}_k} \ket{u_k},
\ee
such that $|\tilde{\lambda}_k - \lambda_k| \leq \epsilon$ for all $k \in \{1,2,\ldots,n\}$.
\end{defn}

\begin{lem}[Theorem 27 of \cite{chakraborty2018power}]
\label{quantum SVE}
Let $\epsilon, \tilde{\epsilon}\in (0, 1)$, and $\epsilon' = \tilde{\epsilon} \epsilon /( 4 \log^2(1/\epsilon))$. 
Given an $(\alpha,q,\epsilon')$-block-encoding Hermitian matrix $A$ that is implemented in cost $O(T)$, then there is a quantum algorithm that solves the QPE problem of $A$ with accuracy $\epsilon$, with success probability at least $1-\tilde{\epsilon}$ in time
$
O( (T_{\rm in}+{\alpha}\epsilon^{-1} (q+T) ) {\rm poly}\log(1/\tilde{\epsilon})),
$
where $O(T_{\rm in})$ is the cost to prepare the 
initial state.
\end{lem}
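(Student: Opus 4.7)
The plan is to reduce QPE on a block-encoded Hermitian matrix to textbook QPE applied to a Hamiltonian-simulation unitary built from the block-encoding. First I would invoke the standard result that an $(\alpha,q,\epsilon')$-block-encoding of $A$ enables the implementation of $e^{iAt}$ up to spectral-norm error $\delta$ using $O(\alpha t+\log(1/\delta))$ queries to the block-encoding and $O((\alpha t+\log(1/\delta))(q+T))$ gates; this follows, for instance, from QSVT applied to the Jacobi--Anger (truncated Chebyshev) expansion of $\cos$ and $\sin$, and propagates the block-encoding error multiplicatively as $O(t\epsilon')$ on top of the truncation error $\delta$.

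Next I would apply textbook Kitaev/Cleve-style QPE to the unitary $U:=e^{iA/\alpha}$, whose eigenphases $\lambda/\alpha$ lie in $[-1,1]$. Using $m=\lceil\log_2(\alpha/\epsilon)\rceil+O(1)=O(\log(1/\epsilon))$ ancilla qubits and a final inverse QFT yields phase estimates to accuracy $2^{-m}$, i.e.\ eigenvalue estimates to accuracy $\epsilon$, with constant success probability. The dominant cost is the controlled simulation of $U^{2^k}$ for $k=0,\ldots,m-1$, whose total simulated duration is $\sum_k 2^k/\alpha=O(2^m/\alpha)=O(1/(\alpha\epsilon))$; multiplied by the $\alpha$ factor in the Hamiltonian-simulation cost this gives $O(\alpha\epsilon^{-1}(q+T))$ gates, on top of the one-time cost $T_{\rm in}$ of preparing the input state.

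The subtlety is the error propagation. Each controlled simulation must approximate $U^{2^k}$ in operator norm to precision $O(\epsilon/m)$ so that the accumulated perturbation feeding into the inverse QFT stays $O(\epsilon)$ and does not spoil the readout. Since each step incurs block-encoding error of order $2^k\epsilon'/\alpha$ together with truncation error $\delta$, choosing $\delta=O(\epsilon/m)$ (logarithmic in cost) and $\epsilon'=\tilde\epsilon\epsilon/(4\log^2(1/\epsilon))$ suffices: one $\log(1/\epsilon)$ factor absorbs the $m$ simulation blocks and a further $\log$ factor is consumed by amplifying the success probability. To boost the constant success probability of plain QPE to $1-\tilde\epsilon$, I would run $O(\log(1/\tilde\epsilon))$ independent copies in parallel registers and extract a coherent median via a comparator circuit, yielding the claimed $\polylog(1/\tilde\epsilon)$ overhead.

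The hardest part is this error bookkeeping: separating the block-encoding error (multiplicative in simulated time) from the QSVT truncation error (additive in time, only logarithmic in cost), propagating both through $m$ controlled simulations and through the inverse QFT, and finally through the median amplification, while keeping the required block-encoding precision no worse than $\tilde\epsilon\epsilon/\log^2(1/\epsilon)$ so that the runtime only picks up a $\polylog$ factor. Pinning down the exact logarithmic exponents in this chain is the technically delicate step; everything else is a direct composition of existing subroutines.
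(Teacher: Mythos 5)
The paper does not actually prove this lemma: it is imported verbatim as a citation (Theorem~27 of the block-encoding paper of Chakraborty, Gily\'en, and Jeffery), and the authors only append a one-line remark that the cited theorem, which is stated for singular values, adapts to Hermitian eigenvalues. Your sketch is therefore not a ``different route'' from the paper --- there is no internal proof to diverge from --- but it does match the route the cited source takes: block-encoding $\Rightarrow$ Hamiltonian simulation via QSVT $\Rightarrow$ controlled powers of $e^{iA/\alpha}$ $\Rightarrow$ Cleve-style QPE, with the block-encoding error propagating multiplicatively in simulated time and the truncation error staying logarithmic. This is also exactly the ``modification'' the paper alludes to: for Hermitian $A$ you can simulate $e^{iAt}$ directly and phase-estimate, rather than go through the singular-value decomposition.

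Two small corrections to your bookkeeping. First, an arithmetic slip: with $m=\lceil\log_2(\alpha/\epsilon)\rceil+O(1)$ you have $\sum_k 2^k/\alpha = O(2^m/\alpha)=O(1/\epsilon)$, not $O(1/(\alpha\epsilon))$; the Hamiltonian-simulation cost of $O(\alpha t)$ then correctly gives $O(\alpha/\epsilon)$. Second, the target for the accumulated operator-norm perturbation of the controlled simulations should be $O(\tilde\epsilon)$, not $O(\epsilon)$: the accuracy $\epsilon$ of the readout is fixed by the number of ancilla qubits, while the circuit perturbation feeds into the \emph{failure probability}, which you are trying to bound by $\tilde\epsilon$. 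So the per-block truncation error should be $\delta=O(\tilde\epsilon/m)$, and the choice $\epsilon'=\tilde\epsilon\epsilon/(4\log^2(1/\epsilon))$ makes the total block-encoding contribution $\epsilon'\cdot O(1/\epsilon)=O(\tilde\epsilon/\log^2(1/\epsilon))$. Both changes are logarithmic and do not affect the stated runtime. Finally, since the output must be a coherent superposition $\sum_j\beta_j|\tilde\lambda_j\rangle|E_j\rangle$ rather than a classical estimate, the success-probability boost is cleaner via the standard trick of adding $O(\log(1/\tilde\epsilon))$ extra ancilla qubits (or consistent phase estimation) than via a classical median over independent runs, which would need a comparator acting coherently across copies without collapsing the superposition over $j$.
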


In \cite{chakraborty2018power}, Theorem 27 is used to estimate the singular values. However, it is not hard to modify their algorithm to estimate the eigenvalues of Hermitian matrices.
The following lemma gives the complexity of solving linear systems in the framework of block-encoding.

\begin{lem}[Theorem 30 of \cite{chakraborty2018power}]
\label{lem:quantum linear solver}
Let $\kappa\geq 2$, and $H$ be a 
matrix with non-zero singular values lie in the range $[-1,-1/\kappa]\cup[1/\kappa, 1].$
Suppose that for $\delta = o(\epsilon/(\kappa^2 \log^3 (\kappa/\epsilon))$ we have an $(\alpha, q,\delta)$-block-encoding of $H$ that can be implemented using $T_U$ elementary gates. Also suppose that we can prepare an input state $\ket{\psi}$ which spans the eigenvectors of $H$ in time $O(T_{\rm in})$. Then there is a quantum algorithm that outputs a state that is $\epsilon$-close to $H^{-1}\ket{\psi}/\|H^{-1}\ket{\psi}\|$ at a cost
\be \label{lem-complexity:quantum linear solver}
\widetilde{O}(\kappa (T_{\rm in}+\alpha(q+T_U) \log^2(\kappa/\epsilon)) \log \kappa).
\ee
\end{lem}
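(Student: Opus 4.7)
The plan is to implement an approximate version of $H^{-1}$ applied to $\ket{\psi}$ by combining a polynomial approximation of the scalar function $x\mapsto 1/x$ with quantum singular value transformation (QSVT), and then using amplitude amplification to renormalize the output to $H^{-1}\ket{\psi}/\|H^{-1}\ket{\psi}\|$. Since $H$ is Hermitian with spectrum in $[-1,-1/\kappa]\cup[1/\kappa,1]$, inverting $H$ is equivalent to applying $1/x$ to each eigenvalue, which is precisely the setting for which QSVT was designed.

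First I would construct an odd polynomial $P(x)$ of degree $d=O(\kappa\log(\kappa/\epsilon))$ satisfying $|P(x)-c/x|\le O(\epsilon/\kappa)$ uniformly on $[-1,-1/\kappa]\cup[1/\kappa,1]$ for a suitable normalization $c=\Theta(1/\kappa)$, together with the global bound $|P(x)|\le 1$ on $[-1,1]$. Such $P$ can be built by truncating the Chebyshev expansion of a smoothed reciprocal, and is standard in the QSVT toolkit. Feeding $P$ into QSVT with the given $(\alpha,q,\delta)$-block-encoding of $H$ produces a unitary that block-encodes $P(H/\alpha)$ using $d$ calls to the block-encoding, at gate cost $\widetilde{O}(\alpha(q+T_U)\log^{2}(\kappa/\epsilon))$ per application. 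Applied to $\ket{0}^{\otimes(q+O(1))}\ket{\psi}$, the resulting state has its $\ket{0}$-branch proportional to $H^{-1}\ket{\psi}$ up to the polynomial approximation error, with amplitude $\Omega(1/\kappa)$.

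Next I would wrap this subroutine in $O(\kappa)$ rounds of (fixed-point) amplitude amplification, each round preparing $\ket{\psi}$ once at cost $O(T_{\rm in})$ and rerunning one QSVT circuit, so as to boost the success probability to $\Omega(1)$. The total cost is then $\widetilde{O}\bigl(\kappa\bigl(T_{\rm in}+\alpha(q+T_U)\log^{2}(\kappa/\epsilon)\bigr)\log\kappa\bigr)$, matching \eqref{lem-complexity:quantum linear solver}. For the error analysis, the block-encoding error $\delta$ accumulates linearly under QSVT with a multiplier of order $d\alpha$, and propagates further through amplitude amplification; the assumption $\delta = o(\epsilon/(\kappa^{2}\log^{3}(\kappa/\epsilon)))$ is exactly what is needed to keep the final $\ell_{2}$-distance from the target state below $\epsilon$.

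The main technical obstacle is the combined error analysis: producing an explicit polynomial with simultaneous pointwise approximation error on the spectral interval and uniform bound on $[-1,1]$ (non-trivial because $1/x$ has a pole at the origin that must be avoided by the regularization), proving that QSVT is robust to $\delta$-perturbations of the input block-encoding, and tracking how these perturbations interact with amplitude amplification. This robustness analysis is precisely what is developed in \cite{Gilyen-QSVT} and refined in \cite{chakraborty2018power}, from which the lemma is quoted verbatim.
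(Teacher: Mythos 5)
The central gap is the use of plain amplitude amplification where the cited result requires \emph{variable-time} amplitude amplification (VTAA). Let us track the costs carefully. The block-encoding presents $H/\alpha$, whose nonzero singular values lie in $[1/(\alpha\kappa), 1/\alpha]$, so the odd polynomial that approximates (a rescaling of) $1/x$ while staying bounded by $1$ on $[-1,1]$ must have degree $d = O(\alpha\kappa\log(\alpha\kappa/\epsilon))$, not $O(\kappa\log(\kappa/\epsilon))$ as you wrote. A single QSVT pass therefore already costs $\widetilde{O}(\alpha\kappa\,(q+T_U))$ gates, not $\widetilde{O}(\alpha\,(q+T_U)\log^2(\kappa/\epsilon))$. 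After this pass, the amplitude on the good flag is $\Theta(\|H^{-1}\ket{\psi}\|/\kappa) = \Omega(1/\kappa)$, so fixed-point amplitude amplification needs $\Theta(\kappa)$ rounds, each rerunning the full QSVT circuit. Multiplying these gives $\widetilde{O}(\alpha\kappa^{2}(q+T_U))$, a full factor of $\kappa/\log\kappa$ worse than the stated bound $\widetilde{O}\big(\alpha\kappa\,(q+T_U)\log^{2}(\kappa/\epsilon)\log\kappa\big)$. Your final line simply asserts the target complexity; it does not follow from the preceding steps.

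Theorem~30 of \cite{chakraborty2018power} closes exactly this gap by replacing the outer loop with variable-time amplitude amplification in the sense of Ambainis and Childs–Kothari–Somma: the algorithm is decomposed into stages that operate at geometrically increasing accuracies (equivalently, QSVT polynomials of geometrically increasing degree targeting successively smaller eigenvalue windows), so that eigenvectors with large eigenvalues are resolved cheaply and only the small-eigenvalue tail incurs the full degree-$\alpha\kappa$ cost. The net effect is that amplitude amplification's quadratic overhead collapses to a multiplicative $\polylog(\kappa)$ overhead — this is precisely the lone $\log\kappa$ factor visible in \eqref{lem-complexity:quantum linear solver}. Without VTAA (or some other mechanism such as adiabatic/eigenstate-filtering solvers that also achieve near-linear $\kappa$-dependence), the plain QSVT~+~AA pipeline you describe cannot reach the claimed complexity. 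The polynomial-construction and $\delta$-robustness parts of your sketch are fine; the missing ingredient is this variable-time amplification layer.
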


In the above lemma, 
the accuracy $\delta$ in the block-encoding of $H$ should be much
smaller than $\epsilon$, 
the accuracy to approximate the 
target state. 
%Note that in Proposition \ref{lem:block-encoding of M}, the accuracy of  the block-encoding of $M$ is determined by the accuracy $\delta$ of the block-encodings of $A,B$.So we should choose $\delta$ very small.
Usually this will not cause any
problem to the
complexity analysis
because $\delta$ appears as a logarithmic term.
Another point of Lemma \ref{lem:quantum linear solver}
is the assumption $\|H\|\leq 1$. For solving a linear
system $H \x = \b$, we can consider $(H/\|H\|) \x = \b/\|H\|$ instead.
By Fact \ref{fact of BE}, if we only have an $(\alpha,q,\delta)$-block-encoding of $H$, then
$\alpha$ in the complexity (\ref{lem-complexity:quantum linear solver}) should be divided by $\|H\|$.

\section{Quantum differential method for GEPs}
\label{section:Quantum differential equation method}

In this section,  based on the connection between GEPs and ODEs, we propose a quantum algorithm to solve Problem \ref{prob:problem 1}. The main idea is pretty straightforward as follows.
Consider the following linear system of ODEs with initial state $\x(0)$
\be \label{simple ODE}
B\frac{\d\x(t)}{\d t} = 2\pi i A \x(t).
\ee
Denote the eigenpairs of GEP $A\x=\lambda B\x$ 
as $\{(\lambda_j,|E_j\rangle): j=1,\ldots,n\}.$
If $B$ is invertible, then the solution is
$\x(t) = e^{2\pi iB^{-1}A t} \x(0)$. Thus if $\x(0)$ is a 
linear combination of the eigenvectors, say
$\x(0) = \sum_{j=1}^n \beta_j |E_j\rangle$, then
$\x(t) = \sum_{j=1}^n \beta_j e^{2\pi i\lambda_j t} |E_j\rangle$. 

To solve an ODE numerically, a useful method is discretization. If we 
discretize the time interval $[0,\T]$ via
$t_0=0,t_1=h, t_2=2h, \ldots, t_{p}=ph=\T$, the ODE (\ref{simple ODE}) reduces to a linear system whose solution is $(\x(t_1)^T,\ldots,\x(t_p)^T)^T$, where $T$ refers to the transpose operation. In a quantum computer, when a quantum linear solver is used, we will obtain the quantum state of this solution, which is proportional to
\be
\label{superposition solution}
\tilde{\x}  = \frac{1}{\sqrt{p}}
\sum_{l=1}^{p}\|\x(t_l)\|~|l\rangle |\x(t_l)\rangle = \frac{1}{\sqrt{p}}
\sum_{j=1}^n \beta_j \sum_{l=1}^{p}e^{2\pi i\lambda_j lh}|l\rangle |E_j\rangle.
\ee
Applying the inverse quantum Fourier transform to $|l\rangle$, we obtain the following expected state, which contains the information of eigenpairs, with high probability
\be \label{state after QFT}
\sum_{j=1}^n \beta_j |\tilde{\lambda}_j\rangle |E_j\rangle,
\ee
where $\tilde{\lambda}_j$ is an approximation of $\lambda_j$ up to 
additive error $1/(ph)$.
The error analysis here is
similar to that in quantum phase estimation, so we defer it to Appendix \ref{appendix:Error analysis}. 
For simplicity, we shall use $\rho(A,B)$
or just $\rho$ to denote an upper bound of the eigenvalues, i.e., $\rho\geq \max\{1,\max_j |\lambda_j|\}$.
As for $h$, we should make sure that
$|\lambda_jh| \leq 1$ for all $j$,
so $h = O(1/\rho)$.
As a result, we can set
$p = O(\rho/\epsilon) $.
In conclusion, the parameters we choose are
\be
\label{important parameters}
p = O(\rho/\epsilon), \quad
h = O(1/\rho), \quad
\T = O(1/\epsilon).
\ee

In (\ref{important parameters}), when determining the parameters, we only considered the error analysis from a similar idea to QPE. Later, we need to modify them by considering the error caused by the discretization method. The modification is slight, so (\ref{important parameters}) is almost our final choice.

\begin{rem}{\rm
A main ingredient of QPE is the Hamiltonian simulation. It is used in QPE to generate a state proportional to
$\sum_{j,l} \beta_j e^{2\pi i \lambda_j l/p} \ket{l}\ket{E_j}$. This state is similar to the one we obtained in (\ref{superposition solution}). Hamiltonian simulation only works for Hermitian matrices, so we cannot apply it to our problem directly. However, the original goal of Hamiltonian simulation is to simulate quantum systems, i.e., to solve the Schr\"{o}dinger equation. With the discovery of quantum linear solvers, we know how to solve differential equations in a quantum computer. So our main idea can be simply described as replacing Hamiltonian simulation by solving differential equations in QPE.
}\end{rem}

In a quantum computer, there are several general-purpose quantum algorithms for solving differential equations
\cite{berry2017quantum,berry2014high,childs2020quantum,childs2020high}.
The basic idea of these algorithms is to reduce the linear differential equations into a linear system of equations through some discretization methods, then use a quantum linear solver to obtain the solution state.
For the differential equation
(\ref{simple ODE}), the quantum algorithm proposed in
\cite{berry2017quantum} may not be suitable for our problem since this algorithm tries to approximate
$e^{iB^{-1}At}$ by Taylor expansion.
Even if it is easy to compute $B^{-1}$ in a quantum computer, the linear system they constructed depends on the matrix $B^{-1}A$.
So we still need to know the entries of $B^{-1}A$.
The quantum spectral method \cite{childs2020quantum}
and the higher order method \cite{berry2014high}
can be used to solve the differential equation (\ref{simple ODE}), while they are not appropriate for our purpose.
The reason is that the higher order method \cite{berry2014high} has worse dependence on the precision. The spectral method \cite{childs2020quantum} is based on Chebyshev polynomials.
The interpolation nodes they used are the Chebyshev-Gauss-Lobatto quadrature nodes, i.e.,
$t_l = \cos(l\pi/n)$, here we have to use the nodes $t_l=l(h)$. 
In the following, we will apply the spectral method based on the Fourier series to solve the differential system (\ref{simple ODE}). This method has been used in \cite{childs2020high} to solve some PDEs. Since the Fourier spectral method has not been used previously to solve ODEs in a quantum computer, a technical part below is the estimation of its complexity (including error analysis and estimation of condition number).

\subsection{Fourier approximation}

In this section, we aim to give an introduction to the Fourier spectral method to solve the linear system of ODEs (\ref{simple ODE}).
For more about this method, we refer to the book \cite{shen2011spectral}.
For any function $f(t)$, where $t\in[0,L]$, the $K$-truncated Fourier series of $f(t)$ is defined by
\be \label{Fourier series}
f_K(t) = \sum_{k=-K}^K c_k e^{2\pi i kt/L} = \sum_{k=0}^{2K} c_{k-K} e^{2\pi i (k-K)t/L} ,
\ee
where
\[
c_k = \frac{1}{L} \int_0^L f(t) e^{-2\pi i kt/L} \d t.
\]

\begin{prop} \cite[Theorem 2.3]{adcock2014resolution}
\label{lem:error of Fourier series}
Suppose $f$ is analytic, then there exist $R_f=O(\max|f(x)|)$ and $S=\cot^2(\pi/4L)$ such that
$
|f(t)-f_K(t)| \leq R_f S^{-K}
$ for all $t\in[0,L]$.
\end{prop}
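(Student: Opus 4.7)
The plan is to invoke the classical principle that the Fourier coefficients of an analytic function decay geometrically, with rate determined by how far $f$ can be continued off the real axis. Since Proposition \ref{lem:error of Fourier series} is quoted from \cite{adcock2014resolution} with the specific constant $S=\cot^2(\pi/4L)$, the strategy is to reconstruct their contour-shifting argument rather than to derive a generic strip estimate.

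First I would reduce the uniform error bound to a tail bound on Fourier coefficients. By the triangle inequality,
\[
|f(t)-f_K(t)| \leq \sum_{|k|>K} |c_k|,
\]
so it suffices to prove geometric decay $|c_k|\leq C\,S^{-|k|}$ with $C=O(\max|f|)$ and then sum the resulting geometric tail to obtain a bound of the form $\tfrac{2C}{S-1}S^{-K}$, which has exactly the advertised shape once constants are absorbed into $R_f$.

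Next I would bound $|c_k|$ by contour deformation. Because $f$ is analytic, it extends holomorphically to a complex neighborhood $\Omega$ of $[0,L]$. For $k>0$ the factor $e^{-2\pi i k z/L}$ decays exponentially as $\operatorname{Im} z$ becomes negative, so pushing the integration contour in the definition of $c_k$ downward inside $\Omega$ by a maximal admissible depth produces an exponentially small prefactor; the symmetric deformation upward handles $k<0$, while the two vertical connecting segments at $t=0$ and $t=L$ only contribute boundary terms damped by the same exponential weight. By the maximum principle the resulting prefactor can be bounded by a constant multiple of $\max_{t\in[0,L]}|f(t)|$, giving $R_f=O(\max|f|)$.

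The main obstacle is identifying the correct shape of $\Omega$ that produces precisely $S=\cot^2(\pi/4L)$ rather than the naive strip rate $e^{2\pi a/L}$ one would get from a rectangular extension. This is the Fourier-extension geometry of \cite{adcock2014resolution}: the relevant region is not a strip but a lens obtained by pulling back a disk under a tangent-type conformal map adapted to the interval $[0,L]$. Once such a map is in place, one tracks how it transforms the exponential weight $e^{-2\pi i k z/L}$ and optimizes the radius of the preimage disk; the value $\cot^2(\pi/4L)$ then emerges as the worst-case decay rate along the boundary of the extremal lens. I expect the bulk of the technical work to lie in verifying that $f$ continues analytically throughout this lens and in carrying through the boundary estimates on the vertical segments, whereas the final geometric sum that yields the stated bound is routine.
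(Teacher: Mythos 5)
The paper does not actually prove this proposition; it is quoted verbatim from \cite[Theorem~2.3]{adcock2014resolution}, so there is no in-paper argument to compare against. Evaluating your proposal on its own merits, there is a genuine gap at the central step.

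Your reduction to $\sum_{|k|>K}|c_k|$ and the idea of deforming the contour in $c_k=\tfrac1L\int_0^L f(t)e^{-2\pi i kt/L}\,\d t$ are both fine starting points. The problem is the assertion that \emph{``the two vertical connecting segments at $t=0$ and $t=L$ only contribute boundary terms damped by the same exponential weight.''} That is false for a non-periodic $f$. Parametrizing the segment at $t=0$ by $t=-is$, $s\in[0,a]$, the weight $e^{-2\pi i k(-is)/L}=e^{-2\pi k s/L}$ equals $1$ at $s=0$, so the segment contributes $\approx\tfrac{L}{2\pi k}f(0)$, and the segment at $t=L$ contributes $\approx-\tfrac{L}{2\pi k}f(L)$. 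Unless $f(0)=f(L)$ (and likewise all derivatives match), these do \emph{not} cancel; the net contribution is $O\!\left(\tfrac{f(L)-f(0)}{k}\right)$, i.e.\ only algebraic decay. This is precisely the boundary/Gibbs obstruction to geometric convergence of truncated Fourier series for non-periodic analytic functions, and no choice of contour depth removes it. Your argument, as written, would therefore prove only $|c_k|=O(1/k)$, not $|c_k|\lesssim S^{-|k|}$.

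The result you are trying to reconstruct is really a \emph{Fourier-extension} estimate: the function is approximated on a proper subinterval of the period, with an extension ratio $T>1$, and the rate $\cot^2(\pi/4T)$ comes out of a conformal-map/equioscillation analysis of the extension operator — not from a rectangular contour shift of the periodic Fourier coefficients. You gesture at exactly this (``a lens obtained by pulling back a disk under a tangent-type conformal map'') but then leave the essential work (``I expect the bulk of the technical work to lie in\dots'') undone, while the preceding contour paragraph is derived under an implicitly periodic assumption that is incompatible with the generality claimed. To make this rigorous you would need to (i) set up the Adcock--Huybrechs extension framework explicitly (approximation on a subinterval, periodic extension with period $L$), (ii) show the vertical boundary contributions vanish or cancel in that setup, and (iii) derive $\cot^2(\pi/4T)$ from the conformal map rather than assert it. As it stands the key step is missing and the surrounding heuristics point in a direction that would give the wrong decay rate.
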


\begin{fact}{\rm
\label{fact:norm bound}
Given $\a=(a_1,\ldots,a_n), \b=(b_1,\ldots,b_n)$
such that $|a_i-b_i|\leq R_fS^{-K}$ for all $i$,
then $\|\a-\b\|\leq \sqrt{n} R_fS^{-K}$. For us, $L = \T = O(1/\epsilon)$, so $S \approx 1/\epsilon^2$.
So to make sure the error is bounded by
$\epsilon$, it suffices to choose $K=O((\log \epsilon)^{-1}\log (R_fn/\epsilon))$. 
}\end{fact}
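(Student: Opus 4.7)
The fact splits cleanly into two independent claims: (i) the deterministic norm inequality $\|\a-\b\| \leq \sqrt{n}\,R_f S^{-K}$, and (ii) the asymptotic choice of $K$ that makes this bound $\leq \epsilon$. My plan is to dispatch (i) in one line and devote most of the work to unpacking the asymptotics in (ii), where the only nontrivial input is the size of $S = \cot^2(\pi/4L)$ when $L = \T = O(1/\epsilon)$.

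For (i), I would just expand the Euclidean norm and apply the pointwise hypothesis term by term:
\[
\|\a-\b\|^2 \;=\; \sum_{i=1}^n |a_i-b_i|^2 \;\leq\; \sum_{i=1}^n (R_f S^{-K})^2 \;=\; n\,R_f^2 S^{-2K},
\]
so taking square roots yields the claim. No further structure of the $a_i, b_i$ is used.

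For (ii), the task is to identify a $K$ making $\sqrt{n}\,R_f S^{-K} \leq \epsilon$, i.e.\ $S^K \geq \sqrt{n}\,R_f/\epsilon$, equivalently $K \geq \log(\sqrt{n}R_f/\epsilon)/\log S$. I therefore need a lower bound on $\log S$ in terms of $\log(1/\epsilon)$. Writing $L = \Theta(1/\epsilon)$ and using the Laurent expansion $\cot(x) = 1/x - x/3 + O(x^3)$ valid for small $x$, I obtain $\cot(\pi/(4L)) = 4L/\pi + O(1/L)$, and hence
\[
S \;=\; \cot^2\!\bigl(\pi/(4L)\bigr) \;=\; \tfrac{16 L^2}{\pi^2}\bigl(1 + O(1/L^2)\bigr) \;=\; \Theta(L^2) \;=\; \Theta(1/\epsilon^2).
\]
Taking logarithms gives $\log S = \Theta(\log(1/\epsilon)) = \Theta(|\log\epsilon|)$. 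Substituting back,
\[
K \;\geq\; \frac{\log(\sqrt{n}R_f/\epsilon)}{\log S} \;=\; O\!\left(\frac{\log(R_f n/\epsilon)}{|\log\epsilon|}\right),
\]
matching the stated bound (with $(\log\epsilon)^{-1}$ read as $1/|\log\epsilon|$, since $\epsilon \in (0,1)$).

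There is essentially no serious obstacle here; both parts are routine. The only bookkeeping issues worth flagging are keeping track of the sign of $\log\epsilon$, and absorbing the $\log\sqrt{n}$ into the numerator as $\tfrac12\log n = O(\log n)$ so that the final estimate appears as $\log(R_f n/\epsilon)$ rather than $\log(\sqrt{n}R_f/\epsilon)$. Neither affects the asymptotic statement.
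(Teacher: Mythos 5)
Your proof is correct. Part (i) is a direct termwise application of the Euclidean norm, and Part (ii) is the only natural way to deduce the choice of $K$: expand $\cot(\pi/(4L)) = 4L/\pi + O(1/L)$ to get $S = \Theta(L^2) = \Theta(1/\epsilon^2)$, take logs in $S^K \geq \sqrt{n}\,R_f/\epsilon$, and absorb $\log\sqrt{n}$ into $\log(R_fn/\epsilon)$. The paper states this as a Fact without proof, treating it as self-evident, and your argument is exactly what is being left implicit; there is no alternative route here. The only bookkeeping point you already flag correctly is reading $(\log\epsilon)^{-1}$ as $1/|\log\epsilon|$ for $\epsilon \in (0,1)$.
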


Combining (\ref{important parameters}), we now choose $p=O(\max\{\rho/\epsilon, (\log \epsilon)^{-1}\log (R_{\x}n/\epsilon) \} ) = \widetilde{O}(\rho/\epsilon)$ odd.\footnote{In the complexity analysis of our quantum algorithms below, we will ignore the logarithmic terms for simplicity, so we may think here that $p = O(\rho/\epsilon)$, i.e., equation (\ref{important parameters}). Choosing odd $p$ is to ensure that $(p-1)/2$ is an integer. The constant $R_{\x}$ is upper bounded $\|\x(t)\|=O(\kappa_E)$.}
For any $l\in\{0,1,\ldots,p-1\}$, from (\ref{Fourier series})
the $j$-th entry of $\x$ and its derivative at time $lh$ can be approximated by
\bea \label{fourier expanding}
\hat{x}_j(lh) &:=& 
\frac{1}{\sqrt{p}}\sum_{k=0}^{p-1} c_{jk} e^{2\pi i (k-\frac{p-1}{2}) \frac{lh}{\T}}
=\frac{1}{\sqrt{p}}\sum_{k=0}^{p-1} c_{jk} e^{2\pi i (k-\frac{p-1}{2}) \frac{l}{p}}
=
%(-1)^l\frac{e^{\pi i l/p}}{\sqrt{p}}
\frac{e^{\pi i \frac{(1-p)l}{p}}}{\sqrt{p}}
\sum_{k=0}^{p-1} c_{jk} e^{2\pi i \frac{kl}{p}},  \label{Fourier expanding} \\
\hat{x}_j'(lh) &=& \frac{2\pi i}{\sqrt{p}\T} \sum_{k=0}^{p-1} (k-\frac{p-1}{2}) c_{jk} e^{2\pi i (k-\frac{p-1}{2}) \frac{lh}{\T} } = \frac{2\pi ie^{\pi i \frac{(1-p)l}{p}}}{\sqrt{p}\T} \sum_{k=0}^{p-1} (k-\frac{p-1}{2}) c_{jk} e^{2\pi i \frac{kl}{p} },
 \label{fourier expanding1}
\eea
where $c_{jk}$ are the unknown Fourier coefficients we aim to compute.
In the above, we used the fact that $\T=ph$.
Here, we do not consider the case $l=p$, i.e., $\x(T)$. This makes the right hand sides of (\ref{fourier expanding}), (\ref{fourier expanding1})
the quantum Fourier transform.
The constant $1/\sqrt{p}$ is added for normalization.

In matrix form, denote 
\bea
C &=& (c_{jk})_{1\leq j\leq n, 0\leq k \leq p-1}, \label{matrix-notation1} \\
D &=& {\rm diag}\{k-\frac{p-1}{2}:0\leq k\leq p-1\}, \label{matrix-notation2} \\
F &=&  \frac{1}{\sqrt{p}}(e^{2\pi i kl/p})_{0\leq k,l \leq p-1}. \label{matrix-notation3}
\eea 
% \subsection{Fourier approximation}
% Choose $p$ even. Assume that $l\in\{0,1,\ldots,p\}$, and $h=1/(p+1)$,
% \bea
% x_j(lh) &=& \sum_{k=0}^p c_{jk} e^{2\pi i (k-\frac{p}{2}) lh}, \\
% x_j'(lh) &=& 2\pi i \sum_{k=0}^p (k-\frac{p}{2}) c_{jk} e^{2\pi i (k-\frac{p}{2}) lh}.
% \eea
% In matrix form, denote 
% \bea
% C &=& (c_{jk})_{1\leq j\leq n, 0\leq k \leq p}, \\
% F &=& (e^{2\pi i (k-\frac{p}{2}) lh})_{0\leq k,l \leq p}=(e^{2\pi i kl/(p+1)})_{0\leq k,l \leq p}
% {\rm diag}\{e^{-\pi i pl/(p+1)}:0\leq l\leq p\}, \\
% D &=& {\rm diag}\{k-\frac{p}{2}:0\leq k\leq p\}.
% \eea 
%We need to compute $C$.
By (\ref{fourier expanding}), (\ref{fourier expanding1}), when $l\geq 1$, we have \footnote{Here we use $\hat{\x}(t)$ to denote the approximation of $\x(t)$, the exact solution of the ODE (\ref{simple ODE}).}
\be
\label{approximation in matrix form}
\hat{\x}(lh) = e^{\pi i \frac{(1-p)l}{p}} C F |l\rangle,
\quad
\hat{\x}'(lh) = \frac{2\pi ie^{\pi i \frac{(1-p)l}{p}}}{\T} C D F |l\rangle.
\ee
When $l=0$, 
$
\x(0) = CF|0\rangle
$
is the initial condition.
From (\ref{simple ODE}),
we know that
\bea
\label{initial linear system}
\begin{cases}
\x(0) =   CF|0\rangle, \\
\displaystyle \frac{1}{\T} B C D F|l\rangle =  ACF|l\rangle, \quad 1\leq l \leq p-1.
\end{cases}
\eea

\begin{defn}[Vectorization of matrices]
Let $A=(a_{ij})_{m\times n}$ be a matrix, its vectorization is an $mn$-dimensional column vector
$
\VEC(A) = 
(a_{11},\ldots,a_{m1}, \ldots, a_{1n},\ldots,a_{mn})^T.
$
So for any three matrices $A,B,C$ we have $\VEC(ABC)=(C^T\otimes A) \VEC(B)$.
\end{defn}

Now let $\VEC(C) =(c_{10},\ldots,c_{n0},\ldots,c_{1(p-1)},\ldots,c_{n(p-1)})^T$ denote the vectorization of $C$, then we obtain a linear system of $\VEC(C)$ through (\ref{initial linear system}):
\be\label{linear system}
\left[ \begin{array}{ccccccc}
\langle 0| F^T \otimes I \\
\langle 1| F^T \otimes A - \T^{-1} \langle 1| F^T D \otimes B \\
\cdots \cdots \cdots\\
\langle p-1| F^T \otimes A - \T^{-1} \langle p-1| F^T D \otimes B 
\end{array} \right] \VEC(C) =
\left[ \begin{array}{ccccccc}
\x(0) \\
0 \\
\cdots \\
0 
\end{array} \right].
\ee
The above coefficient matrix is $np\times np$
\be \label{coeff matrix}
M = \left[ \begin{array}{ccccccc}
\langle 0| F^T \otimes I \\
(F_0^T \otimes I) (I\otimes A-\T^{-1} D\otimes B)
\end{array} \right],
\ee
where $F_0$ is obtained by deleting the first column of $F$.

Before starting more details, we first discuss the relationship between
$\VEC(C)$ and the state (\ref{state after QFT}) we aim to prepare.
By definition, the quantum state of $\VEC(C)$ equals
\be \label{state of C}
\ket{\VEC(C)} = \frac{1}{\|C\|_F} \sum_{j=1}^n \sum_{k=0}^{p-1} c_{jk}|k,j\rangle,
\ee
where $\|C\|_F$ is the Frobenius norm.
By the first identity in (\ref{approximation in matrix form}) and note that $F$ is unitary, we have
\be
\|C\|_F  = \sqrt{\sum_{l=0}^{p-1} \|\hat{\x}(lh)\|^2 }.
\ee
It is easy to see that
the superposition of the approximate solutions (see equation (\ref{superposition solution})) is 
\be \label{state of the solution}
\frac{1}{\|C\|_F}\sum_{l=0}^{p-1}\|\hat{\x}(lh)\|~ |l\rangle |\hat{\x}(lh)\rangle
% =
% \sum_{l=0}^{p-1}
% \sum_{j=1}^n \hat{x}_j(lh) |l,j\rangle 
= \frac{1}{\|C\|_F\sqrt{p}} 
\sum_{l=0}^{p-1} e^{\pi i \frac{(1-p)l}{p}}
\sum_{j=1}^n 
\sum_{k=0}^{p-1} c_{jk} e^{2\pi i kl/p}|l,j\rangle.
\ee
Denote
\be \label{unitaryUp}
U_p = {\rm diag}\{e^{\pi i \frac{(1-p)l}{p}} : 
l=0,\ldots,p-1\}.
\ee
Then we can obtain the state (\ref{state of the solution}) by applying  $U_pF$ to $|k\rangle$ in $\ket{\VEC(C)}$. 
To estimate
the eigenvalues (i.e., to obtain (\ref{state after QFT})), we only need to apply $F^{-1}$ to $|l\rangle$ in the state (\ref{state of the solution}).
Since $F,U_p$ can be implemented efficiently in the quantum circuits, it follows that to determine the overall cost of solving Problem \ref{prob:problem 1} by the above idea, it suffices to know the cost of solving the linear system (\ref{linear system}) to prepare $|\VEC(C)\rangle$. 
For this, we first need to estimate the error of the solution (i.e., the error between (\ref{superposition solution}) and (\ref{state of the solution})) and the condition number of the linear system (\ref{linear system}).

\subsection{Analysis of the error and condition number }

\begin{prop}[Error analysis]
\label{prop:Error analysis}
Let $\tilde{\x}$ be the superposition (\ref{superposition solution}) of the exact solutions of the
differential equation (\ref{simple ODE}), let $|C\rangle$ be the quantum state of the solution of the linear system (\ref{linear system}) obtained by a quantum linear solver up to precision $\epsilon$, then
\be
\Big\| |\tilde{\x}\rangle - (U_pF\otimes I)|C\rangle \Big\| \leq  2\epsilon.
\ee
\end{prop}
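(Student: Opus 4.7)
The plan is a standard triangle inequality split: one piece comes from the Fourier spectral approximation of the ODE, the other from the quantum linear solver. Let $\VEC(C_\star)$ denote the exact solution of the linear system (\ref{linear system}), and let $|\VEC(C_\star)\rangle$ be the corresponding normalized quantum state. Because $U_pF\otimes I$ is unitary, I will bound
\begin{equation*}
\Big\| |\tilde{\x}\rangle - (U_pF\otimes I)|C\rangle \Big\|
\le \Big\| |\tilde{\x}\rangle - (U_pF\otimes I)|\VEC(C_\star)\rangle \Big\|
+ \Big\| (U_pF\otimes I)(|\VEC(C_\star)\rangle - |C\rangle) \Big\|,
\end{equation*}
and argue that each of the two summands is at most $\epsilon$.

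For the second summand, the hypothesis is precisely that the quantum linear solver produces $|C\rangle$ which is $\epsilon$-close to $|\VEC(C_\star)\rangle$, and unitary invariance of the Euclidean norm turns this into an $\epsilon$ bound immediately.

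For the first summand, the idea is that $(U_pF\otimes I)|\VEC(C_\star)\rangle$ is, by the computation leading to (\ref{state of the solution}), exactly the normalized superposition $\frac{1}{\|C_\star\|_F}\sum_{l=0}^{p-1}\|\hat{\x}(lh)\|\,|l\rangle|\hat{\x}(lh)\rangle$ of the truncated Fourier approximations, whereas $|\tilde{\x}\rangle$ is the analogous superposition built from the exact solutions $\x(lh)$ (with its own normalization $\sqrt{\sum_l\|\x(lh)\|^2}$). I will bound the unnormalized difference entry-by-entry: Proposition \ref{lem:error of Fourier series} together with Fact \ref{fact:norm bound} gives $\|\x(lh)-\hat{\x}(lh)\|\le \sqrt{n}\,R_{\x}S^{-K}$ uniformly in $l$, hence
\begin{equation*}
\Big\| \sum_l |l\rangle\otimes(\x(lh)-\hat{\x}(lh)) \Big\|
\le \sqrt{p}\cdot\sqrt{n}\,R_{\x}S^{-K}.
\end{equation*}
With the parameter choice $p=\widetilde{O}(\rho/\epsilon)$ and $K$ as in Fact \ref{fact:norm bound} (where $L=\T=O(1/\epsilon)$, so $S\asymp 1/\epsilon^2$), this unnormalized difference is at most $\epsilon$ times the natural normalization $\sqrt{\sum_l\|\x(lh)\|^2}$. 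The final step converts this bound on the unnormalized vectors into one on the normalized quantum states by the elementary inequality $\bigl\||u\rangle-|v\rangle\bigr\|\le 2\|u-v\|/\max(\|u\|,\|v\|)$, which pays an extra factor of $2$ absorbed into the choice of constants.

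The only technically delicate point is the last step: pinning down that the two state normalizations $\|C_\star\|_F$ and $\sqrt{\sum_l\|\x(lh)\|^2}$ are close enough so that taking ratios does not blow up the estimate. This is handled by Proposition \ref{lem:error of Fourier series} applied componentwise together with a reverse triangle inequality on the norms, which shows the two normalizations agree up to a relative error of the same $O(\epsilon)$ order. Combining the two $\epsilon$-bounds yields the desired $2\epsilon$ overall, completing the proof.
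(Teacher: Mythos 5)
Your proposal is correct and follows essentially the same route as the paper: triangle-inequality split via the exact linear-system solution $|\VEC(C_\star)\rangle$, unitary invariance of the norm to dispatch the solver error, and Proposition~\ref{lem:error of Fourier series} plus Fact~\ref{fact:norm bound} combined with the cheap lower bound $\|\x(t_0)\|=1$ to show the Fourier-truncation error is negligible after normalization. One small bookkeeping note: after paying the factor of $2$ in the elementary inequality $\||u\rangle-|v\rangle\|\le 2\|u-v\|/\max(\|u\|,\|v\|)$, a naive "two $\epsilon$-bounds" count would give $3\epsilon$, not $2\epsilon$; this is harmless here only because the Fourier-side term is in fact $\ll\epsilon$ (exponentially small in $K$), as the paper makes explicit, so the $2\epsilon$ conclusion stands.
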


\begin{proof}
Denote $|\VEC(C)\rangle$ as the quantum state of the exact solution 
of the linear system (\ref{linear system}),
then
$\| |\VEC(C)\rangle -  |C\rangle \|\leq \epsilon$ by assumption.
Before normalization, denote
\[
\hat{\x} 
= \frac{1}{\sqrt{p}}
\sum_{l=0}^{p-1}\|\hat{\x}(t_l)\| \, |l\rangle |\hat{\x}(t_l)\rangle
= \frac{\|C\|_F}{\sqrt{p}} (U_pF\otimes I)
|\VEC(C)\rangle.
\]
By Proposition \ref{lem:error of Fourier series}
and Fact \ref{fact:norm bound}, $\|\x(t_l)-\hat{\x}(t_l)\|\leq \sqrt{n}R_{\x}S^{-p}$. 
From equation (\ref{superposition solution}),
\[
\tilde{\x}  = \frac{1}{\sqrt{p}}
\sum_{l=0}^{p-1}\|\x(t_l)\| \, |l\rangle |\x(t_l)\rangle .
%= \frac{1}{\sqrt{p}}\sum_{j=1}^n \alpha_j \sum_{l=0}^{p-1}e^{-2\pi i\lambda_j lh}|l\rangle |E_j\rangle.
\]
So we have
$
\|\tilde{\x} - \hat{\x}\| \leq \sqrt{n} R_{\x} S^{-p}.
$
After normailzation, we have
$
\||\tilde{\x}\rangle - |\hat{\x}\rangle\| \leq 2\sqrt{n}R_{\x}S^{-p}/\|\tilde{\x}\|.
$
Notice that $\|\x(t_0)\| = 1$, so
\[
\|\tilde{\x}\|^2
= \frac{1}{p} \sum_{l=0}^{p-1}
\|\x(t_l)\|^2 
% = 
%  \frac{1}{p} \sum_{l=0}^{p-1}
% \left\|\sum_{j=1}^n \beta_j e^{-2\pi i\lambda_j t_l} |E_j\rangle\right\|^2
\geq \frac{1}{p}.
\]
Thus
$
\||\tilde{\x}\rangle - |\hat{\x}\rangle\| 
\leq 2\sqrt{pn}R_{\x}S^{-p}\ll \epsilon$ 
since $p=O(\max\{\rho/\epsilon, (\log \epsilon)^{-1}\log (R_{\x}n/\epsilon) \} )$ and $S\approx 1/\epsilon^2$.
Combining the error 
between $|\VEC(C)\rangle$ and $|C\rangle$, the fact that $U_pF$ is unitary and
$|\hat{\x}\rangle = (U_pF\otimes I)
|\VEC(C)\rangle$, we obtain the claimed result.
\end{proof}

The proof of the following result on the condition number is the most technical one of this paper. The basic intuition is as follows: the condition number of $M$ is defined by $\|M\| \|M^{-1}\|$. 
% In this paper, if $M$ is singular, $M^{-1}$ refers to the pseudoinverse of $M$, so the condition number we aim to estimate equals the ratio of maximal singular value and the minimal nonzero singular value. 
It is easy to bound $\|M\|$ using the triangular inequality. As for $\|M^{-1}\|$, by definition it equals $\max_{\b:\|\b\|=1} \|M^{-1}\b\|$. Since $M$ is constructed from the discretization of the ODE (\ref{simple ODE}), it follows that $M^{-1}\b$ is closely related to the solution of this ODE. For the ODE (\ref{simple ODE}), there is an explicit formula for the solution. Based on this formula, we can bound $\|M^{-1}\|$.

\begin{prop}[Condition number]
\label{prop:condition number}
Let $A,B$ be the matrices that satisfy the conditions described in Problem \ref{prob:problem 1}.
Let $E$ be the matrix of the generalized eigenvectors with condition number $\kappa_E$, 
and $M$ be the matrix (\ref{coeff matrix}).
Then 
\be
\|M\|  =  \Theta(\|I\otimes A-\T^{-1} D\otimes B\|),
\ee
and
\be
\|M^{-1}\| \leq O(\kappa_E  \|B^{-1}\|/\epsilon).
\ee
Consequently, the condition number of $M$  is upper bounded by 
\be
O\Big(
\kappa_E
(\|A\|+\rho \|B\| ) \|B^{-1}\|/\epsilon
\Big).
\ee
\end{prop}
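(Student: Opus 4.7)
The plan is to bound $\|M\|$ and $\|M^{-1}\|$ separately, with the latter being the main technical step. For $\|M\|$, I would exploit the block structure of $M$ directly: since $F$ is the symmetric unitary DFT, the top block $\langle 0|F^T\otimes I$ has norm $1$, and the bottom block factors as $(F_0^T\otimes I)\,X$ with $X := I\otimes A - \tau^{-1}D\otimes B$, where $F_0^T$ is a submatrix of the unitary $F^T$ and so has norm at most $1$. Stacking vertically gives $\|M\|^2 \le 1 + \|X\|^2$, which together with $\|D\|/\tau \le (p-1)/(2\tau) = O(\rho)$ (from $h=O(1/\rho)$) yields $\|M\| = O(\|A\| + \rho\|B\|)$. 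The matching lower bound needed for the $\Theta$-statement follows from the expansion $M^\dagger M = P\otimes I + X^\dagger(P_\perp\otimes I)X$ with $P := (Fe_0)(Fe_0)^\dagger$ a rank-$1$ projection on $\mathbb{C}^p$, since removing a single direction from the image of $X$ changes its norm only by its top singular value restricted to that direction.

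The bound on $\|M^{-1}\|$ comes from a variation-of-parameters interpretation of the Fourier collocation system. For an arbitrary $b = (b_0,b_1,\ldots,b_{p-1})^T$, let $\VEC(C) = M^{-1}b$ and define the trigonometric polynomial
\begin{equation*}
\hat y(t) := \frac{1}{\sqrt{p}}\sum_{k=0}^{p-1} c_k\, e^{2\pi i(k-(p-1)/2)t/\tau},
\end{equation*}
where $c_k$ is the $k$-th column of $C$. Discrete Parseval (applied through the unitary $F$) gives $\|C\|_F^2 = \sum_{l=0}^{p-1}\|\hat y(lh)\|^2$, so the task reduces to controlling $\hat y$ at the collocation points. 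Reading off the rows of $M$ in this representation yields $\hat y(0)=b_0$ together with $B\hat y'(lh) - 2\pi i A\hat y(lh) = -2\pi i\, e^{-\pi i(p-1)l/p}\,b_l$ for $l\ge 1$. Since $\hat y$ and $\hat y'$ have the same Fourier degree, there is a unique trigonometric polynomial $g$ of that degree interpolating these prescribed right-hand sides (with $g(0)$ implicitly fixed by the system) for which $B\hat y'(t) = 2\pi i A\hat y(t) + g(t)$ holds as an identity of Fourier polynomials.

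Now apply variation of parameters, $\hat y(t) = e^{2\pi i B^{-1}A t}b_0 + \int_0^t e^{2\pi i B^{-1}A(t-s)}B^{-1}g(s)\,ds$, and use $B^{-1}A=E\Lambda E^{-1}$ with $\Lambda$ real to deduce $\|e^{2\pi i B^{-1}A t}\|\le\kappa_E$. This gives the pointwise estimate $\|\hat y(lh)\| \le \kappa_E\|b_0\| + \kappa_E\|B^{-1}\|\int_0^\tau\|g(s)\|\,ds$. Cauchy--Schwarz bounds the integral by $\sqrt{\tau\int_0^\tau\|g\|^2}$, and a further use of discrete Parseval on $g$ gives $\int_0^\tau\|g\|^2\,ds = (\tau/p)\sum_l\|g(lh)\|^2$, whose right-hand side is known explicitly from $b$ up to the unknown $\|g(0)\|$. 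Summing the pointwise bound over $l$, using $\tau=O(1/\epsilon)$, and assembling the estimates yields $\|M^{-1}b\|^2 = \sum_l\|\hat y(lh)\|^2 \le O(\kappa_E^2\|B^{-1}\|^2/\epsilon^2)\|b\|^2$, the claimed bound on $\|M^{-1}\|$. Multiplying by the bound on $\|M\|$ produces the condition number $O(\kappa_E(\|A\|+\rho\|B\|)\|B^{-1}\|/\epsilon)$.

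I expect the main obstacle to be controlling the ``free'' value $g(0)$, which is not prescribed directly by the linear system but is implicitly fixed by the requirement that $\hat y$ be a trigonometric polynomial of bounded degree (equivalently, by the periodicity $\hat y(\tau)=\hat y(0)$). The plan is to resolve it either by solving the periodicity relation $b_0 = e^{2\pi i B^{-1}A\tau}b_0 + \int_0^\tau e^{2\pi i B^{-1}A(\tau-s)}B^{-1}g(s)\,ds$ explicitly for $g(0)$ and bounding the result via $\|e^{2\pi i B^{-1}A t}\|\le\kappa_E$, or by a coercivity argument for the Fourier collocation operator that avoids $g(0)$ altogether; either route should give $\|g(0)\| = O(\kappa_E\|b\|)$, after which the remaining estimates are routine applications of Parseval and Cauchy--Schwarz.
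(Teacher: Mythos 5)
Your bound on $\|M\|$ matches the paper: same decomposition $M^\dagger M = |e\rangle\langle e|\otimes I + X^\dagger(P_\perp\otimes I)X$, same singular-vector argument for the lower bound (the paper makes this explicit by taking the top singular pair of $A+d_1B$ and computing $\|(I_0^T\ket{e})\otimes\ket{y}\| = \sqrt{(p-1)/p}$).

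For $\|M^{-1}\|$ your overall plan — re-interpret the collocation equations as an exact inhomogeneous ODE, apply variation of parameters, bound $\|e^{2\pi iB^{-1}At}\|\le\kappa_E$ — is the same as the paper's, but you make the ODE relation \emph{exact} by defining $g := B\hat y' - 2\pi iA\hat y$ as a trigonometric polynomial, whereas the paper uses a \emph{piecewise-constant} forcing $c(t)=b(l+1)$ on $(lh,(l+1)h]$ and simply identifies the collocation values with the ODE solution ("we can view $YF\ket{l}$ as an approximation..." followed by an equality $\x(lh)=YF\ket{l}$ in the norm computation). Your exact reformulation is arguably cleaner, but it trades the paper's un-justified identification step for a different difficulty, and that is where the argument has a genuine gap.

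Concretely: after Cauchy--Schwarz and discrete Parseval you need $\int_0^\tau\|g\|^2\,\mathrm{d}s = (\tau/p)\sum_l\|g(lh)\|^2$ to be $O(h\|b\|^2)$, which requires $\|g(0)\| = O(\|b\|)$ (not $O(\kappa_E\|b\|)$, and even that would already cost an extra $\kappa_E$). But $g(0)$ is \emph{not} $O(\|b\|)$ in general. Take $A=0$, $B=I$: periodicity forces $g_0=0$, hence $g(0)=-\sum_{l\ge1}g(lh)$, and for $b_l$ chosen to align the phases this gives $\|g(0)\|=\Theta(\sqrt{p}\,\|b\|)$. Plugging $\|g(0)\|=\Theta(\sqrt p\,\|b\|)$ into your chain of inequalities gives $\int_0^\tau\|g\|^2 = \Theta(\tau\|b\|^2)$ and eventually $\|M^{-1}\| = O(\kappa_E\|B^{-1}\|\sqrt p/\epsilon)$ — an extra factor of $\sqrt p = \sqrt{\rho/\epsilon}$ that the proposition does not allow. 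The underlying reason is that the trigonometric interpolant concentrates a Dirichlet-kernel spike at $t=0$; its $L^2$ norm is inflated by the spike even though the spike's actual contribution to $\int e^{-2\pi iB^{-1}As}B^{-1}g(s)\,\mathrm{d}s$ is small (the integral over the short support kills it). Cauchy--Schwarz cannot see this cancellation. The paper's piecewise-constant forcing sidesteps this because $\int_0^{lh}\|c\|\,\mathrm{d}s = h\sum_{j\le l}\|b(j)\| \le h\sqrt l\,\|b\|$ directly, without any quadrature/Parseval step. So either you need to abandon Cauchy--Schwarz in favour of a direct estimate of $\int_0^{lh} e^{-2\pi iB^{-1}As}B^{-1}g(s)\,\mathrm{d}s$ that exploits the oscillatory structure of $g$ mode by mode, or you need to adopt the paper's piecewise-constant surrogate and then rigorously quantify the discrepancy between $\x(lh)$ and $YF\ket{l}$ (a step the paper itself leaves informal). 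As written, the $g(0)$ "obstacle" you flag is not a loose end that the expected bound $\|g(0)\|=O(\kappa_E\|b\|)$ would close — it is the point where the argument fails.
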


\begin{proof}

Set $|e\rangle=\frac{1}{\sqrt{p}}(1,\ldots,1)^T \in \mathbb{R}^p$, and
$
I_0 = [|1\rangle, \ldots, |p-1\rangle]_{p\times (p-1)}
$, the $p\times p$ identity matrix with
the first column removed,
then $F_0=FI_0, F|0\rangle =  |e\rangle$. As for the condition number of $M$, we consider
\beas
M^\dag M &=& |e\rangle \langle e|\otimes I + (I\otimes A-\T^{-1}D\otimes B)^\dag (\overline{F}I_0 I_0^T F^T \otimes I)(I\otimes A-\T^{-1}D\otimes B) \\
&=& |e\rangle \langle e|\otimes I + (I\otimes A-\T^{-1}D\otimes B)^\dag (\overline{F} (I- |0\rangle\langle 0|) F^T \otimes I)(I\otimes A-\T^{-1}D\otimes B) \\
&=& |e\rangle \langle e|\otimes I +  (I\otimes A-\T^{-1}D\otimes B)^\dag 
(( I - |e\rangle \langle e|) \otimes I)(I\otimes A-\T^{-1}D\otimes B) .
\eeas
Since $\||e\rangle \langle e|\otimes I \| =\|(I-|e\rangle \langle e|) \otimes I \|=1$, we have
\be \label{cond:bound 1}
\|M\|^2 = \|M^\dag M\|\leq  1 +  \|I\otimes A-\T^{-1}D\otimes B\|^2.
\ee

On the other hand, since $M':=(I_0^T F \otimes I) (I\otimes A-\T^{-1} D\otimes B)$ is a submatrix of
$M$, we have $\|M\| \geq \|M'\|$.
For simplicity, denote $\T^{-1} D = -{\rm diag}(d_1,\ldots,d_p)$.
Also denote $N = I\otimes A -\T^{-1} D \otimes B$. Since $D$ is diagonal, we have
$\|N\| = \max_{1\leq j \leq p} \|A + d_j B\|$.
Without loss of generality, suppose
that $\|N\| = \|A + d_1 B\|$.
Denote the unit right and left singular vector
of $A + d_1 B$ as $|\x\rangle,|\y\rangle$, that is $(A + d_1 B)|\x\rangle = \|A + d_1 B\| \, |\y\rangle$. Then the corresponding 
unit right and left singular vector of $N$ are
$|0\rangle|\x\rangle, |0\rangle|\y\rangle$ respectively.
Thus
\be
\label{lower bound of spectral norm}
\|M\|\geq\|M'\| \geq \|M'|0\rangle|\x\rangle\|
= \|N\| \, \|(I_0^T F \otimes I)|0\rangle|\y\rangle\| 
= \|N\| \, \|(I_0^T \ket{e} \otimes \ket{\y})\| 
= \sqrt{\frac{p-1}{p}} \|N\|.
\ee
Combining (\ref{cond:bound 1}), we have
\be
\|M\| = \Theta(\|I\otimes A-\T^{-1} D\otimes B\|).
\ee

% Assume that the singular value decomposition
% $I\otimes A-\T^{-1} D\otimes B = U \Lambda V$.
% We order the singular values such that $\Lambda={\rm diag}(\sigma_1,\ldots,\sigma_{np})$
% with $\sigma_1\leq \cdots \leq \sigma_{np}$.
% Thus $\|I\otimes A-\T^{-1} D\otimes B\| = \sigma_{np}$.
% Note that
% $$
% \|M'\| = \max_{\|\x\|=1} \|\langle \x| M'\|
% =
% \max_{\|\x\|=1} \|\langle \x| (I_0^T \otimes I) 
% (F \otimes I)U \Lambda\|.
% $$
% Since $(F \otimes I)U$, there exists
% $\x$ such that
% $ \langle \x|(I_0^T \otimes I) 
% (F \otimes I)U = (0,\ldots,0,1)$.
% More precisely, 
% if we denote $\x'=(0,\ldots,0,1) ((F \otimes I)U)^{-1}$, then $\x=(0,\ldots,0,\x')$.
% In this case, $\|\x M'\| = \sigma_{np} = \|I\otimes A-\T^{-1} D\otimes B\|$. Thus
% $\|M\| \geq \|I\otimes A-\T^{-1} D\otimes B\|$.

As for $\|M^{-1}\|$, it equals
\bes
\|M^{-1}\| = \quad
\max_{\b\in \mathbb{C}^{np} :~M \y = \b, ~ \|\b\|=1} 
\quad \|\y\|.
\ees
For any unit vector $\b^T=(\b(0)^T,\b(1)^T,\ldots,\b(p-1)^T) \in \mathbb{C}^{np}$, where $\b(i)\in \mathbb{C}^n$ is a column vector, from
(\ref{linear system}) we know that
\bes
\left[ \begin{array}{ccccccc}
\langle 0| F^T \otimes I \\
\langle 1| F^T \otimes A - \T^{-1}\langle 1| F^T D \otimes B \\
\cdots \cdots \cdots \\
\langle p-1| F^T \otimes A - \T^{-1}\langle p-1| F^T D \otimes B 
\end{array} \right] \y =  
\left[ \begin{array}{ccccccc}
\b(0) \\
\b(1) \\
\cdots \\
\b(p-1)
\end{array} \right].
\ees
Denote $Y$ as the $n\times p$ matrix such that its vectorization is $\y$, that is $\VEC(Y)=\y$.
Then $\|\y\| = \|Y\|_F$. We also have
\beas
\begin{cases}
YF|0\rangle =  \b(0),  \\
AYF|l\rangle - \T^{-1} BYDF|l\rangle =   \b(l) , \quad 1\leq l \leq p-1.
\end{cases}
\eeas

From the discretization procedure,
we can view $YF|l\rangle$ as an approximation of the solution of the following linear system of ODEs
\bea \label{ODE:condition number}
\begin{cases}
\displaystyle B\frac{\d \x(t)}{\d t} = 2\pi i A \x(t) + 2\pi i \c(t), \\
\x(0) =\b(0) .
\end{cases}
\eea
where $\c(t) = \b(l+1)$ if $lh < t\leq (l+1)h$. 

The solution of (\ref{ODE:condition number})
equals
\be
\x(t) = e^{2\pi i B^{-1}A t} \b(0) + 2\pi i e^{2\pi i B^{-1}A t}
\int_0^t e^{-2\pi i B^{-1}A s}
B^{-1} \c(s) ds.
\ee
In the following analysis, we will suppose $A$ is invertible for simplicity. When $A$ is not invertible, the estimations below are also correct. We will give the detailed analysis of this case in Appendix \ref{appendix:Supplementary proof}.
Assume that $t\in(0,h]$
and $l\in\{1,\ldots,p-1\}$, then
\bea
\x(t+lh) &=& e^{2\pi i B^{-1}A (t+lh)} \b(0) + 2\pi i e^{2\pi i B^{-1}A (t+lh)}
\sum_{j=0}^{l-1}
\int_{jh}^{(j+1)h}
e^{-2\pi i B^{-1}A s}
B^{-1} \b(j+1) ds \nonumber \\
&& +\, 2\pi i e^{2\pi i B^{-1}A (t+lh)}\int_{lh}^{t+lh}
e^{-2\pi i B^{-1}A s}
B^{-1} \b(l+1) ds  \nonumber \\
&=& e^{2\pi i B^{-1}A (t+lh)} \b(0) 
-
\sum_{j=0}^{l-1}
e^{2\pi i B^{-1}A (t+(l-j)h)}
(e^{-2\pi i B^{-1}A h} 
- I )(B^{-1}A)^{-1}
B^{-1} \b(j+1)  \nonumber \\
&& 
-\, e^{2\pi i B^{-1}A t}
(e^{-2\pi i B^{-1}A t} 
-
I )
(B^{-1}A)^{-1}
B^{-1} \b(l+1).
\label{solution of ODE:cond}
\eea

Notice that $B^{-1}A = E \Lambda E^{-1}$ is diagonalizable and $t\in(0,h]$ in the solution (\ref{solution of ODE:cond}), we have
\beas
&& \|e^{2\pi i B^{-1}A (t+(l-j)h)}(e^{-2\pi i B^{-1}A h} - I)(B^{-1}A)^{-1}\|  \\
&\leq&  
\|E^{-1}\| \, \|E\| \max_j \frac{|e^{-2\pi i \lambda_j  h}-1|}{|\lambda_j|} \\
&=& \kappa_E \max_j \frac{2|\sin (\pi \lambda_j  h)|} {|\lambda_j|}
\leq 2\kappa_E \pi h.
\eeas
Similarly, we have
$\|(e^{-2\pi i B^{-1}A t} - I)(B^{-1}A)^{-1}\| \leq 2\kappa_E \pi h$.
Thus when $t+lh\in(lh, (l+1)h]$, we obtain
\beas
\|\x(t+lh)\|
&\leq&  
\|e^{2\pi i B^{-1}A (t+lh)} \b(0)\|
+\sum_{j=0}^{l-1}
\|e^{2\pi i B^{-1}A (t+(l-j)h)} (e^{-2\pi i B^{-1}A t} - I) (B^{-1}A)^{-1}B^{-1}\b(j+1)\| \\
&& +\, \|(e^{-2\pi i B^{-1}A t} - I) (B^{-1}A)^{-1}B^{-1}\b(l+1)\|
 \\
&\leq &  \kappa_E \|\b(0)\|
+\sum_{j=0}^{l-1}
2\kappa_E \pi h
\|B^{-1}\| \|\b(j+1)\|
+2\kappa_E \pi h
\|B^{-1}\| \|\b(l+1)\|
% &\leq& O\Bigg(\kappa_E \|B^{-1}\| \sum_{j=0}^{l+1} \|\b(j)\| \Bigg) 
% \leq O(\sqrt{l} \kappa_E \|B^{-1}\| ).
\eeas
The above upper bound is an inner product between $(\kappa_E, 2\kappa_E \pi h
\|B^{-1}\|, \ldots, 2\kappa_E \pi h
\|B^{-1}\|)$ and $(\|\b(0)\|, \ldots, \|\b(l+1)\|)$. By Cauchy–Schwarz inequality and note that the norm of the second vector is smaller than 1 by assumption, we have
\[
\|\x(t+lh)\| = O(\sqrt{l} \kappa_E \|B^{-1}\| h ).
\]

% In the last step, we have used the
% fact that $\sum_{j=0}^{p-1} \|\b(j)\|^2=1$ and the
% Cauchy–Schwarz inequality
% $\sum_{j=0}^{l+1} \|\b(j)\|
% \leq \sqrt{l+1}\sqrt{\sum_{j=0}^{l-1} \|\b(j)\|^2}
% \leq \sqrt{l+1}$.

Since $\x(lh) = YF|l\rangle$ and $F$ is unitary, we have
\bes
\|Y\|_F =  \|YF\|_F =
\sqrt{\sum_{l=0}^p \|YF|l\rangle\|^2}
=
\sqrt{\sum_{l=0}^p \|\x(lh)\|^2}
\leq  O(p\kappa_E \|B^{-1}\| h).
\ees
Since $ph = O(1/\epsilon)$,
it follow that
$\|M^{-1}\|=O(\kappa_E \|B^{-1}\|/\epsilon)$. Together with equation (\ref{cond:bound 1}), the condition number is upper bounded by
\[
O(\kappa_E \|B^{-1}\| \|I\otimes A-\T^{-1}D\otimes B\|/\epsilon)
=
O(\kappa_E \|B^{-1}\|
(\|A\|+\T^{-1}p\|B\| )/\epsilon),
\]
as claimed.
\end{proof}

\begin{rem} \label{remark:condition number}
{\rm Assume that $A$ is invertible. In the above proof,  when estimating $
\|\x(t+lh)\|$, 
if we bound 
\[
\|e^{2\pi i B^{-1}A (t+(l-j)h)} (e^{-2\pi i B^{-1}A t} - I) (B^{-1}A)^{-1}B^{-1} \| 
\]
by 
\[
\|e^{2\pi i B^{-1}A (t+(l-j)h)} (e^{-2\pi i B^{-1}A t} - I)\| \|(B^{-1}A)^{-1}B^{-1}\|
\leq 2\kappa_E \pi  h \|A^{-1}\|,
\]
then we have
$\|\x(t+lh)\|= O(\sqrt{l} \kappa_E \|A^{-1}\| h).$
Together with the above proof, we now have
$
\|\x(t+lh)\|= O(\sqrt{l} \kappa_E \min(\|A^{-1}\| ,\|B^{-1}\|) h ).
$
This means that the upper bounded of the condition number can be reduced to $O(\kappa_E \min(\|A^{-1}\| ,\|B^{-1}\|)
(\|A\|+\T^{-1}p\|B\| )/\epsilon)$ if $A$ is nonsingular.
}\end{rem}

At the end of this part, we use the technique of
linear combinations of unitaries (LCU, see \cite[Chapter 26]{childs2017lecture}) to construct the block-encoding of $M$.
% We can also use \cite[Lemma 52]{Gilyen-QSVT}
% to construct the block-encoding of $M$. 
% To use their result, we need to
% introduce some other definitions.
% Since the procedure is not so complicated 
% (only 5 matrices), here 
% we prefer to give the implementation
% directly.

\begin{prop}[Block-encoding of $M$]
\label{lem:block-encoding of M}
Let $M$ be the matrix  (\ref{coeff matrix}).
Given an $(\alpha_A,q_A,\delta)$-block-encoding
of $A$ in time $O(T_A)$, an $(\alpha_B,q_B,\delta)$-block-encoding
of $B$ in time $O(T_B)$, 
there is a quantum circuit that implements an $(\alpha,\max\{q_A,q_B\}+2\log p + 3,4\delta)$-block-encoding of $M$ in time $O(T_A+T_B)$,
% \bea
% O(\alpha_A(q_A+T_A)\|A\|^{-1}+\alpha_B  (q_B+T_B) \|B\|^{-1}), 
% \eea
where $\alpha 
=O(\alpha_A + \T^{-1}p \alpha_B )$.
\end{prop}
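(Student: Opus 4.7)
The plan is to express $M$ as a short linear combination of tensor products whose factors admit easy block-encodings, and then assemble the block-encoding of $M$ via the product rule of Lemma~\ref{lem:product block-encoding} followed by a three-term LCU. The key structural observation is the identity
$$
M \;=\; (P_0 F^T)\otimes I_n + (P_{\bar 0} F^T)\otimes A - \T^{-1}\,(P_{\bar 0} F^T D)\otimes B,
$$
where $P_0 = \ket{0}\bra{0}$ and $P_{\bar 0} = I - P_0$ are projectors on $\mathbb{C}^p$. I would verify this row-block by row-block: the top row block $\bra{0}F^T\otimes I_n$ of $M$ is reproduced only by the first summand (since $P_{\bar 0}\ket{0}=0$); and because $F_0 = FI_0$ gives $\bra{l-1}F_0^T = \bra{l}F^T$ for $l\geq 1$, the lower row blocks $\bra{l}F^T\otimes A - \T^{-1}\bra{l}F^T D\otimes B$ of $M$ match the second and third summands exactly.

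With this decomposition in hand, I would build block-encodings of the three $p\times p$ Fourier-side factors and tensor them against $I_n$, $A$, and $B$ respectively. The inverse QFT $F^T$ is unitary and so is its own $(1,0,0)$-block-encoding. Each projector $P_0$ or $P_{\bar 0}$ admits a $(1,1,0)$-block-encoding via one ancilla flipped by a multi-controlled detector of the state $\ket{0}$. The diagonal matrix $D$ is $1$-sparse with $\|D\|_{\max}=(p-1)/2$, so Lemma~\ref{sparse block-encoding} gives an exact block-encoding with normalization $(p-1)/2$ and $O(\log p)$ ancilla qubits. Chaining these through Lemma~\ref{lem:product block-encoding} yields exact block-encodings of $P_0 F^T$ and $P_{\bar 0}F^T$ with normalization $1$, and of $P_{\bar 0} F^T D$ with normalization $(p-1)/2$. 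Tensoring (again via the product rule, viewing $V_1\otimes V_2 = (V_1\otimes I)(I\otimes V_2)$) against $I_n$, the given $(\alpha_A,q_A,\delta)$-block-encoding of $A$, and the given $(\alpha_B,q_B,\delta)$-block-encoding of $B$ produces block-encodings of the three summands with respective normalizations $1,\;\alpha_A,\;(p-1)\alpha_B/2$ and errors $0,\;\delta,\;(p-1)\delta/2$. A three-term LCU with coefficients $(1,1,-\T^{-1})$ then combines them into a block-encoding of $M$ with normalization $\alpha = 1 + \alpha_A + \T^{-1}(p-1)\alpha_B/2 = O(\alpha_A + \T^{-1}p\alpha_B)$, using $\lceil\log 3\rceil = 2$ ancillas for the LCU select register together with an additional $O(\log p)+O(1)$ ancilla from the $D$-block-encoding and the two projector circuits, landing inside the stated $\max\{q_A,q_B\}+2\log p+3$ ancilla budget. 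The circuit depth is $O(T_A+T_B)$ plus polylog overhead from $F^T$, $D$, and the projector circuits.

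The main obstacle I expect is the error bookkeeping. The tensor rule inflates the $\delta$ from the $B$-block-encoding by the factor $\|D\|=(p-1)/2$ in the third summand, so the raw weighted LCU sum of errors is $0+\delta+\T^{-1}(p-1)\delta/2$. The $4\delta$ bound in the statement then requires carefully tracking the small constants in Lemma~\ref{lem:product block-encoding} and exploiting the fact that the $\T^{-1}$ coefficient in the LCU was placed precisely to offset the $\|D\|$-inflation relative to the final normalization $\alpha$, so that the three LCU summands together with the four nested block-encoding uses of $A$, $B$, $D$, and the boundary projector contribute at most $4\delta$ in total matrix-norm error. All remaining ingredients---the QFT implementation of $F^T$, the explicit circuits for $P_0$ and $P_{\bar 0}$, and the prepare-select-unprepare pattern of the LCU---are standard and contribute only polylog overhead absorbed into $O(T_A+T_B)$.
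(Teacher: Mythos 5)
Your proposal is correct, and the overall strategy---product block-encodings assembled by LCU---matches the paper's. But your decomposition of $M$ is genuinely different and, I'd say, cleaner. The paper writes $M$ as a five-term sum: the two ``full'' blocks $(F^T\otimes I)(I\otimes A)$ and $(F^T\otimes I)(\tau^{-1}D\otimes B)$, plus three correction terms built from $\left[\begin{smallmatrix}\langle 0|F^T\otimes I\\0\end{smallmatrix}\right]$ that repair the $l=0$ row by cancellation. You instead introduce the complementary projectors $P_0$, $P_{\bar 0}$ on the Fourier register and get the three-term identity $M=(P_0F^T)\otimes I+(P_{\bar 0}F^T)\otimes A-\tau^{-1}(P_{\bar 0}F^TD)\otimes B$ with no cancellation bookkeeping. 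This buys a smaller LCU select register ($\lceil\log 3\rceil=2$ ancillas vs.\ the paper's $\lceil\log 5\rceil=3$), at the mild extra cost of explicitly block-encoding the rank-1 projector $P_0$ and its complement, which you handle correctly. Both routes reach the same normalization $\alpha=O(\alpha_A+\tau^{-1}p\,\alpha_B)$, the same $O(T_A+T_B)$ runtime, and fit within the stated ancilla budget $\max\{q_A,q_B\}+2\log p+3$.

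One remark on the error constant, which you flag honestly: tensoring the $(\alpha_B,q_B,\delta)$-block-encoding of $B$ against the exact block-encoding of $D$ (normalization $(p-1)/2$) inflates the error to $(p-1)\delta/2$ via Lemma~\ref{lem:product block-encoding}, and after weighting by the LCU coefficient $\tau^{-1}$ the contribution is $(p-1)\delta/(2\tau)\approx\rho\delta/2$, which is $\le\delta$ only when $\rho=O(1)$. The paper's own proof has exactly the same soft spot---it records the error of $V_2=(F^T\otimes I)(U_3\otimes U_2)$ as $\delta$, glossing over the $(p-1)/(2\tau)$ factor---so the ``$4\delta$'' figure is an inherited imprecision of the statement rather than a gap specific to your argument.
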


\begin{proof}
We decompose $M$ as follows
\bea \label{decomposition of M}
M &=& (F^T \otimes I) (I\otimes A)
-
(F^T \otimes I) (\tau^{-1}D\otimes B)
+
\left[ \begin{array}{ccccccc}
\langle 0| F^T \otimes I \\
0
\end{array} \right] \nonumber \\
&& -\,
\left[ \begin{array}{ccccccc}
\langle 0| F^T \otimes I \\
0
\end{array} \right](I\otimes A)
+
\left[ \begin{array}{ccccccc}
\langle 0| F^T \otimes I \\
0
\end{array} \right](\tau^{-1}D\otimes B).
\eea
The basic idea is to construct the block-encoding of each term, then use LCU to create the block-encoding of $M$.

Since $\tau^{-1}D$ is diagonal, an $((p-1)/2\tau,\log p,0)$-block-encoding can be constructed naively, for example by 2-dimensional rotations.
Note that
\[
\left[ \begin{array}{ccccccc}
\langle 0| F^T \otimes I \\
0
\end{array} \right]
=\left[ \begin{array}{ccccccc}
\langle 0| F^T  \\
0
\end{array} \right]\otimes I
\]
and $F$ is unitary, so we naturally
have a $(1,\log p,0)$-block-encoding of 
$\left[ \begin{array}{ccccccc}
\langle 0| F^T\otimes I \\
0
\end{array} \right]$.
Together with the  above two block-encodings and the block-encodings of $A,B$, we can construct the block-encodings of three other terms by 
Lemma \ref{lem:product block-encoding}.

% Now we have the block-encodings of the following 5 matrices
% \[
% (F^T \otimes I) (I\otimes A),
% (F^T \otimes I) (\tau^{-1}D\otimes B),
% \left[ \begin{array}{ccccccc}
% \langle 0| F^T \otimes I \\
% 0
% \end{array} \right],
% \left[ \begin{array}{ccccccc}
% \langle 0| F^T \otimes I \\
% 0
% \end{array} \right](I\otimes A),
% \left[ \begin{array}{ccccccc}
% \langle 0| F^T \otimes I \\
% 0
% \end{array} \right](\tau^{-1}D\otimes B).
% \]
% By the technique of  LCU, we can construct
% the block-encoding of $M$, which is a linear combination of the above five matrices.

Before we show the details of the LCU procedure, we first introduce some notation of the block-encodings. Let $U_1,U_2,U_3$ be the block-encodings of $A,B,\T^{-1}D$ respectively. Then
$V_1:=(F^T\otimes I)(I\otimes U_1)$ is an
$(\alpha_A, q_A+\log p,\delta)$-block-encoding of $(F^T\otimes I)(I\otimes A)$. Notice
that $U_3\otimes U_2$ is an $(\alpha_B(p-1)/2\tau,q_B+\log p,\delta)$-block-encoding of $\T^{-1}D\otimes B$, so $V_2:=(F^T\otimes I)(U_3\otimes U_2)$
is an $(\alpha_B(p-1)/2\tau,q_B+\log p,\delta)$-block-encoding 
of $(F^T\otimes I)(\T^{-1}D\otimes B)$. 
Let $V_3$ be an $(1,\log p,0)$-block-encoding of $\left[ \begin{array}{ccccccc}
\langle 0| F^T \otimes I \\
0
\end{array} \right]$. 
As for $\left[ \begin{array}{ccccccc}
\langle 0| F^T \otimes I \\
0
\end{array} \right](I\otimes A),
$
by Lemma \ref{lem:product block-encoding},
we can construct an $(\alpha_A,q_A+2\log p,
\delta)$-block-encoding, denoted as $V_4$.
The cost is $O( T_A)$.
Finally, by Lemma \ref{lem:product block-encoding}
again, 
we can create an $(
\alpha_B(p-1)/2\tau, q_B+2\log p , \delta
)$-block-encoding of $\left[ \begin{array}{ccccccc}
\langle 0| F^T \otimes I \\
0
\end{array} \right](D\otimes B)$
in time $O( T_B )$.
Denote this block-encoding as $V_5$.
% by Corollary \ref{product block-encoding-2},
% we can construct an $(2\|A\|,q_A+2\log n + 2,
% \sqrt{2}\epsilon)$-block-encoding, denoted as $V_4$.
% The cost is $O(\alpha_A(q_A+T_A)\|A\|^{-1})$.
% Finally, by Corollary \ref{product block-encoding-2}
% again, 
% we can create an $(
% (p-1)\|B\|/\tau, q_B+2\log n + 2, \sqrt{2}\epsilon
% )$-block-encoding of $\left[ \begin{array}{ccccccc}
% \langle 0| F^T \otimes I \\
% 0
% \end{array} \right](D\otimes B)$
% in time $O(\alpha_B  (q_B+T_B) \|B\|^{-1})$.
% Denote this block-encoding as $V_5$.

To obtain the block-encoding of $M$, by (\ref{decomposition of M}) it suffices to apply LCU to implement
$V := 
\alpha_A V_1-(\alpha_B(p-1)/2\T)V_2 + V_3 
- \alpha_A V_4+(\alpha_B(p-1)/2\T)V_5$.
The procedure is the claimed
block-encoding. Let
$\alpha = \alpha_A + \alpha_B(p-1)/2\T + 1
+ \alpha_A + \alpha_B(p-1)/2\T$ denote the absolute sum of the coefficients, then for any state $|\phi\rangle$, the LCU produce gives a state of the form
$\alpha^{-1} |0\rangle\otimes V|\phi\rangle + |0\rangle^\bot$.
This is an $(\alpha,\max\{q_A,q_B\}+2\log p + 3,4\delta)$-block-encoding of $M$.
Here the other $3=\lceil \log 5\rceil$ ancilla qubits come from the LCU.
\end{proof}

% \begin{lem}[Lemma 4 of \cite{chakraborty2018power}]
% \label{lem:product block-encoding}
% Suppose  for $i=1,2$ we have an $(\alpha_i,q_i,\epsilon_i)$-block-encoding of $A_i$
% that is obtained in time $O(T_i)$. Then we can construct
% an $(\alpha_1\alpha_2,q_1+q_2,\alpha_1 \epsilon_2+\alpha_2\epsilon_1)$-block-encoding of $A_1A_2$ in time $O(T_1+T_2)$.
% \end{lem}

\subsection{Main result}

With the above preliminaries, we now state
our main theorem as follows.

\begin{thm}
\label{thm:GEP by ODE-general}
Let $\epsilon \in(0,1)$, $\delta = o(\epsilon/(\kappa_M^2 \log^3 (\kappa_M/\epsilon))$, where $\kappa_M$ is the condition number of $M$.
Suppose we have an $(\alpha_A,q_A,\delta)$-block-encoding of $A$ that is constructed in time $O(T_A)$,
an $(\alpha_B,q_B,\delta)$-block-encoding of $B$ that is constructed in time $O(T_B)$.
Denote 
the generalized eigen-pairs of $A\x=\lambda B\x$ as $\{(\lambda_j,|E_j\rangle): j=1,\ldots,n\}$,
the condition number of the matrix of generalized eigenvectors as $\kappa_E$.
Suppose that $B$ is invertible,
$B^{-1}A$ is diagonalizable and 
all $\lambda_j$ are real.
Then given access to copies of the state
$\sum_j\beta_j|E_j\rangle$ that is prepared in time $O(T_{\rm in})$, there is a quantum algorithm that returns a state proportional to
$\sum_j\beta_j|\tilde{\lambda}_j\rangle|E_j\rangle$ in time
% \be \label{result of main thm}
% % \widetilde{O}(\kappa_M(T_0 +  (\alpha_A + \rho(A,B) \alpha_B )(T_A+T_B)))
% \widetilde{O}\left(\kappa_M T_{\rm in} + 
% \frac{\kappa_E \rho (\alpha_A + \rho \alpha_B )(T_A+T_B) \min(\|A^{-1}\| ,\|B^{-1}\|) }{\epsilon}
% \right)
% \ee
\be \label{result of main thm}
% \widetilde{O}(\kappa_M(T_0 +  (\alpha_A + \rho(A,B) \alpha_B )(T_A+T_B)))
\widetilde{O}\Bigg(
\frac{ \kappa_E }{\epsilon}
\Big(
(\alpha_A + \rho \alpha_B )( T_{\rm in} + T_A+T_B)  
\Big)
\|B^{-1}\|
\Bigg),
\ee
where $|\tilde{\lambda}_j-\lambda_j|\leq \epsilon$ and $\rho = \max\{1,\max_j |\lambda_j|\}$.
\end{thm}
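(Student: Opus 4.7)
The proof is essentially a direct assembly of the four ingredients built up in this section: Proposition \ref{lem:block-encoding of M} supplies a block-encoding of the coefficient matrix $M$, Proposition \ref{prop:condition number} bounds its norm and condition number, Proposition \ref{prop:Error analysis} certifies that the quantum state of the linear-system solution is close to the superposition $|\tilde{\x}\rangle$, and Lemma \ref{lem:quantum linear solver} performs the actual solve. The algorithm therefore runs in four steps. First, prepare the right-hand side of (\ref{linear system}), which is simply $|0\rangle\otimes|\phi\rangle$ on $(\log p + \log n)$ qubits, in time $O(T_{\rm in})$. Second, invoke Lemma \ref{lem:quantum linear solver} on $M$ using the block-encoding from Proposition \ref{lem:block-encoding of M} to produce $|C\rangle$, an $\epsilon$-approximation of $|\VEC(C)\rangle$. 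Third, apply the efficient unitary $U_p F$ from (\ref{unitaryUp}) to the first register so that, by Proposition \ref{prop:Error analysis}, the state is $2\epsilon$-close to the solution superposition (\ref{superposition solution}). Fourth, apply the inverse quantum Fourier transform on the first register to read out the eigenphases, obtaining $\sum_j\beta_j|\tilde{\lambda}_j\rangle|E_j\rangle$ with $|\tilde{\lambda}_j-\lambda_j|\leq\epsilon$ by the standard phase-estimation analysis (Appendix \ref{appendix:Error analysis}).

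The complexity follows by substituting the parameters of this section into (\ref{lem-complexity:quantum linear solver}). From Proposition \ref{lem:block-encoding of M} the block-encoding normalization is $\alpha = O(\alpha_A + \rho\,\alpha_B)$, since $\tau^{-1}p = O(\rho)$ by (\ref{important parameters}), and the per-call cost is $O(T_A+T_B)$. From Proposition \ref{prop:condition number}, $\|M\| = \Theta(\|A\|+\rho\|B\|)$ and $\kappa_M = O(\kappa_E(\|A\|+\rho\|B\|)\|B^{-1}\|/\epsilon)$. Because Lemma \ref{lem:quantum linear solver} is stated for $\|H\|\leq 1$, we apply it to $M/\|M\|$; by Fact \ref{fact of BE} and the remark following Lemma \ref{lem:quantum linear solver} this replaces $\alpha$ by $\alpha/\|M\| = O((\alpha_A+\rho\alpha_B)/(\|A\|+\rho\|B\|))$ while leaving $\kappa_M$ unchanged. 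The product $\kappa_M\cdot\alpha/\|M\|$ then collapses to $O(\kappa_E\|B^{-1}\|(\alpha_A+\rho\alpha_B)/\epsilon)$, and using $\|A\|\leq\alpha_A$, $\|B\|\leq\alpha_B$ to absorb the $\kappa_M\,T_{\rm in}$ term into the dominant factor, the total runtime equals (\ref{result of main thm}). The final $U_p F$ and inverse QFT cost only $\operatorname{polylog}(np)$ and are absorbed into the $\widetilde{O}$.

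The main obstacle is less any single step than the simultaneous error bookkeeping. Four sources of error must all stay $O(\epsilon)$ without inflating the runtime: the Fourier truncation error $\sqrt{np}\,R_{\x}\,S^{-p}$ from Fact \ref{fact:norm bound}, which is forced below $\epsilon$ because the parameter choice in (\ref{important parameters}) already gives $p=\widetilde{\Omega}(\rho/\epsilon)$; the read-out resolution $1/(ph) = O(\epsilon)$ of the inverse QFT, which is precisely what dictates $\tau = ph = \Theta(1/\epsilon)$; the block-encoding tolerance, handled by the hypothesis $\delta = o(\epsilon/(\kappa_M^2\log^3(\kappa_M/\epsilon)))$, which only enters logarithmically in the cost of the $A$ and $B$ block-encodings; and the linear-solver tolerance itself, set to $\epsilon$ and propagated through the unitary $U_p F$ in Proposition \ref{prop:Error analysis}. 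A secondary subtlety worth checking is that the success amplitude of the linear solve is not exponentially small: the bound $\|\tilde{\x}\|\geq 1/\sqrt{p}$ proved inside Proposition \ref{prop:Error analysis} implies that the normalized target state has constant overlap with the register on which $U_p F$ acts, so no extra amplitude amplification beyond what is already internal to Lemma \ref{lem:quantum linear solver} is required.
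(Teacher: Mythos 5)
Your proof follows essentially the same route as the paper's: construct the block-encoding of $M$ via Proposition \ref{lem:block-encoding of M}, bound $\kappa_M$ via Proposition \ref{prop:condition number}, solve the linear system via Lemma \ref{lem:quantum linear solver}, and finish with the efficient unitary $F^{-1}U_p F\otimes I$. Your expansion of the complexity — explicitly identifying $\|M\|=\Theta(\|A\|+\rho\|B\|)$, noting that Lemma \ref{lem:quantum linear solver} must be applied to $M/\|M\|$, observing that $\kappa_M\cdot\alpha/\|M\|$ collapses because $\|M^{-1}\|$ carries the $\kappa_E\|B^{-1}\|/\epsilon$ factor, and absorbing $\kappa_M T_{\rm in}$ via $\|M\|\le O(\alpha_A+\rho\alpha_B)$ — is the same calculation the paper performs in display (\ref{overall-cost}), just spelled out more carefully; your closing remark on the success amplitude being at least $1/\sqrt{p}$ (so no hidden post-selection cost) is a useful sanity check that the paper leaves implicit.
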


\begin{proof}
By (\ref{important parameters}),
$p=O(\rho/\epsilon), \T = O(1/\epsilon)$. By Proposition \ref{prop:condition number}, the condition number $\kappa_M$ of 
$M$ is bounded by
$O( \epsilon^{-1}  \kappa_E
(\|A\|+\rho \|B\| )
\|B^{-1}\|)$.
By Proposition \ref{lem:block-encoding of M}, we can construct an
$(O(\alpha_A + \rho  \alpha_B ),\max\{q_A,q_B\}+2\log p + 3,4\delta)$-block-encoding of the coefficient matrix
of the linear system (\ref{linear system}) in time 
$
O( T_A +T_B ), 
$
By Lemma \ref{lem:quantum linear solver}, the quantum state of $\VEC(C)$ can be produced in time
\bea
&& \widetilde{O}\left(\kappa_M\left(T_{\rm in} + 
\frac{(\alpha_A + \rho  \alpha_B )(T_A+T_B)}
{\|M\|} \right)\right) \nonumber \\
% &=&
% \widetilde{O}\left(\kappa_M T_{\rm in} + 
% \frac{\kappa_E  (\alpha_A + \rho  \alpha_B )(T_A+T_B) \min(\|A^{-1}\| ,\|B^{-1}\|) }{\epsilon}
% \right)  \nonumber \\
&=&
\widetilde{O}\left(\frac{\kappa_E  (\alpha_A + \rho  \alpha_B )(T_{\rm in}+T_A+T_B) \|B^{-1}\| }{\epsilon}
\right) . \label{overall-cost}
\eea
In the above, we used the fact that $\|A\|\leq \alpha_A, \|B\|\leq \alpha_B$.
From equations (\ref{state of C}), (\ref{state of the solution}),
when apply $F U_p F \otimes I$ to $|\VEC(C)\rangle$, we will 
obtain a state proportional to $\sum_j\alpha_j|\tilde{\lambda}_j\rangle|E_j\rangle$. The overall complexity is (\ref{overall-cost}), the cost to approximate $\ket{\VEC(C)}$.
\end{proof}

% The above result is invariant under scaling. So to simplify the notation in the above theorem, we assume that $\|A\|, \|B\|\leq 1$, then
% the complexity becomes
% \be
% \widetilde{O}\left(
% \frac{ \kappa_E   (\alpha_A + \rho \alpha_B )(T_{\rm in}+T_A+T_B) \min(\kappa_A,\kappa_B) }{\epsilon}
% \right).
% \label{simplify complexity}
% \ee

By Remark \ref{remark:condition number}, if $A$ is invertible, the complexity can be improved to
\be \label{result of main thm2}
\widetilde{O}\Bigg(
\frac{ \kappa_E }{\epsilon}
\Big(
(\alpha_A + \rho \alpha_B )( T_{\rm in} + T_A+T_B)  
\Big)
\min\Big(\|A^{-1}\|, \|B^{-1}\|\Big)
\Bigg).
\ee
Finally, for the convenience of further applications, we summarize our algorithm as follows.

\begin{breakablealgorithm}
\label{alg}
\caption{Quantum algorithm for solving QGEP}
\begin{algorithmic}[1]
\REQUIRE
(1). Two $n\times n$ complex matrices $A,B$ with eigenpairs $\{(\lambda_j,\ket{E_j}): j\in [n]\}$. Assume that $B$ is invertible, $B^{-1}A$ is diagonalizable and all $\lambda_j$ are real. \\
(2). Block-encodings of $A,B$. \\
(3). An upper bound $\rho\geq 1$ on the eigenvalues. \\
(4). Quantum access to copies of the state $|\phi\rangle$, which formally equals $\sum_{j=1}^n \beta_j |E_j\rangle$.
\\
(5). The precision $\epsilon\in(0,1)$, $h = 1/\rho$, $p=\lceil \rho/\epsilon \rceil$, $\tau = ph$.
\ENSURE The quantum state
\be
\sum_{j=1}^n \beta_j |\tilde{\lambda}_j\rangle |E_j\rangle 
\label{final state}
\ee
up to a normaliation, where $|\tilde{\lambda}_j- \lambda_j|\leq \epsilon$ for all $j$.
\STATE Construct the matrix $M$ based on equation (\ref{coeff matrix}).
\STATE Use Proposition \ref{lem:block-encoding of M} to construct the block-encoding of $M$.
\STATE Solve the linear system $M \y = \ket{0..0}\ket{\phi}$ (see (\ref{linear system})) in a quantum computer by Lemma \ref{lem:quantum linear solver}.
\STATE Return the state $(F^{-1} U_p F \otimes I) \ket{\y}$, where $U_p$ is given by equation (\ref{unitaryUp}) and $F$ is the quantum Fourier transform.
\end{algorithmic}
\end{breakablealgorithm}

% In (\ref{important parameters}),
% if we set $p=O(1/\tilde{\epsilon})$, then
% $|\lambda_j-\tilde{\lambda}_j|
% \leq \rho(A,B) \tilde{\epsilon}$. However,
% the complexity becomes
% \be
% \widetilde{O}(\kappa_M(T_0 +  (\alpha_A + \alpha_B )(T_A+T_B)))
% \ee
% where  
% $
% \kappa_M = \tilde{\epsilon}^{-1}  \kappa_E
% (\|A\|+ \|B\| )
% \min(\|A^{-1}\| ,\|B^{-1}\|).
% $
% Now we can choose $\tilde{\epsilon}=\epsilon/\rho(A,B)$ to ensure
% $|\lambda_j-\tilde{\lambda}_j|
% \leq \epsilon$.
% Compared to (\ref{result of main thm}),
% the dependence on $\rho(A,B)$ is reduced
% to linear.

\subsection{Some corollaries}

In this section, we consider some special cases of Theorem \ref{thm:GEP by ODE-general}.
First, when $A,B$ are sparse,
combining Lemma \ref{sparse block-encoding}
and Theorem \ref{thm:GEP by ODE-general},
we have the following result.

\begin{cor}[GEP of sparse matrices]
\label{thm:GEP by ODE}
Let $\epsilon \in(0,1)$.
Suppose $A,B$ are $s_A,s_B$-sparse, respectively.
Denote 
the generalized eigen-pairs of $A\x=\lambda B\x$ as $\{(\lambda_j,|E_j\rangle): j=1,\ldots,n\}$,
the condition number of the matrix of generalized eigenvectors as $\kappa_E$. Suppose that $B$ is invertible,
$B^{-1}A$ is diagonalizable and 
all $\lambda_j$ are real.
Given access to copies of the state
$\sum_j\beta_j|E_j\rangle$ that is prepared in time $O(T_{\rm in})$, there is a quantum algorithm that returns a state proportional to
$\sum_j\beta_j|\tilde{\lambda}_j\rangle|E_j\rangle$ in time
% \be 
% \widetilde{O}\left(\frac{\kappa_E (T_0+s_A\|A\|_{\max}+s_B\|B\|_{\max})  (\|A\| + \|B\|)\|B^{-1}\|}{\epsilon} \right),
% \ee
%where $|\tilde{\lambda}_j-\lambda_j|\leq \epsilon$.
\be \label{thm2:complexity}
% \widetilde{O}(\kappa_M(T_0 + s_A\|A\|_{\max} + s_B\rho(A,B) \|B\|_{\max} ))
\widetilde{O}\Bigg(
\frac{ \kappa_E }{\epsilon}
\Big(
(s_A\|A\|_{\max} + \rho s_B\|B\|_{\max} )  T_{\rm in}
\Big)
\|B^{-1}\|
\Bigg),
\ee
where $|\tilde{\lambda}_j-\lambda_j|\leq \epsilon$ and $\rho = \max\{1,\max_j |\lambda_j|\}$.
\end{cor}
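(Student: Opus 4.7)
The plan is to derive this corollary by direct application of Theorem \ref{thm:GEP by ODE-general} with the block-encoding parameters specialized to the sparse-access input model. First, I would invoke Lemma \ref{sparse block-encoding} on each of $A$ and $B$ separately. Given the required block-encoding accuracy $\delta = o(\epsilon/(\kappa_M^2 \log^3(\kappa_M/\epsilon)))$ from Theorem \ref{thm:GEP by ODE-general}, Lemma \ref{sparse block-encoding} (together with Fact \ref{fact of BE} to rescale, if one phrases the output of that lemma in terms of a prescribed absolute error rather than the relative error stated there) yields an $(s_A\|A\|_{\max},\,q_A,\,\delta)$-block-encoding of $A$ and an $(s_B\|B\|_{\max},\,q_B,\,\delta)$-block-encoding of $B$, where $q_A, q_B = O(\mathrm{polylog}(n/\epsilon))$, each of which is implementable in time $T_A, T_B = O(\mathrm{polylog}(n/\epsilon))$. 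Thus the identifications $\alpha_A = s_A\|A\|_{\max}$ and $\alpha_B = s_B\|B\|_{\max}$ are the right ones to substitute.

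Next, I would plug these parameters into the complexity bound (\ref{result of main thm}) of Theorem \ref{thm:GEP by ODE-general}. This produces
\[
\widetilde{O}\!\left(\frac{\kappa_E}{\epsilon}\bigl(s_A\|A\|_{\max} + \rho\, s_B\|B\|_{\max}\bigr)\bigl(T_{\rm in} + T_A + T_B\bigr)\|B^{-1}\|\right).
\]
Since $T_A$ and $T_B$ are both polylogarithmic in $n/\epsilon$, they are swallowed by the $\widetilde{O}$ notation when combined with $T_{\rm in}$ (for any reasonable $T_{\rm in}$ one simply writes $T_{\rm in} + \mathrm{polylog}$; in the stated form the polylog contributions are absorbed into the tilde). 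This leaves exactly the expression (\ref{thm2:complexity}) claimed in the corollary.

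There is essentially no hard step here — the result is a straightforward specialization. The only point warranting a brief check is that the accuracy $\delta$ demanded by Theorem \ref{thm:GEP by ODE-general} is achievable within the polylog cost of Lemma \ref{sparse block-encoding}, and this holds because $\delta$ enters only logarithmically into the sparse block-encoding construction cost (it is a $\mathrm{polylog}(n/\delta) = \mathrm{polylog}(n\kappa_M/\epsilon)$ factor, where $\kappa_M$ is itself polynomial in the problem parameters, so the overall contribution remains polylogarithmic and is hidden by $\widetilde{O}$). All dependencies on $\rho$, $\kappa_E$, and $\|B^{-1}\|$ are inherited verbatim from Theorem \ref{thm:GEP by ODE-general}, and the realness/diagonalizability hypotheses are exactly the ones needed to invoke that theorem.
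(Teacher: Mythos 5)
Your proposal is correct and matches the paper's approach exactly: the paper presents this corollary as an immediate combination of Lemma~\ref{sparse block-encoding} and Theorem~\ref{thm:GEP by ODE-general}, with $\alpha_A = s_A\|A\|_{\max}$, $\alpha_B = s_B\|B\|_{\max}$, and $T_A, T_B$ polylogarithmic and hence absorbed by $\widetilde{O}$. Your aside on achieving the prescribed accuracy $\delta$ within polylog cost is a sensible extra check, consistent with how the paper treats $\delta$ as a logarithmic-overhead parameter.
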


For the standard eigenvalue problem (that is $B=I$),
we have $\rho(A,I)\leq \|A\|\leq \alpha_A$.
As a corollary of Theorem \ref{thm:GEP by ODE-general},
we have the following result,
which improves the quantum algorithm proposed in
 \cite{shao2019computing}
in terms of $\kappa_E, \rho$ and $\epsilon$.

\begin{cor}[Estimating eigenvalues]
\label{cor1:SEP}
Let $\epsilon\in(0,1), \delta = o(\epsilon/(\kappa_M^2 \log^3 (\kappa_M/\epsilon))$.
Suppose we have an $(\alpha_A,q_A,\delta)$-block-encoding of $A$ that is constructed in time $O(T_A)$.
Denote 
the eigen-pairs of $A\x=\lambda \x$ as $\{(\lambda_j,|E_j\rangle): j=1,\ldots,n\}$,
the condition number of the matrix of  eigenvectors as $\kappa_E$.
Suppose $A$ is diagonalizable and 
all $\lambda_j$ are real.
Then given access to copies of the state
$\sum_j\beta_j|E_j\rangle$ that is prepared in time $O(T_{\rm in})$, there is a quantum algorithm that returns a state proportional to
$\sum_j\beta_j|\tilde{\lambda}_j\rangle|E_j\rangle$ in time
\be
% \widetilde{O}\left(
% \frac{\kappa_E \rho(A)\|A\|  
% (T_0 + (\alpha_A+\rho(A)) T_A)  }{\epsilon} \right).
\widetilde{O}\left(
\frac{\alpha_A \kappa_E ( T_{\rm in} + T_A )}{\epsilon}
\right),
\ee
where $|\tilde{\lambda}_j-\lambda_j|\leq \epsilon$ and $\rho = \max\{1,\max_j |\lambda_j|\}$.
\end{cor}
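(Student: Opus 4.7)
The plan is to derive Corollary \ref{cor1:SEP} as a direct specialization of Theorem \ref{thm:GEP by ODE-general} to the case $B = I$. First I would observe that for the identity matrix we have a trivial $(1, 0, 0)$-block-encoding with $T_B = O(1)$, so we may set $\alpha_B = 1$ and $T_B = O(1)$ in the general statement. We also have $\|B^{-1}\| = 1$ and $\kappa_B = 1$.

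Next I would bound $\rho$. Since all eigenvalues of $A$ are real by hypothesis and $|\lambda_j| \le \|A\| \le \alpha_A$ (the latter by the defining property of a block-encoding), we can take $\rho = \max\{1, \|A\|\} \le \max\{1, \alpha_A\} = O(\alpha_A)$ provided $\alpha_A \ge 1$, which we may assume without loss of generality (otherwise rescale $A$). Plugging into the complexity (\ref{result of main thm}) gives
\begin{equation*}
\widetilde{O}\!\left(\frac{\kappa_E}{\epsilon}\bigl(\alpha_A + \rho \cdot 1\bigr)\bigl(T_{\rm in} + T_A + O(1)\bigr)\cdot 1\right) = \widetilde{O}\!\left(\frac{\alpha_A \kappa_E(T_{\rm in} + T_A)}{\epsilon}\right),
\end{equation*}
which is exactly the claimed bound.

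I would also verify that the hypotheses of Theorem \ref{thm:GEP by ODE-general} are satisfied in this setting: $B = I$ is trivially invertible, $B^{-1}A = A$ is diagonalizable by assumption, and all eigenvalues are real by assumption. The condition on $\delta$ transfers directly since $\kappa_M$ now simplifies (by Proposition \ref{prop:condition number} applied with $B = I$) to $O(\kappa_E(\|A\| + \rho)/\epsilon) = O(\alpha_A \kappa_E/\epsilon)$, but since the statement of the corollary keeps $\kappa_M$ implicit through the $\widetilde{O}$ notation, no further work is required.

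There is no real obstacle in this proof: the corollary is a straightforward specialization. The only minor bookkeeping is ensuring that the $\rho \alpha_B$ term collapses into the $\alpha_A$ term via $\rho \le \alpha_A$, and that the $\|B^{-1}\|$ factor disappears. Everything else follows verbatim from Theorem \ref{thm:GEP by ODE-general}. One could equivalently invoke Remark \ref{remark:condition number} to note that if $A$ is also invertible, the $\|B^{-1}\|$ factor would be replaced by $\min(\|A^{-1}\|, 1) = 1$ anyway, so no improvement is available in the standard eigenvalue setting beyond the one already obtained.
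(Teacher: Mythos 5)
Your proposal is correct and follows exactly the paper's route: the paper derives Corollary~\ref{cor1:SEP} by specializing Theorem~\ref{thm:GEP by ODE-general} to $B=I$ and noting $\rho(A,I)\leq\|A\|\leq\alpha_A$, which is precisely what you do (you just spell out the bookkeeping — trivial block-encoding of $I$, $\|B^{-1}\|=1$, and collapsing $\rho\alpha_B$ into $\alpha_A$ — that the paper leaves implicit).
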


If we assume that $A$ is Hermitian, the above result is the same as that of QPE \cite{chakraborty2018power}.
Another special case is $A = I$. Now we have $B\x = \lambda^{-1} \x$. This corresponds to the estimation of the inverse of eigenvalues. If we use Corollary \ref{cor1:SEP} directly, then the complexity to estimate the inverse of eigenvalues up to additive error $\epsilon$ is 
$
\widetilde{O} (
{\alpha_B \rho_0^2 \kappa_E ( T_{\rm in} + T_B )}/{\epsilon}
),
$
where $\rho_0 = 1/\min_{j,\lambda_j\neq 0} |\lambda_j|$. However, if we consider relative error, then the cost is $
\widetilde{O} (
{\alpha_B \rho_0 \kappa_E ( T_{\rm in} + T_B )}/{\epsilon}
)$.
For estimating the inverse of eigenvalues up to additive error $\epsilon$, we can directly use Theorem \ref{thm:GEP by ODE-general} (note that $A$ is invertible now, so the cost is  (\ref{result of main thm2})). The complexity is
$
\widetilde{O} (
{\alpha_B \rho_0 \kappa_E ( T_{\rm in} + T_B )}/{\epsilon}
).
$
In conclusion, we have the following result.

\begin{cor}[Estimating inverse of eigenvalues]
\label{cor2:SEP}
Let $\epsilon\in(0,1), \delta = o(\epsilon/(\kappa_M^2 \log^3 (\kappa_M/\epsilon))$.
Suppose we have an $(\alpha_A,q_A,\delta)$-block-encoding of $A$ that is constructed in time $O(T_A)$.
Denote 
the eigen-pairs of $A\x=\lambda \x$ as $\{(\lambda_j,|E_j\rangle): j=1,\ldots,n\}$,
the condition number of the matrix of  eigenvectors as $\kappa_E$.
Suppose $A$ is invertible and diagonalizable, and 
all $\lambda_j$ are real.
Then given access to copies of the state
$\sum_j\beta_j|E_j\rangle$ that is prepared in time $O(T_{\rm in})$, there is a quantum algorithm that returns a state proportional to
$\sum_j\beta_j|\tilde{\lambda}_j^{-1}\rangle|E_j\rangle$ in time
\be
% \widetilde{O}\left(
% \frac{\kappa_E \rho(A)\|A\|  
% (T_0 + (\alpha_A+\rho(A)) T_A)  }{\epsilon} \right).
\widetilde{O}\left(
\frac{\alpha_A \rho_0 \kappa_E ( T_{\rm in} + T_A )}{\epsilon}
\right),
\ee
where $\rho_0 =  \max\{1, 1/\min_{j,\lambda_j\neq 0} |\lambda_j| \}$. The error to approximate $\lambda_j^{-1}$  can either be relative or additive.
\end{cor}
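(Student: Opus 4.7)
The plan is to cast the standard eigenvalue problem $A\x=\lambda\x$ as a generalized eigenvalue problem of the form treated by Theorem \ref{thm:GEP by ODE-general} and then read off the resulting complexity. Since $A$ is invertible, $A\x=\lambda\x$ is equivalent to the GEP $I\x = \mu A\x$ with $\mu = \lambda^{-1}$. The generalized eigenpairs of the pair $(I,A)$ are $\{(\lambda_j^{-1},|E_j\rangle)\}_{j=1}^n$: the eigenvectors coincide with those of $A$, so the matrix of generalized eigenvectors has the same condition number $\kappa_E$; the eigenvalues are reciprocated, so the relevant eigenvalue bound is $\rho = \rho_0 = \max\{1, 1/\min_{j,\lambda_j\neq 0}|\lambda_j|\}$. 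The hypotheses of the theorem are satisfied: $A$ (playing the role of ``$B$'') is invertible, $A^{-1}\cdot I = A^{-1}$ is diagonalizable with the same eigenvectors, and the generalized eigenvalues are real.

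I would then invoke the refined bound (\ref{result of main thm2})---applicable here because \emph{both} $I$ and $A$ are invertible---with the substitutions: the theorem's ``$A$'' is $I$, trivially $(1,0,0)$-block-encoded in $O(1)$ time; the theorem's ``$B$'' is $A$, with the given $(\alpha_A, q_A, \delta)$-block-encoding in time $O(T_A)$; and the factor $\min(\|I^{-1}\|,\|A^{-1}\|)$ is at most $1$. Plugging these in yields
\[
\widetilde{O}\!\left( \frac{\kappa_E\,(1+\rho_0 \alpha_A)(T_{\rm in}+T_A)}{\epsilon} \right) \;=\; \widetilde{O}\!\left( \frac{\alpha_A \rho_0 \kappa_E (T_{\rm in}+T_A)}{\epsilon} \right),
\]
which is exactly the bound claimed by the corollary. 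The output state $\sum_j \beta_j|\tilde{\mu}_j\rangle|E_j\rangle$ stores $\tilde{\mu}_j = \widetilde{\lambda_j^{-1}}$ with $|\tilde{\mu}_j - \mu_j|\leq\epsilon$; this is additive error $\epsilon$ on $\lambda_j^{-1}$, and adjusting the precision parameter by at most a constant factor (absorbed into the $\widetilde{O}$) converts this into a relative-error guarantee of the same order.

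There is no genuine obstacle here: the whole argument is a reduction to Theorem \ref{thm:GEP by ODE-general} followed by bookkeeping. The only point that requires some care is choosing (\ref{result of main thm2}) rather than the general bound (\ref{result of main thm}); the general bound would contribute a factor of $\|A^{-1}\|$, which need not be $O(1)$ and is not among the quantities appearing in the statement. Using the refined version is justified precisely because $A$ is assumed invertible, so both matrices of the reduced pair $(I,A)$ are invertible and Remark \ref{remark:condition number} applies.
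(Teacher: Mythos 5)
Your main argument is exactly the paper's: reduce $A\x=\lambda\x$ to the GEP $I\x=\mu A\x$ with $\mu=\lambda^{-1}$ (eigenvectors unchanged, so $\kappa_E$ unchanged; generalized eigenvalues bounded by $\rho_0$), apply Theorem \ref{thm:GEP by ODE-general} with the theorem's ``$A$'' played by $I$ and its ``$B$'' played by the given $A$, and invoke the refined bound (\ref{result of main thm2}) so that the factor becomes $\min(\|I^{-1}\|,\|A^{-1}\|)\le 1$ rather than $\|A^{-1}\|$. That derivation and the resulting additive-error complexity $\widetilde{O}(\alpha_A\rho_0\kappa_E(T_{\rm in}+T_A)/\epsilon)$ are correct and coincide with the paper.

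The one genuine gap is your last sentence about the relative-error claim. An additive-$\epsilon$ estimate of $\mu_j=\lambda_j^{-1}$ has relative error $\epsilon|\lambda_j|$, and $|\lambda_j|$ has no uniform upper bound (the quantity $\rho_0$ only bounds $1/|\lambda_j|$, not $|\lambda_j|$ itself), so ``adjusting the precision parameter by at most a constant factor'' does \emph{not} convert the additive guarantee into a relative one. In the paper, the relative-error version of the bound is obtained by an entirely separate route: apply Corollary \ref{cor1:SEP} to $A$ directly to get $\tilde\lambda_j$ with additive error $\epsilon_1$, then pass to $\tilde\lambda_j^{-1}$; since $|\tilde\lambda_j^{-1}-\lambda_j^{-1}|/|\lambda_j^{-1}|\approx\epsilon_1/|\lambda_j|\le\epsilon_1\rho_0$, choosing $\epsilon_1=\epsilon/\rho_0$ yields relative error $\epsilon$ at cost $\widetilde{O}(\alpha_A\rho_0\kappa_E(T_{\rm in}+T_A)/\epsilon)$, the same complexity but by a different reduction. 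So the ``relative or additive'' statement in the corollary really packages two distinct arguments that happen to give the same bound, and your proof only establishes the additive half.
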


All the algorithms proposed above are also applicable to the case when the (generalized) eigenvalues are all   purely imaginary. The more general case than GEP is known as the polynomial eigenvalue problem (PEP) \cite{ruhe1973algorithms}, which aims to determine those values of $\lambda \in \mathbb{C}$ and those vectors $\x\in \mathbb{C}^n$ for which $\sum_{k=0}^m \lambda^k A_k \x = 0$, where $A_0,\ldots,A_m$ are square matrices. For solving the PEP,
the classical and most widely used approach is linearization  \cite{mackey2006vector}. That is we convert the PEP into a larger size GEP with the same eigenvalues so that the classical methods for GEPs can be pressed into service. Therefore, if the obtained GEP satisfies the assumptions of Problem \ref{prob:problem 1} (e.g., quadratic eigenvalue problems arising from applications in overdamped systems and gyroscopic systems \cite{lancaster2002lambda}), our algorithm proposed above can be applied to solve this PEP.

\section{The standard algorithm for symmetric GEPs}
\label{section:Block-encoding method}

In this section, we analyze the standard quantum algorithm that uses QPE to solve symmetric GEPs in the framework of block-encoding. Recall that a matrix pair $(A,B)$ is called symmetric if $A,B$ are Hermitian and $B$ is positive definite.  From the proof of Proposition
\ref{lemma:real generalized eigenvalues}, to solve the symmetric GEP $A\x=\lambda B\x$, it suffices to solve the Hermitian eigenvalue problem $\widetilde{A} \y = \lambda \y$, where $\widetilde{A}=B^{-1/2} A B^{-1/2}, \y = B^{1/2}\x$. 
In a quantum computer, QPE is a standard algorithm to solve Hermitian eigenvalue problems. In the framework of block-encoding (see Lemma \ref{quantum SVE}), the main problem we need to solve is  the construction of the block-encoding of $\widetilde{A}$. This can be accomplished by Lemmas \ref{negative power of block-encoding} and 
\ref{lem:product block-encoding}.
The following theorem summarizes the overall complexity of this algorithm.

\begin{thm}
\label{thm:GEP by block-encoding}
Assume that $A,B$ are $n$-by-$n$ Hermitian matrices with $B$ positive definite.
Let $U_A$ be an $(\alpha_A, q_A, \epsilon_A)$-block-encoding of $A$ that is implemented 
in time $O(T_A)$, and $U_B$ an $(\alpha_B, q_B, \epsilon_B)$-block-encoding of $B$ that is implemented 
in time $O(T_B)$.
Denote $\kappa_B$ as the condition number of $B$,
the generalized eigen-pairs of $A\x=\lambda B\x$ as $\{(\lambda_j,|E_j\rangle): j=1,\ldots,n\}$.
Let $\epsilon \in(0,1)$. Given access to copies of the state 
$\sum_j\beta_j|E_j\rangle$ that is prepared in time $O(T_{\rm in})$, there is a quantum algorithm that returns a state proportional to
$\sum_j\beta_j|\tilde{\lambda}_j\rangle|E_j\rangle$, where $|\tilde{\lambda}_j-\lambda_j|\leq \epsilon$, in time
\be \label{thm1:complexity}
\widetilde{O}\left(
\kappa_B^{0.5} T_{\rm in}  
+ \frac{\alpha_A\kappa_B^{1.5}}
{\|B\|\epsilon}
\left(
T_A
+ 
\frac{ \alpha_B \kappa_B T_B }
{  \|B\| }
\right) \right).
\ee
\end{thm}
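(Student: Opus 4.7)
The plan is to reduce the symmetric GEP to the Hermitian eigenvalue problem $\widetilde A\y=\lambda\y$, where $\widetilde A=B^{-1/2}AB^{-1/2}$ and $\y=B^{1/2}\x$, whose unit eigenvectors are $\widetilde{\ket{E_j}}\propto B^{1/2}\ket{E_j}$. The full algorithm (i) maps $\ket\phi$ into a scalar multiple of $B^{1/2}\ket\phi$ via the block-encoding of $B^{1/2}$, (ii) runs QPE on $\widetilde A$, and (iii) converts the eigenvector register back by applying the block-encoding of $B^{-1/2}$. An amplitude-amplification wrapper turns post-selection into a near-certain event.

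I would build the required block-encodings as follows. After rescaling $B$ by $\|B\|$ via Fact \ref{fact of BE} so that $I/\kappa_B\preceq B/\|B\|\preceq I$, Lemma \ref{negative power of block-encoding} supplies a $(2\|B\|^{1/2},\cdot,\cdot)$-block-encoding $U_{B^{1/2}}$ of $B^{1/2}$ and a $(2\sqrt{\kappa_B}/\|B\|^{1/2},\cdot,\cdot)$-block-encoding $U_{B^{-1/2}}$ of $B^{-1/2}$, each at cost $\widetilde O(\alpha_B\kappa_B T_B/\|B\|)$. Composing two copies of $U_{B^{-1/2}}$ with $U_A$ via Lemma \ref{lem:product block-encoding} yields an $(O(\alpha_A\kappa_B/\|B\|),\cdot,\cdot)$-block-encoding of $\widetilde A$ implementable in time $T_{\widetilde A}=\widetilde O(T_A+\alpha_B\kappa_B T_B/\|B\|)$. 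Plugging this into Lemma \ref{quantum SVE} gives QPE for $\widetilde A$ with accuracy $\epsilon$ in time $\widetilde O(T_{\rm pre}+\alpha_A\kappa_B(\|B\|\epsilon)^{-1}T_{\widetilde A})$, where $T_{\rm pre}$ is the cost of preparing its input state.

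The main technical obstacle is the amplitude analysis of stages (i) and (iii). For (i), applying $U_{B^{1/2}}$ to any unit $\ket\phi$ succeeds with amplitude $\|B^{1/2}\ket\phi\|/(2\|B\|^{1/2})\geq 1/(2\sqrt{\kappa_B})$, because $\bra\phi B\ket\phi\geq\|B\|/\kappa_B$; amplitude amplification therefore produces $B^{1/2}\ket\phi/\|B^{1/2}\ket\phi\|$ in time $\widetilde O(\sqrt{\kappa_B}(T_{\rm in}+\alpha_B\kappa_B T_B/\|B\|))$. A parallel argument for (iii), exploiting $\|B^{-1/2}\widetilde{\ket{E_j}}\|=\|\ket{E_j}\|\geq\|B\|^{-1/2}$ (which follows from $E_j^\dag BE_j=1$), yields a success amplitude of $\Omega(1/\sqrt{\kappa_B})$ for the post-processing block-encoding acting on the QPE output. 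One has to verify in passing that the cumulative block-encoding errors can be pushed below the $O(\epsilon/\kappa_{\widetilde A}^2)$ threshold demanded by Lemma \ref{quantum SVE}, but this only shifts logarithmic factors.

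Assembling the three stages under a single amplification wrapper, the dominant contributions are $\sqrt{\kappa_B}T_{\rm in}$ from input preparation inside (i), and an additional $\sqrt{\kappa_B}$ factor from (iii) that inflates the QPE cost to $\widetilde O(\alpha_A\kappa_B^{1.5}(\|B\|\epsilon)^{-1}T_{\widetilde A})$. Substituting $T_{\widetilde A}=\widetilde O(T_A+\alpha_B\kappa_B T_B/\|B\|)$ collects exactly the bound \eqref{thm1:complexity}.
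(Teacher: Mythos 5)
Your proposal follows the same three-step reduction the paper uses: apply (a block-encoding of) $B^{1/2}$ to $\ket\phi$, run QPE on $\widetilde A = B^{-1/2}AB^{-1/2}$, then apply (a block-encoding of) $B^{-1/2}$, with a single amplitude amplification at the end. The block-encoding constructions via Lemma~\ref{negative power of block-encoding} and Lemma~\ref{lem:product block-encoding} are also the same, and your final bound matches \eqref{thm1:complexity}.

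The one spot worth tightening is the amplitude analysis. You bound the success amplitude of stage (i) by $\geq 1/(2\sqrt{\kappa_B})$ and of stage (iii) by $\Omega(1/\sqrt{\kappa_B})$ as if they were two independent events. If these bounds simply multiplied, the overall good-component amplitude would be $\Omega(1/\kappa_B)$ and a single amplification wrapper would cost $O(\kappa_B)$ iterations, a full $\sqrt{\kappa_B}$ more than claimed. What actually saves the day is a cancellation that your stage-wise argument hides: the unnormalized amplitude after stage (i) is $\|B^{1/2}\ket\phi\|/(2\|B\|^{1/2})$, and applying the $2\sqrt{\kappa_B/\|B\|}$-block-encoding of $B^{-1/2}$ in stage (iii) multiplies by $\|B^{-1/2}B^{1/2}\ket\phi\|/(2\sqrt{\kappa_B/\|B\|}\cdot\|B^{1/2}\ket\phi\|)$, so the $\|B^{1/2}\ket\phi\|$ factors cancel exactly and the overall amplitude is $1/(4\sqrt{\kappa_B})$ independent of $\ket\phi$. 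This is why a single amplitude amplification with $O(\sqrt{\kappa_B})$ rounds suffices, and it is the point the paper's proof makes explicitly when it writes $|\phi_3\rangle = \frac{1}{4\sqrt{\kappa_B}}\sum_j\beta_j|\tilde\lambda_j\rangle|E_j\rangle|0\rangle + |0\rangle^\perp$. You should derive this composite amplitude rather than two separate per-stage amplitudes. (Also a small mismatch: you take $E_j^\dagger B E_j=1$, but the paper normalizes $\|\ket{E_j}\|=1$; the inequality $\|B^{1/2}\ket{E_j}\|\leq\|B\|^{1/2}$ still gives what you want under either convention, so this is cosmetic once the cancellation argument is in place.)
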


\begin{proof}  
The basic idea of the quantum algorithm is as follows:
Denote $|\phi_0\rangle = \sum_j\beta_j|E_j\rangle$.
Note that $\widetilde{A} B^{1/2}|E_j\rangle= \lambda_j B^{1/2}|E_j\rangle$ where
$\widetilde{A} =B^{-1/2}AB^{-1/2}$.
So in step one, we apply $B^{1/2}$ to $|\phi_0\rangle $
to obtain $|\phi_1\rangle \propto \sum_j\beta_jB^{1/2}|E_j\rangle$.
Then perform QPE to $\widetilde{A}$ with initial state $|\phi_1\rangle$ to create
$|\phi_2\rangle \propto \sum_j\beta_j|\tilde{\lambda}_j\rangle \otimes B^{1/2}|E_j\rangle$. 
Finally, apply $B^{-1/2}$ to the second register
of $|\phi_2\rangle$.

To make few confusions on the notation, we will use $\gamma$, instead of $\epsilon$,
to indicate the accuracy to approximate the eigenvalues.
For an $(\alpha,q,\epsilon)$-block-encoding, 
$q$ is poly-log in other parameters like the dimension, condition number and the accuracy,
so we are more concerned about $\alpha,\epsilon$ in the time complexity analysis below.
To simplify the notation, we 
just say it an $(\alpha,\epsilon)$-block-encoding.

To implement the above three steps, we first consider the construction of the 
block-encoding of $\widetilde{A}=B^{-1/2}AB^{-1/2}$.
Denote $\widetilde{B}=B/\|B\|$, then
by Fact \ref{fact of BE}, 
$U_B$ can be viewed as an
$(\alpha_B/\|B\|,   \epsilon_B/\|B\|)$-block-encoding of $\widetilde{B}$. 
In Lemma \ref{negative power of block-encoding}, 
if we choose $\delta=\kappa_B^{1.5}\epsilon_B/\|B\|$,
then we have a
$(2\sqrt{\kappa_B}, \kappa_B^{1.5}\epsilon_B/\|B\|)$-block-encoding of $\widetilde{B}^{-1/2}=\|B\|^{1/2}
B^{-1/2}$. It can be implemented
in time 
\be\label{cost of block-encoding}
O(\alpha_B\|B\|^{-1} \kappa_B T_B  \log^{2}(\|B\|/\epsilon_B))
=\widetilde{O}(\alpha_B\|B\|^{-1} \kappa_B T_B  ).
\ee
By Fact \ref{fact of BE}, the block-encoding of $\widetilde{B}^{-1/2}$ can be viewed as a
$
(2\sqrt{\kappa_B/\|B\|},  \kappa_B^{1.5}\epsilon_B/\|B\|^{1.5})
$-block-encoding of $B^{-1/2}$.
By Lemma \ref{lem:product block-encoding},
we can construct a
$
(4\alpha_A\kappa_B/\|B\|,
\epsilon' )
$-block-encoding of $\widetilde{A}$ in time $O(T_A + \alpha_B \kappa_B T_B/\|B\|)$,
where
\be \label{BE-error}
\epsilon' = 
\frac{4\alpha_A\kappa_B^2\epsilon_B}{\|B\|^2}
+\frac{4\kappa_B\epsilon_A}{\|B\|}.
\ee

Step 1, apply $B^{1/2}$ to $|\phi_0\rangle$. 
By Lemma \ref{negative power of block-encoding},
if $\delta = \kappa_B \epsilon_B/\|B\|$, then
we can build a
$(2, \kappa_B \epsilon_B/\|B\|)$-block-encoding of $\sqrt{\widetilde{B}}=B^{1/2}/\|B\|^{1/2}$ in cost
$
\widetilde{O}(\alpha_B\|B\|^{-1} \kappa_B T_B).
$
By Fact \ref{fact of BE},
this can be viewed as an $(2\|B\|^{1/2}, \kappa_B \epsilon_B/\|B\|^{1/2})$-block-encoding of $B^{1/2}$.
Based on this block-encoding, we will obtain
\bes
|\phi_1\rangle = \frac{1}{2\|B\|^{1/2}} B^{1/2}
|\phi_0\rangle |0\rangle + |0\rangle^\bot = 
\frac{1}{2\|B\|^{1/2}}\sum_{j=1}^n \beta_j B^{1/2}|E_j\rangle |0\rangle + |0\rangle^\bot.
\ees
The error on the term $B^{1/2}
|\phi_0\rangle$ is bounded by
$\kappa_B \epsilon_B/\|B\|^{1/2}$.
The complexity of this step is
$O(T_{\rm in}+\alpha_B\|B\|^{-1} \kappa_BT_B)$.

Step 2, perform QPE to $\widetilde{A}$ based on Lemma \ref{quantum SVE}. 
Recall that we use $\gamma$ to denote the accuracy of approximating the eigenvalues,
so in Lemma \ref{quantum SVE}, we choose
\[
\tilde{\epsilon} = \frac{4\epsilon'\log^2{(1/\gamma)}}{\gamma} ,
\]
where $\epsilon'$ is given in (\ref{BE-error}).
Then from the state $|\phi_1\rangle$, we
can obtain
\bes
|\phi_2\rangle = \frac{1}{2\|B\|^{1/2} }\sum_{j=1}^n \beta_j |\tilde{\lambda}_j\rangle\otimes B^{1/2}|E_j\rangle |0\rangle + |0\rangle^\bot
\ees
in time
\be
\widetilde{O}\left(
T_{\rm in}+\alpha_B\|B\|^{-1} \kappa_B T_B
+ \frac{\alpha_A\kappa_B}{\|B\|\gamma}
\left(
T_A
+ 
\frac{ \alpha_B \kappa_B T_B }
{  \|B\| }
\right) \right).
\label{cost}
\ee

Step 3, apply $B^{-1/2}$ to the second register of $|\phi_2\rangle$, then we obtain
\beas
|\phi_3\rangle 
% &=& \frac{1}{2\sqrt{\kappa_B/\|B\|}}
% \frac{1}{2\|B\|^{1/2} }\sum_{j=1}^n \beta_j |\tilde{\lambda}_j\rangle\otimes B^{1/2}|E_j\rangle |0\rangle + |0\rangle^\bot \\
&=& \frac{1}{4\sqrt{\kappa_B}}
\sum_{j=1}^n \beta_j |\tilde{\lambda}_j\rangle |E_j\rangle |0\rangle + |0\rangle^\bot.
\eeas
The error on the summation term is bounded by
$\kappa_B^{2.5}\epsilon_B^2/\|B\|^{2}$.
The complexity of this step is
(\ref{cost of block-encoding})
+ (\ref{cost}).

We can choose $\epsilon_B$
small to make sure the error on the state
$|\phi_3\rangle$ is small.
Since $\epsilon_B$ appears as a poly-log term
in the complexity, this does not affect the
complexity too much. By
amplitude estimation, it costs an extra
$O(\sqrt{\kappa_B})$ to obtain
the state proportional to $\sum_{j=1}^n \beta_j |\tilde{\lambda}_j\rangle |E_j\rangle$.
Multiplying (\ref{cost}) by $O(\sqrt{\kappa_B})$
gives rise to the claimed result.
\end{proof}

The complexity result of Theorem \ref{thm:GEP by block-encoding} is invariant under scaling, so we can assume that $\|B\| = \Theta(1)$. Then the complexity can be simplified into $\widetilde{O}(
\kappa_B^{0.5} T_{\rm in}  
+ {\alpha_A\kappa_B^{1.5}}
(T_A
+ 
\alpha_B \kappa_B T_B 
)/\epsilon )$.
Specifically, when $A,B$ are sparse, 
as a corollary of Lemma \ref{sparse block-encoding} and Theorem \ref{thm:GEP by block-encoding},
we have the following result.

\begin{cor}
\label{cor:GEP by block-encoding}
Assume that $A,B$ are $n$-by-$n$ Hermitian matrices and $B$ is positive definite.
Assume that $A$ (resp. $B$) has sparsity $s_A$ (resp. $s_B$). Denote $\kappa_B$ as the condition number of $B$,
the generalized eigen-pairs of $A\x=\lambda B\x$ as $\{(\lambda_j,|E_j\rangle): j=1,\ldots,n\}$.
Let $\epsilon \in (0,1)$.
Given access to copies of the state
$\sum_j\beta_j|E_j\rangle$ that is prepared in time $O(T_{\rm in})$, then there is a quantum algorithm that returns a state proportional to
$\sum_j\beta_j|\tilde{\lambda}_j\rangle|E_j\rangle$ in time
\be
\widetilde{O}\left(
\kappa_B^{0.5} T_{\rm in}
+ \frac{\kappa_B^{2.5} s_A s_B  \|A\|_{\max} \|B\|_{\max}}
{\epsilon\|B\|^2}
\right),
\ee
where $|\tilde{\lambda}_j-\lambda_j|\leq \epsilon$.
\end{cor}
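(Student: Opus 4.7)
The plan is to obtain this corollary as a direct instantiation of Theorem \ref{thm:GEP by block-encoding} using the sparse block-encoding construction from Lemma \ref{sparse block-encoding}. So the work is almost entirely bookkeeping: identify the block-encoding parameters for sparse $A,B$, substitute into the complexity bound (\ref{thm1:complexity}), and check that the dominant term simplifies to the claimed expression.

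First, I would apply Lemma \ref{sparse block-encoding} to each of $A$ and $B$. This gives an $(s_A\|A\|_{\max}, q_A, \epsilon_A')$-block-encoding of $A$ and an $(s_B\|B\|_{\max}, q_B, \epsilon_B')$-block-encoding of $B$, where $q_A,q_B = \mathrm{poly}\log(n/\epsilon)$, the errors $\epsilon_A',\epsilon_B'$ can be made inverse polynomially small in the relevant condition numbers and accuracy, and crucially both block-encodings are implementable in time $T_A, T_B = O(\mathrm{poly}\log(n/\epsilon))$. Thus in the notation of Theorem \ref{thm:GEP by block-encoding} I have $\alpha_A = s_A\|A\|_{\max}$ and $\alpha_B = s_B\|B\|_{\max}$, with $T_A,T_B$ both polylogarithmic and therefore absorbable into the $\widetilde{O}$.

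Next, I substitute these parameters into (\ref{thm1:complexity}). The first additive term contributes $\widetilde{O}(\kappa_B^{0.5} T_{\rm in})$ unchanged. The second term becomes
\begin{equation*}
\widetilde{O}\!\left(
\frac{\alpha_A \kappa_B^{1.5}}{\|B\|\,\epsilon}
\left(T_A + \frac{\alpha_B \kappa_B T_B}{\|B\|}\right)
\right)
= \widetilde{O}\!\left(\frac{s_A\|A\|_{\max}\,\kappa_B^{1.5}}{\|B\|\,\epsilon}\cdot \frac{s_B\|B\|_{\max}\,\kappa_B}{\|B\|}\right),
\end{equation*}
where I used that the $T_A$ term is polylogarithmic and therefore dominated by the $T_B$ term (which already carries the extra $\alpha_B\kappa_B/\|B\|$ factor). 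Simplifying, this is exactly $\widetilde{O}\!\left(\kappa_B^{2.5} s_A s_B \|A\|_{\max}\|B\|_{\max}/(\epsilon\|B\|^2)\right)$, which matches the claimed bound. Adding the first term gives the stated complexity.

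Essentially no step is an obstacle; the only minor care required is to ensure that the block-encoding errors produced by Lemma \ref{sparse block-encoding} can be made small enough to meet the precision requirements baked into Theorem \ref{thm:GEP by block-encoding} (in particular the requirement $\delta = o(\epsilon/(\kappa_M^2\log^3(\kappa_M/\epsilon)))$ that propagates via the proof of the theorem, together with the $\epsilon'$ bound in (\ref{BE-error})). Since the cost of running Lemma \ref{sparse block-encoding} is only $O(\mathrm{poly}\log(n/\epsilon'))$, this only introduces additional logarithmic factors that are hidden inside $\widetilde{O}$, so the final bound is as stated.
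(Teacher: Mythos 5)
Your proof is correct and follows the same route as the paper: apply Lemma \ref{sparse block-encoding} to get $(s_A\|A\|_{\max},\cdot,\cdot)$- and $(s_B\|B\|_{\max},\cdot,\cdot)$-block-encodings implementable in polylog time, then substitute $\alpha_A=s_A\|A\|_{\max}$, $\alpha_B=s_B\|B\|_{\max}$, $T_A,T_B=\widetilde{O}(1)$ into the bound (\ref{thm1:complexity}) from Theorem \ref{thm:GEP by block-encoding}. The dominance step you use (absorbing the $T_A$ term because $s_B\|B\|_{\max}\kappa_B/\|B\|\geq 1$) is the only small point needing justification, and it holds since $s_B\|B\|_{\max}\geq\|B\|$.
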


\begin{proof}
When $A,B$ are sparse, by Lemma \ref{sparse block-encoding}, we can construct an
$(s_A\|A\|_{\max},{\rm poly}\log(n/\epsilon),\|A\|_{\max}\epsilon)$-block-encoding of $A$,
an $(s_B\|B\|_{\max},{\rm poly}\log(n/\epsilon),\|B\|_{\max}\epsilon)$-block-encoding of $B$ efficiently.
Then the claimed result comes
naturally from
Theorem \ref{thm:GEP by block-encoding}.
\end{proof}

As we can see from the proof of 
Theorem \ref{thm:GEP by block-encoding}, the main cost comes from the construction of the block-encoding
of $\widetilde{A}$. In \cite{chakraborty2018power}, there is another
way to create the block-encoding
of the product of two matrices.

\begin{lem}[Lemma 5 of \cite{chakraborty2018power}]
\label{product block-encoding 1}
Let $A_1,A_2$ be two matrices.
Assume that $\|A_i\|\leq 1, 
\alpha_i \geq 1$ 
for $i=1,2$.
Let $U_i$ be an $(\alpha_i, q_i, \epsilon_i)$-block-encoding of $A_i$
that can be implemented in time $O(T_i)$. Then there is a 
$(2,q_1+q_2+2, \sqrt{2}(\epsilon_1+\epsilon_2))$-block-encoding of $A_1A_2$ that
can be implemented in time
$O(
\alpha_1(q_1+T_1) +
\alpha_2(q_2+T_2)
)$.

\end{lem}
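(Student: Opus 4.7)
The plan is to combine two tools: singular-value block-encoding amplification applied to each factor separately, followed by the naive product construction of Lemma~\ref{lem:product block-encoding}. The rationale is exactly the gap between $\|A_i\|\le 1$ and the fact that the block-encoding $U_i$ places $A_i/\alpha_i$ in the upper-left corner, so $U_i$ is ``under-utilised'' whenever $\alpha_i\gg 1$. If we can first boost each $U_i$ to a constant-factor block-encoding of $A_i$ at a cost linear in $\alpha_i$, then one ordinary multiplication of block-encodings produces a constant-factor block-encoding of $A_1A_2$.

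First I would invoke the standard uniform block-encoding amplification result (obtainable from the quantum singular value transformation framework of \cite{Gilyen-QSVT}): given an $(\alpha_i,q_i,\epsilon_i)$-block-encoding $U_i$ of $A_i$ with $\|A_i\|\le 1$, and any target accuracy $\delta>0$, one can construct a $(\sqrt{2},\,q_i+O(1),\,\epsilon_i+\delta)$-block-encoding $\widetilde U_i$ of $A_i$ using $O(\alpha_i\log(1/\delta))$ uses of $U_i$ and $U_i^\dagger$, plus $O(\alpha_i(q_i+T_i)\log(1/\delta))$ additional gates. The polynomial implemented is an odd approximation to the linear function $x\mapsto \alpha_i x$ on the relevant singular-value interval, flattened to avoid exceeding $1$ outside it; this is the standard ``singular value amplification'' primitive. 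Choosing $\delta$ a sufficiently small constant (say $\delta=\epsilon_i$ so the error stays of order $\epsilon_i$) gives $\widetilde U_i$ at cost $\widetilde O(\alpha_i(q_i+T_i))$, which is the source of the $\alpha_i(q_i+T_i)$ terms in the stated complexity.

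Second, I would feed $\widetilde U_1$ and $\widetilde U_2$ into Lemma~\ref{lem:product block-encoding}. That lemma produces a $(\sqrt{2}\cdot\sqrt{2},\,q_1+q_2+O(1),\,\sqrt{2}\epsilon_1+\sqrt{2}\epsilon_2)$-block-encoding of $A_1A_2$, which matches the claimed $(2,q_1+q_2+2,\sqrt{2}(\epsilon_1+\epsilon_2))$ up to absorbing the constant amplification error into $\epsilon_i$. The final two ancilla qubits are exactly the ones introduced by amplifying each factor (one bit per amplification round needed to mark the ``good'' block). The time cost is additive because the two amplifications and the product composition run in sequence with no multiplicative overhead.

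The main technical obstacle, and the part I would want to verify most carefully, is the error analysis of the amplification step: the QSVT polynomial is only an approximation to the identity on $[-1,1]$, so the amplified block encodes $p(A_i)$ rather than $A_i$ exactly, and one must confirm that the deviation $\|p(A_i)-A_i\|$ can be made $O(\epsilon_i)$ (or smaller) while keeping the number of polynomial degrees at $O(\alpha_i\log(1/\epsilon_i))$. Once that bound is in hand, the $\sqrt{2}(\epsilon_1+\epsilon_2)$ error in the product encoding follows immediately from the standard sub-additivity of block-encoding errors under multiplication, and the constant pre-factor $2$ is optimal up to the $\sqrt{2}$ slack one pays for cleanly bounding the amplified polynomial away from $1$.
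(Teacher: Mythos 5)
This lemma is cited from Chakraborty, Gilyén and Jeffery and the paper does not reproduce its proof, so there is no in-paper argument to compare against; I'll assess your reconstruction on its own terms. Your two-step plan---first boost each $U_i$ to a constant-normalization block-encoding of $A_i$ via uniform singular value amplification, then compose with the naive product lemma---is exactly the intended route, and the ancilla and time accounting lines up (one extra qubit per amplification, hence $q_1+q_2+2$; the amplification of $U_i$ drives the $\alpha_i(q_i+T_i)$ term).

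The one step you flag as needing care is indeed where the argument lives, and your phrasing of it is slightly off in a way worth correcting. What $U_i$ places in the top-left block is $\tilde M_i/\alpha_i$ with $\|\tilde M_i-A_i\|\le\epsilon_i$, so the deviation to control is $\|p(\tilde M_i/\alpha_i)-A_i/\sqrt{2}\|$, not $\|p(A_i)-A_i\|$. The reason the input error does not get blown up by the amplification is that $p$ has slope $\approx\alpha_i/\sqrt{2}$ on the interval $[0,1/\alpha_i]$ where the singular values of $\tilde M_i/\alpha_i$ live, while the input error at that scale is only $\epsilon_i/\alpha_i$; the $\alpha_i$'s cancel, leaving $\epsilon_i/\sqrt{2}$, which after rescaling by the new normalization $\sqrt{2}$ gives back $\epsilon_i$. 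You should state this cancellation explicitly, since without it the naive worry is that the error becomes $O(\alpha_i\epsilon_i)$, which would wreck the lemma. A second, smaller point: uniform singular value amplification carries a $\log(1/\delta)$ factor in its query count, so strictly speaking your construction costs $O\bigl((\alpha_1(q_1+T_1)+\alpha_2(q_2+T_2))\log(1/\delta)\bigr)$; you absorb this into $\widetilde O$, whereas the lemma as quoted uses plain $O$. This is a cosmetic mismatch with the cited statement rather than a flaw in your reasoning, but it should be noted rather than silently suppressed.
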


As a corollary of Fact \ref{fact of BE} and Lemma \ref{product block-encoding 1},
when the assumption $\|A_i\|\leq 1$ is
removed, 
we then have the following result.

\begin{cor}
\label{product block-encoding-2}
Let $U_i$ be an $(\alpha_i,q_i,\epsilon_i)$-block-encoding of $A_i$
that is obtained in time $O(T_i)$, where $i=1,2$. Then we can construct
a
$(2\|A_1\|\|A_2\|,q_1 +q_2 +2, \sqrt{2}(\|A_1\|\epsilon_2+\|A_2\|\epsilon_1))$-block-encoding of $A_1A_2$ in time
$O(
\alpha_1(q_1+T_1)\|A_1\|^{-1} +
\alpha_2(q_2+T_2)\|A_2\|^{-1} 
)$.

\end{cor}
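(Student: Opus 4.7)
The plan is to reduce the statement to Lemma \ref{product block-encoding 1} by a simple normalization trick, applying Fact \ref{fact of BE} twice to strip and then restore the scale factors.

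First, I would rescale each input. For $i=1,2$, set $\widetilde{A}_i := A_i/\|A_i\|$, so $\|\widetilde{A}_i\|=1$. By Fact \ref{fact of BE} with $\beta = \|A_i\|$, the given $(\alpha_i,q_i,\epsilon_i)$-block-encoding $U_i$ of $A_i$ is automatically an $(\alpha_i/\|A_i\|,\,q_i,\,\epsilon_i/\|A_i\|)$-block-encoding of $\widetilde{A}_i$. Since any $\alpha$-block-encoding of $M$ satisfies $\alpha \geq \|M\|$, we have $\alpha_i/\|A_i\| \geq 1$, so the hypotheses of Lemma \ref{product block-encoding 1} are met.

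Second, I would apply Lemma \ref{product block-encoding 1} to the pair $(U_1,U_2)$ viewed as block-encodings of $\widetilde{A}_1,\widetilde{A}_2$. This yields a $(2,\,q_1+q_2+2,\,\sqrt{2}(\epsilon_1/\|A_1\|+\epsilon_2/\|A_2\|))$-block-encoding of the product $\widetilde{A}_1\widetilde{A}_2 = A_1A_2/(\|A_1\|\|A_2\|)$, constructed in time
\[
O\bigl((\alpha_1/\|A_1\|)(q_1+T_1) + (\alpha_2/\|A_2\|)(q_2+T_2)\bigr).
\]

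Third, I would restore the scale. Applying Fact \ref{fact of BE} once more, this time with $\beta = \|A_1\|\|A_2\|$, converts the block-encoding of $A_1A_2/(\|A_1\|\|A_2\|)$ into a $(2\|A_1\|\|A_2\|,\,q_1+q_2+2,\,\sqrt{2}(\|A_2\|\epsilon_1+\|A_1\|\epsilon_2))$-block-encoding of $A_1A_2$, where the error simplifies because $\|A_1\|\|A_2\|\cdot\epsilon_i/\|A_i\|$ equals $\|A_{3-i}\|\epsilon_i$. The implementation time is unchanged, matching the stated bound. There is no real obstacle here: the entire content is bookkeeping of the scale factor $\|A_1\|\|A_2\|$ through the two invocations of Fact \ref{fact of BE}, and checking that the normalized operators satisfy the hypotheses $\|\widetilde{A}_i\|\leq 1$ and $\alpha_i/\|A_i\|\geq 1$ needed to invoke Lemma \ref{product block-encoding 1}.
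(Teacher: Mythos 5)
Your proof is correct and follows essentially the same route as the paper's: normalize each $A_i$ by its operator norm, feed the rescaled block-encodings into Lemma~\ref{product block-encoding 1}, then undo the scaling with Fact~\ref{fact of BE}. The only cosmetic difference is that the paper begins with a WLOG assumption $\|A_i\|\geq 1$, which is actually superfluous since $\alpha_i\geq\|A_i\|$ already gives $\alpha_i/\|A_i\|\geq 1$; your version states the inequality directly and is thus marginally cleaner.
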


\begin{proof}
Without loss of generality, we assume that $\|A_i\|\geq 1$.
Denote $A_i' = A_i/\|A_i\|$. Then
$U_i$ can be viewed as an $(\alpha_i/\|A_i\|, q_i, \epsilon_i/\|A_i\|)$-block-encoding of $A_i'$. Since $\|A_i\|\geq 1$,
we have $\alpha_i\geq \|A_i\|$. This
implies $\|A_i'\|\leq 1$.
By  Lemma \ref{product block-encoding 1},
a $(2,q_1+q_2+2, \sqrt{2}(\epsilon_1\|A_1\|^{-1}+\epsilon_2\|A_2\|^{-1}))$-block-encoding of $A_1'A_2'$ 
can be constructed in time
$O(
\alpha_1(q_1+T_1)\|A_1\|^{-1} +
\alpha_2(q_2+T_2)\|A_2\|^{-1} 
)$.
By Fact \ref{fact of BE}, this is a
$(2\|A_1\|\|A_2\|,q_1+q_2+2, \sqrt{2}\|A_1\|\|A_2\|(\epsilon_1\|A_1\|^{-1}+\epsilon_2\|A_2\|^{-1}))$-block-encoding of $A_1A_2$.
\end{proof}

With the same notation as that in the proof of Theorem \ref{thm:GEP by block-encoding},
by Corollary \ref{product block-encoding-2}
we can construct an
$
(2\|B^{-1/2}\| \|AB^{-1/2}\|,
\epsilon' )
$-block-encoding of $\widetilde{A}$ in time
\be
\widetilde{O}\left(
\frac{ \alpha_A T_A\|B^{-1/2}\|}{\|A B^{-1/2}\|}
+ 
\frac{\|A\|\kappa_B^{1.5}\alpha_B T_B}
{ \|A B^{-1/2}\| \|B\|^{1.5}}
\right)
=\widetilde{O}\left(
\frac{ \kappa_B^{0.5}\alpha_A T_A}{\|A\|}
+ 
\frac{ \kappa_B^{1.5}\alpha_B T_B }
{  \|B\| }
\right),
\ee
where
\be \label{BE-error}
\epsilon' = 
\sqrt{2} 
\left(\sqrt{2} \|B^{-1/2}\|
\left(\|A\|\frac{\kappa_B^{1.5} \epsilon_B}{\|B\|^{1.5}}
+ \|B^{-1/2}\| \epsilon_A
\right)
+ \|AB^{-1/2}\|\frac{\kappa_B^{1.5} \epsilon_B}{\|B\|^{1.5}}
\right).
\ee
Now with a similar arguments to the proof of Theorem \ref{thm:GEP by block-encoding} (the only difference is the parameters in the block-encoding of $\widetilde{A}$), it is not hard to show that
the complexity of Theorem \ref{thm:GEP by block-encoding} becomes
\be
\widetilde{O}\left(
\kappa_B^{0.5} T_{\rm in}
+
\frac{\kappa_B^2}{\|B\|\epsilon}
\left(\alpha_A T_A
+ \frac{ \kappa_B \alpha_B T_B  \|A\| }
{\|B\| }\right)
\right) .
\ee
And the complexity of Corollary \ref{cor:GEP by block-encoding} becomes
\be \label{com:GEP by block-encoding2}
\widetilde{O}\left(
\kappa_B^{0.5} T_{\rm in}
+
\frac{\kappa_B^2}{\|B\|\epsilon}
\left(s_A\|A\|_{\max}  
+ \frac{ \kappa_B s_B  \|A\| \|B\|_{\max}}
{\|B\| }\right)
\right) .
\ee
In comparison, the above results have better dependence on $\alpha_A,\alpha_B$ or
$s_A\|A\|_{\max},s_B\|B\|_{\max}$ but a little worse dependence on the condition number $\kappa_B$.

Finally, we make a comparison between our algorithm given in the above section and the above standard algorithm based on QPE for symmetric GEPs. For simplicity, we assume that $\|B\|=\Theta(1)$ and $T_{\rm in}= T_A=T_B=\widetilde{O}(1)$. Then the complexity of our algorithm is 
\[
\widetilde{O}\left(
(\alpha_A + \|A\|\kappa_B \alpha_B  ) \kappa_B^{1.5}  /\epsilon
\right).
\]
And the complexity of the standard quantum algorithm by QPE is 
\[
\min\left\{
\widetilde{O}\big(
\alpha_A\alpha_B\kappa_B^{2.5} /\epsilon 
\big), ~~
\widetilde{O}\big(
(\alpha_A+\|A\|\kappa_B\alpha_B)\kappa_B^{2} /\epsilon 
\big)
\right\}.
\]
As we can see, our algorithm is a little better than the standard one in terms of $\alpha_A, \alpha_B$ and $\kappa_B$. When restricted to the Hermitian eigenvalue problem, all the algorithms have the same complexity $\widetilde{O}(\alpha/\epsilon)$, which coincides with the complexity of QPE \cite{chakraborty2018power}. Thus all the algorithms achieve optimal dependence on $\epsilon$.

Indeed, we believe that all the above algorithms for symmetric GEPs should have similar performance in practice. The main difficulty for the  QPE method is that we need to construct the block-encoding of $\widetilde{A}$ efficiently. In the proof, we obtain a $
(4\alpha_A\kappa_B/\|B\|,
\epsilon' )
$-block-encoding of $\widetilde{A}$. So in the QPE, $4\alpha_A\kappa_B/\|B\| = O(\alpha_A\kappa_B)$ is viewed as an upper bound of the eigenvalues. However, in our algorithm, we only need to choose $\|B^{-1/2}AB^{-1/2}\|=O(\|A\|\kappa_B)$ as the upper bound. If $\alpha_A \approx \|A\|$, then both algorithms have the same complexity.
In the above second method based on Corollary \ref{product block-encoding-2}, we obtain a
$
(2\|B^{-1/2}\| \|AB^{-1/2}\|,
\epsilon' )
$-block-encoding of $\widetilde{A}$. The bound $\|B^{-1/2}\| \|AB^{-1/2}\|$ is close to the one we used. While this idea leads to a little worse dependence on the condition number of $B$. Since $\widetilde{A}=B^{-1/2}AB^{-1/2}$,  the optimal block-encoding we can construct is $(\|B^{-1/2}AB^{-1/2}\|, \epsilon'')$ for some $\epsilon''$. From this point, our algorithm combines the advantages of both methods introduced above based on QPE. In other words, it has similar performance to the QPE method using the optimal block-encoding of $\widetilde{A}$.

% In principal, we can construct this block-encoding when we have the block-encodings of $A,B$. But this may be not the best way. For instance, just like the examples given in  \cite{parker2020quantum}, if $B^{-1/2}AB^{-1/2}$ is sparse and all the entries are known, then we can obtain its block-encoding more efficiently than the method considered above. 

% To compare this result to Corollary \ref{cor:GEP by block-encoding}, we assume that $T_{\rm in}=O(1), \rho \leq 1$
% and $I/\kappa_A \preceq A\preceq I$,
% $I/\kappa_B \preceq B\preceq I$.
% When $A,B$ are Hermitian and $B$ is positive definite, then
% $\kappa_E=\sqrt{\kappa_B}$ by Proposition \ref{lemma:real generalized eigenvalues}.
% So the complexity given in Corollary
% \ref{thm:GEP by ODE} is
% $\widetilde{O}(\kappa_B^{1.5}(s_A\|A\|_{\max}+s_B\|B\|_{\max})/\epsilon)$.
% While the complexity stated in Corollary \ref{cor:GEP by block-encoding}
% is $\widetilde{O}(\kappa_B^{2.5} s_As_B  \|A\|_{\max} \|B\|_{\max} /\epsilon)$
% or 
% $\widetilde{O}(\kappa_B^{3}(s_A\|A\|_{\max}+s_B\|B\|_{\max})/\epsilon)$ by (\ref{com:GEP by block-encoding2}).
% So the quantum algorithm given in  Corollary
% \ref{thm:GEP by ODE} has better dependence on $\kappa_B$. As we know, QPE has optimal dependence on the precision, which is $O(1/\epsilon)$, so the quantum algorithms in Corollaries \ref{cor:GEP by block-encoding} and  \ref{thm:GEP by ODE} are already optimal in terms of the precision.

\section{Lower bound for the generalized eigenvalue problems}
\label{section:ow bounds for the generalized eigenvalue problem}

For a classical computer,
the GEPs involving singular matrices
are challenging to solve from the viewpoint of stability and computational complexity \cite{van1979computation,hochstenbach2019solving}. For instance, in \cite[Section 7.7]{Golub}, three simple examples are listed to illustrate this by showing that when the matrices are singular, the eigenvalues can be empty, finite or the whole space.
In this section, we shall prove that
singular GEPs are also hard to solve for a quantum computer.

For any $n\times n$ matrix $M=(M_{ij})$, we define the entry access oracle as: for any $i,j\in[n]$, the oracle performs $|i,j\rangle|0\rangle \mapsto |i,j\rangle|M_{ij}\rangle$.
We consider the following simple version of the GEP. 
Assume that we are given access to copies of a quantum state $\ket{\psi} \in \mathbb{C}^n$ and query access to the entries of $n \times n$ matrices $A$ and $B$. 
We are promised that
$A \ket{\psi} = \lambda B \ket{\psi} $
for some $\lambda \in \mathbb{C}$. Our goal is to find $\lambda$ efficiently (i.e., in time $\poly\log(n)$, not $\poly(n)$).
However, we can show that there is no such efficient quantum algorithm when $B$ is singular.

\begin{thm}\label{thm:lower bound of GEP}
Suppose we are given access to copies of a quantum state $\ket{\psi}$ and query access to entries of $n \times n$ matrices $A$ and $B$.
If $A \ket{\psi} = \lambda B \ket{\psi}$ and
$B$ is singular, then
finding $\lambda$ requires at least $\Omega(\sqrt{n})$ quantum queries to $A$ and $B$.
\end{thm}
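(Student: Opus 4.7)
The plan is to prove Theorem~\ref{thm:lower bound of GEP} by reduction from unstructured quantum search. Since the Bennett--Bernstein--Brassard--Vazirani lower bound gives $\Omega(\sqrt{n})$ quantum queries for finding a unique marked element in a size-$n$ database, a query-efficient algorithm for this singular GEP would violate that bound.

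Concretely, I start from an instance of the promise search problem: a Boolean function $f:[n]\to\{0,1\}$ with a unique marked index $i^*\in[n]$ satisfying $f(i^*)=1$. From $f$ I construct an instance of the singular GEP by setting $\ket{\psi} := n^{-1/2}\sum_{i=1}^n \ket{i}$, and by letting $B$ (respectively $A$) be the $n\times n$ matrix whose only nonzero entries lie in the first row, with $B_{1i}:=f(i)$ and $A_{1i}:=i\, f(i)$. Then $B$ has rank at most one, hence is singular for $n\geq 2$; moreover $B\ket{\psi} = n^{-1/2}\ket{1}$ and $A\ket{\psi} = n^{-1/2} i^* \ket{1}$, so $A\ket{\psi} = i^* B\ket{\psi}$ and the unique generalized eigenvalue consistent with the promise is $\lambda = i^*$. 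The state $\ket{\psi}$ is the uniform superposition, independent of $f$, so copies can be prepared without any queries to the search oracle, matching the theorem's assumption that copies of $\ket{\psi}$ are supplied for free.

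Next, I observe that each entry of $A$ or $B$ can be computed with at most a single query to $f$ (most entries are identically zero). Consequently, any quantum algorithm that uses $q$ entry queries to $A$ and $B$ to output $\lambda$ can be simulated by an algorithm using $O(q)$ queries to $f$ to output $i^*$. Combining this with the BBBV search lower bound forces $q=\Omega(\sqrt{n})$, which is the claim.

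The main obstacle I expect is arranging the reduction so that three properties hold simultaneously: (i) $B$ is genuinely singular for every instance of $f$ in the family, (ii) $\lambda$ is uniquely determined (that is, $B\ket{\psi}\neq 0$ and $A\ket{\psi}=\lambda B\ket{\psi}$ is solvable, so the algorithm cannot trivially claim an arbitrary answer), and (iii) every entry of $A$ and $B$ is implementable with $O(1)$ queries to $f$ so that no extra oracle cost is hidden inside the simulation step. The rank-one construction above was chosen precisely because it fulfils all three conditions at once; no further difficulty arises, and the argument is complete.
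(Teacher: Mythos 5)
Your proof is correct, and it takes a genuinely different route from the paper's. The paper sets up a \emph{distinguishing} problem: it compares $A=B=\proj{x}$ (so $\lambda=1$) against $A=0$, $B=\proj{x}$ (so $\lambda=0$), then encodes the comparison into a Boolean function $f:[2n]\to\{0,1\}$ and a derived function $g(x)=\tfrac12\bigl(1-(-1)^{f(x)+f(x+n)}\bigr)$ to invoke the BBBV search lower bound. Your construction instead encodes the marked index $i^*$ directly into the generalized eigenvalue by taking $A$ and $B$ with support only in the first row, $B_{1j}=f(j)$ and $A_{1j}=j\,f(j)$, so that $A\ket{\psi}=i^*B\ket{\psi}$; this makes the reduction to search one step shorter and avoids the auxiliary function $g$. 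One trade-off worth noting: your family has eigenvalues ranging over $\{1,\dots,n\}$, so the hardness you obtain is for estimating $\lambda$ to additive error below $1/2$ when the eigenvalue can be as large as $n$; the paper's family has $\lambda\in\{0,1\}$, so it shows the lower bound holds already for a bounded two-valued eigenvalue, which is the slightly stronger statement if one cares about the eigenvalue scale. Both arguments ultimately rest on the same $\Omega(\sqrt{n})$ search bound and both give matching upper bounds, so the choice is a matter of presentation.
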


\begin{proof}
Consider the following  two cases:
\begin{enumerate}
\item $\ket{\psi} = \frac{1}{\sqrt{n}} \sum_{i=1}^n \ket{i}$, $A = B = \proj{x}$, for arbitrary $x \in \{1,\dots,n\}$;
\item $\ket{\psi} = \frac{1}{\sqrt{n}} \sum_{i=1}^n \ket{i}$, $A=0$, $B = \proj{x}$, for arbitrary $x \in \{1,\dots,n\}$.
\end{enumerate}
These are both instances of the GEP. In the first case, $A\ket{\psi} = B\ket{\psi}$, so $\lambda = 1$. In the second case, $A \ket{\psi} = 0$, so $\lambda = 0$. 
However, we can show that distinguishing these two cases requires $\Omega(\sqrt{n})$ queries to $A$ and $B$. 
We can think of this problem as follows: we are given query access to a Boolean function $f:[2n] \to \{0,1\}$, such that either:
\begin{enumerate}
    \item $f(x) = 1$ and $f(x+n) = 1$ if $x=x_0$, for some $x_0 \in [n]$, and otherwise $f(x) = 0$;
    \item $f(x) = 0$ for all $x \in [n]$; $f(x+n) = 1$ if $x=x_0$, for some $x_0 \in [n]$, and otherwise $f(x) = 0$.
\end{enumerate}
Our task is to distinguish these two cases. 
But one can prove hardness of this using the optimality of Grover's search algorithm. More precisely, define 
\[
g(x) = \frac{1}{2} \left( 1- (-1)^{f(x) + f(x+n)} \right)
\]
as a function from $[n]$ to $\{0,1\}$.
If $f$ belongs to the first class, then $g(x) = 0$ for all $x\in [n]$.
If $f$ belongs to the second class, then $g(x) = 1$ if $x = x_0$ and 0 otherwise. This is a searching problem. By the optimality of Grover's algorithm \cite{bennett1997strengths}, 
distinguishing these two types of functions requires at least $\Omega(\sqrt{n})$ queries.
\end{proof}

Under the assumption of Theorem \ref{thm:lower bound of GEP}, we can use quantum amplitude estimation \cite{brassard2002quantum} to estimate $\lambda$ up to some error. However, this kind of eigenvalue problem (i.e., the eigenvector is given to us) rarely occurs in practice, so we will not go deeper here.

\section*{Acknowledgement}

We would like to thank Ashley Montanaro for helpful discussions and suggesting the idea to prove the lower bound in Section \ref{section:ow bounds for the generalized eigenvalue problem}. 
We also would like to thank Dominic Verdon for useful comments on a previous version. 
CS was supported by the QuantERA ERA-NET Cofund in Quantum Technologies implemented within the European
Union's Horizon 2020 Programme (QuantAlgo project), and EPSRC grants EP/L021005/1 and EP/R043957/1.
JPL was supported by the National Science Foundation (CCF-1813814), the U.S. Department of Energy, Office of Science, Office of Advanced Scientific Computing Research, Quantum Algorithms Teams and Accelerated Research in Quantum Computing programs. 
No new data were created during this study.

\appendix

\section{Error analysis}
\label{appendix:Error analysis}

The analysis below is similar to that in quantum phase estimation.
Recall from (\ref{superposition solution}) that the state we have is proportional to
\bes
\frac{1}{\sqrt{p}}
\sum_{j=1}^n \beta_j \sum_{l=1}^{p}e^{2\pi i\lambda_j lh}|l\rangle |E_j\rangle.
\ees
Applying inverse of quantum Fourier transform to $\ket{l}$ yields
\be \label{app1:eq1}
\frac{1}{p}
\sum_{j=1}^n \beta_j \sum_{k=1}^{p} \left(\sum_{l=1}^{p}e^{2\pi i l(\lambda_j h- \frac{k}{p})} \right)
|k\rangle |E_j\rangle.
\ee
Assume that $k$ satisfies $|\lambda_j h- \frac{k}{p}|\leq 1/2p$, then 
\bes
\frac{1}{p} 
\left|\sum_{l=1}^{p}e^{2\pi i l(\lambda_j h- \frac{k}{p})} \right|
= \frac{1}{p} \frac{|e^{2\pi i (\lambda_j h- \frac{k}{p})p} - 1|}{|e^{2\pi i l(\lambda_j h- \frac{k}{p})} - 1|} 
= \frac{1}{p} \frac{|\sin{[(\lambda_j h- \frac{k}{p})p\pi]  }|}{|\sin{[(\lambda_j h- \frac{k}{p}) \pi ] }|} 
\geq \frac{2}{\pi}.
\ees
In the last step, we used the fact that $|\sin x| \geq 2x/\pi$ if $|x| \leq \pi/2$. Thus the 
state (\ref{app1:eq1}) can be rewritten as
\be \label{app1:eq2}
\sum_{j=1}^n \beta_j \gamma_j |k_j\rangle |E_j\rangle + {\rm others},
\ee
where $k_j$ is the integer such that $|\lambda_j h- \frac{k_j}{p}|\leq 1/2p$. It is shown above that $|\gamma_j| \geq 2/\pi$. Indeed, similar to the technique used in QPE \cite[Chapter 5.2]{NielsenChuang}, we can add more qubits to increase $|\gamma_j| \approx 1-\delta$ for any arbitrary small $\delta$. In this case, the state (\ref{app1:eq2}) is the one we aim to prepare.

From the above analysis, we see that $p$ is determined by the error to approximate $\lambda_j h$, so we should choose $\tau=ph=1/\epsilon$. Since $k/p \leq 1$, we need to choose $h$ such that $|\lambda_j h| \leq 1$, i.e., $h = 1/\rho$. Finally, we have $p = \rho/\epsilon$. This gives the choices  (\ref{important parameters}).

\section{Complement to the proof of Proposition \ref{prop:condition number}}
\label{appendix:Supplementary proof}

In this part, we show that 
the estimations in Proposition \ref{prop:condition number} are still true even if $A$ is singular.
In (\ref{solution of ODE:cond}), if $A$ is not invertible, we assume that $B^{-1}A = E \Lambda E^{-1}$, where $\Lambda = {\rm diag}(\Lambda_1,0)$. Then
\beas
\x(t+lh) 
&=& e^{2\pi i B^{-1}A (t+lh)} \b(0) 
+
2\pi i E e^{2\pi i \Lambda (t+lh)} 
\sum_{j=0}^{l-1}
\int_{jh}^{(j+1)h} 
e^{-2\pi i \Lambda s}
E^{-1}B^{-1} \b(j+1) ds \\
&& +\, 2\pi i E e^{2\pi i \Lambda (t+lh)}\int_{lh}^{t+lh}  
e^{-2\pi i \Lambda s} E^{-1}
B^{-1} \b(l+1) ds \\
&=& e^{2\pi i B^{-1}A (t+lh)} \b(0) 
-
\sum_{j=0}^{l-1} E  
\begin{bmatrix}
\Lambda_1^{-1}e^{2\pi i \Lambda_1 (t+(l-j)h)}(
e^{-2\pi i \Lambda_1 h} - I)  & \\
& (h) I 
\end{bmatrix}
E^{-1}B^{-1} \b(j+1)  \\
&& - \, E  
\begin{bmatrix}
\Lambda_1^{-1}e^{2\pi i \Lambda_1 t}(
e^{-2\pi i \Lambda_1  t} - I)  & \\
& t I 
\end{bmatrix}
E^{-1}B^{-1} \b(l+1).
\eeas
Note that $|t|\leq h$ and
\beas
\left\| E  
\begin{bmatrix}
\Lambda_1^{-1}e^{2\pi i \Lambda_1 (t+(l-j)h)}(
e^{-2\pi i \Lambda_1 h} - I)  & \\
&  hI 
\end{bmatrix}
E^{-1}  \right\| 
&\leq& \kappa_E \max\{h, \max_j \frac{|e^{-2\pi i \lambda_j  h}-1|}{|\lambda_j|} \} \\
&\leq& 2\pi  \kappa_E h.
\eeas
So we still have
$
\|\x(t+lh)\| = O(\sqrt{l} \kappa_E \|B^{-1}\| h).
$

% Note that $A$ is singular, so when we refer to $A^{-1}$,
% we mean the Moore-Penrose inverse, which now equals
% $E \begin{bmatrix}
% \Lambda_1^{-1} & \\
% & 0
% \end{bmatrix} E^{-1} B^{-1}$. Thus we have
% \beas
% \left\|
% E  
% \begin{bmatrix}
% \Lambda_1^{-1}e^{2\pi i \Lambda_1 (t+(l-j)h)}(
% e^{-2\pi i \Lambda_1 h} - I)  & \\
% &  hI 
% \end{bmatrix}
% E^{-1}B^{-1}  \right\|
% \leq 2\kappa_E \|A^{-1}\| \pi h.
% \eeas
% This implies that
% $\|\x(t+lh)\| = O(\sqrt{l} \kappa_E \|A^{-1}\|  h)$ still holds. In conclusion, whether $A$ or not, we always have \[
% \|\x(t+lh)\| = O(\sqrt{l} \kappa_E \min(\|A^{-1}\|,\|B^{-1}\|)  h).
% \]

%---------------------------------------
%---------------------------------------
%---------------------------------------

\bibliographystyle{plain}
\bibliography{nonlinear-eigen}

\end{document}